\title{Sequential Sweeps and High Dimensional Expansion}
\author{Vedat Levi Alev\thanks{University of Haifa ---
    \href{mailto:vedatalev@math.haifa.ac.il}{vedatalev@math.haifa.ac.il}}
    \and
    Ori Parzanchevski\thanks{Hebrew University of Jerusalem ---
\href{mailto:parzan@math.huji.ac.il}{parzan@math.huji.ac.il}}}
\date{\today}
\begin{document}

\begin{titlepage}
    \def\thepage{}
    \thispagestyle{empty}

    \maketitle
    \begin{abstract}
        It is well known that the spectral gap of the down-up walk
        over an $n$-partite simplicial complex (also known as Glauber dynamics)
        cannot be better than $O(1/n)$ due to natural obstructions
        such as coboundaries.
        We study an alternative random walk over partite simplicial
        complexes known
        as the sequential sweep or the systematic scan Glauber dynamics: Whereas
        the down-up walk at each step selects a random coordinate and updates it
        based on the remaining coordinates, the sequential sweep goes
        through each
        of the coordinates one by one in a deterministic order and
        applies the same
        update operation. It is natural, thus, to compare $n$-steps
        of the down-up
        walk with a single step of the sequential sweep.
        Interestingly, while the
        spectral gap of the $n$-th power of the down-up walk is still
        bounded from
        above by a constant, under a strong enough local spectral assumption
        (in the sense of Gur, Lifschitz, Liu, STOC 2022) we can show
        that the spectral gap
        of this walk can be arbitrarily close to 1.
        We also study other isoperimetric inequalities for these walks,
        and show that under the assumptions of local entropy contraction
        (related to the considerations of Gur, Lifschitz, Liu),
        these walks satisfy an entropy contraction inequality.
        Concretely, we generalize a result of Lubetzky, Lubotzky, and
        Parzanchevski (Journal of the EMS) about the rapid mixing of
        sequential sweep in Ramanujan complexes to suitable high
        dimensional expanders.
    \end{abstract}

\end{titlepage}
\newpage

\section{Introduction}
\subsection{The Sequential Sweep and the Down-Up Walk}
\footnote{Everything we define in the introductory part of our paper
will be formally defined in \cref{sec:prelim}} Let $\Omega$ be a
finite set of $n$-tuples, i.e.~$\Omega \subset U_1 \times U_2 \times
\cdots \times U_n$, where each $U_i$ is a finite set. Given a tuple $\omega \in
\Omega$, for each $i \in [n]$, the $i$-th update operation $\update_i(\omega)$
returns a uniformly random tuple from $\Omega$ that is conditioned to agree with
$\omega$ everywhere except for the $i$-th coordinate, i.e.~
\[ \Pr_{\omega' \sim \Omega}\sqbr*{ \update_i(\omega) = \omega'}  =
    \Pr_{\omega' \sim \Omega} \sqbr*{\omega' \mid
\omega_j = \omega'_j~~\textrm{for all}~~j \in [n] \setminus i}.\]
The main object of study in this paper will be the random walk called
\underline{sequential sweep}
$\Psq$ on the collection $\Omega$,
otherwise known as the \underline{systematic scan Glauber
dynamics}\footnote{More precisely, $n$ steps of the systematic scan
dynamics is the walk we call sequential sweep here.} on $\Omega$: A
single step of this dynamics moves between tuples by applying the $n$
update operations in sequential fashion,
i.e.~writing $\omega^{(t)}$ for the tuple visited by this dynamics at step $t$,
$\omega^{(t+1)}$ is the result of the following random update operations,
\[ \omega^{(t+1)} = (\update_n \circ \update_{n-1} \circ \cdots \circ
    \update_1)(\omega^{(t)}) = \update_n( \update_{n-1}( \cdots
(\update_1(\omega^{(t)})))).\]
It is easy to see that the stationary distribution for this random walk is the
uniform distribution on $\Omega$.

It is instructive to compare $\Psq$ with the
the \underline{down-up walk} $\Pgd$ on $\Omega$ also known as the
\underline{Glauber dynamics}, which moves between tuples by first sampling a
random coordinate $i \in [n]$ and then applying the operation
$\update_i(\bullet)$. In other words, $\Psq$ can be thought as
simulating $n$ successive steps of
$\Pgd$ in which we condition the coordinate that is updated at the $i$-th step
to be $i$ for all $i \in [n]$.
\subsection{The Boolean Hypercube and Our Motivation}
For the simple example of $\Omega = \set{0, 1}^n$, the down-up walk $\Pgd$
corresponds to the usual lazy random walk on the $n$-dimensional hypercube
$H_n$. It is well known that this random walk needs at least $0.5
n\log n - O(n)$ steps to
come within distance 0.01 of the uniform distribution (cf.~\cite[Proposition
7.14]{LevinPW09}), however the random walk $\Psq$ on the other hand can easily
be seen to mix in a single step which amounts to $n$ update operations. More
generally, it is known that there are natural obstructions preventing the
down-up walk $\Pgd$ from mixing very rapidly. Say, the spectral gap of $\Pgd$
always satisfies $\Gap(\Pgd) \le 2 \cdot n^{-1}$ so long as we have
$|U_i| > 1$ for all
$i$ (cf.~\cite[Proposition 3.3]{AlevL20}).  This implies that $\Gap(\Pgd^n)
\lesssim 1 - e^{-2}$, i.e.~the spectral gap of $n$-successive steps of $\Pgd$
is still bounded from above by a constant. Thus, a
natural question to ask is when $\Psq$ overcomes these
natural barriers and satisfies $\Gap(\Psq) = 1 - o(1)$.

Methods of high-dimensional expansion \cite{KaufmanM17, DinurK17, KaufmanO18,
Oppenheim18,DiksteinDFH18, AnariLOV18, AlevL20} and also the related
frameworks of spectral
independence and entropic independence \cite{AnariLO20, FGYZ20,
    ChenLV20, CGSV20,
    ChenLV21, GuoM21,
AnariJKP22a, AnariJKP22} have been immensely useful in the analysis of the
down-up walk $\Pgd$, leading to many break throughs in the field of Markov
Chains and Monte Carlo Algorithms, \cite{AnariLOV18, AnariLO20,
    ChenLV20, CGSV20,AnariLOVV21,
AlimohammadiASV21, AnariJKP22}. The random walk $\Psq$ is also well-studied by
the Markov chain community, however considerably less is known about it except
in specialized settings,
e.g.~\cite{DiaconisR00,Hayes06, DyerGJ06,DyerGJ08, RobertsR15, FengGW22}. Thus
it is natural to ask whether techniques of high-dimensional
expansion can be of use for the analysis of $\Psq$. We also note that the
analysis of $\Psq$ has an additional layer of complication, caused by this
random walk not being time-reversible\footnote{i.e.~if $\pi$ is the uniform
    distribution satisfying $\pi\Psq = \pi$, we do not have $\Psq =
    \Psq^*$, where
    $\Psq^*$ is the adjoint to $\Psq$ with respect to $\pi$ -- also known as the
time-reversal of $\Psq$.} -- thus a general stream-lined framework for
analyzing these random walks would be very helpful to advance our current
understanding of them.

We show that under a suitable assumption of high-dimensional
expansion (in the sense of \cite{GurLL22}), we can bound the mixing time of the
sequential sweep and show that we indeed have $\Gap(\Psq) = 1 - o(1)$ as the
quality of the high-dimensional expansion gets better and better. In
particular, we show that in Ramanujan complexes of dimension $d$ and
thickness $p$, we have $\Gap(\Psq) \ge 1 - (d-1)/p - O(d^2/p^2)$.

We note that there are several reasons for which $\Psq$ is
interesting to study besides the potentially faster mixing time.
\begin{enumerate}[i]
    \item Since $\Psq$ picks the coordinates to be replaced in
        sequential order, it is more \underline{randomness efficient} than
        the random walk $\Pgd$: A single step of $\Psq$ updates the
        same amount of coordinates as $n$ rounds of $\Pgd$, however
        to pick the replacement indices the latter walk requires $O(n
        \log n)$ extra bits.
    \item Since $\Psq$ goes through the coordinates in sequential
        order, it potentially does a better job at exploiting memory
        localities than the random walk $\Pgd$. This leads to
        algorithms which perform better in practice even when the
        number of updates is the same.
    \item A spectral gap bound for a random walk $\Pii$ means that
        the walk $\Pii$ contracts $\ell_2$-norm of functions
        perpendicular to the constant functions, i.e.~$\norm*{\Pii
        \eff}_{\ell_2} \ll \norm{\eff}_{\ell_2}$ for $\eff \perp
        \one$.\footnote{where $\one$ denotes the constant function}
        Usually, one considers a single step of $\Psq$ as $n$
        separate steps updating the coordinates in sequential order.
        The difficulty here is that the individual update operations
        \underline{might not} contract the $\ell_2$ norm at all. Our
        result concerning the spectral gap shows whereas an
        individual update might not contract the $\ell_2$ norm, when
        one bundles $n$ update operations they \underline{inevitably}
        will. We expect this to be an interesting observation on its
        own right and our techniques to find further applications in
        the study of random walks over high-dimensional expanders.
\end{enumerate}
\subsection{High Dimensional Expansion, Spectral Independence, and
$\ee$-products}
We recall the notion of local-spectral expansion in (partite)
simplicial complexes
introduced in \cite{KaufmanM17, DinurK17, KaufmanO18, Oppenheim18,
DiksteinDFH18, AlevL20}. A (finite) \underline{simplicial complex} of
rank $n$ is a
collection of subsets of size at most $n$ of a finite set $U$, i.e.~$X \subset
\binom{U}{\le n}$, which is closed under containment. $X$ is called
\underline{partite}, if we can partition $U$ into $n$ disjoint subsets $U_1,
\ldots, U_n$ such that any set from $X$ of size $n$ contains a unique element
from each $U_i$ -- the sets $U_i$ are called the sides of the simplicial
complex. In particular, each $\omega \in X$ with $|\omega| = n$ can be
treated as a tuple $(\omega_1, \ldots, \omega_n)$ where $\omega_i \in U_i$. We
will write $X^{(i)} = \set*{ \omega \in X : | \omega|  = i}$. We will also
write $\omega_S = (\omega_s)_{s \in S}$ for all $\omega \in X^{(n)}$.

The local structures in $X$, are
described by its \underline{links}. For a face $\alpha \in X$, we
describe the weighted graph
$G^{(\alpha)} = (V_\alpha, E_\alpha, w_\alpha)$ as follows. We have,
\[ V_\alpha = \set*{ u \in U : \alpha \cup u \in X}~~\textrm{and}~~E_\alpha =
\set*{uv \in \binom{U}{2} : \alpha \cup \set{u, v} \in X}.\]
Every edge $uv$ is assigned a weight based on the number of subsets in $X$ of
size $n$ that contain $\alpha \cup \set{u, v}$, i.e.~
\[w_\alpha(uv) = \Abs*{ \set*{\omega \in X : | \omega | = n, \omega
            \supset \alpha
\cup \set*{u, v} }}.\]
Writing $G^{(\alpha)}$ for the random walk matrix of $G^{(\alpha)}$ also, we
observe that $\lambda_2(G^{(\alpha)})$ measures an amortized notion of the
correlation between the
vertices in $V_\alpha$, i.e.~the impact of the inclusion of any $x \in
V_\alpha$ can make in the containment probability of $y \in \omega$ where
$\omega$ is drawn uniformly at random from all the elements in $X^{(n)}$ which
contain $\alpha$. Formally, we say that the simplicial
complex $X$ is a \underline{$\vec \gamma = (\gamma_0, \ldots,
    \gamma_{n-2})$-local
spectral expander} if for each $\alpha \in X$ with $|\alpha| \le n -2$, we have
$\lambda_2(G^{(\alpha)}) \le \gamma_{|\alpha|}$ We note that when $n = 2$,
this notion is equivalent to the usual notion of a spectral expander,
\cite{HooryLW06, LevinPW09}.

It turns out assuming $\vec\gamma$-expansion one can bound the spectral gap of
the Glauber dynamics $\Pgd$.
\begin{theorem}[\cite{AlevL20}]\label{thm:myn}
    If $X$ is a $\vec \gamma$-local spectral expander, then
    \[ \Gap(\Pgd) \ge \frac{1}{n} \cdot \prod_{i = 0}^{n-2} (1 - \gamma_{i}),\]
    where $\Gap(\Pgd) = 1 - \lambda_2(\Pgd)$. Further, if $|U_i| > 2$
    for all $i = 1,\ldots, n$, then
    $\Gap(\Pgd) \le \frac{2}{n}$.
\end{theorem}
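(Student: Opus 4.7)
The plan is to prove the bound by induction on the rank $n$, invoking the local-to-global machinery of \cite{KaufmanM17, KaufmanO18, DiksteinDFH18}. First, identify $\Pgd$ with the down-up walk $U_{n-1}D_n$, where $D_n : \ell^2(X^{(n)}) \to \ell^2(X^{(n-1)})$ drops a uniformly random vertex and $U_{n-1}$ lifts by a conditionally uniform completion. A standard $AB$-versus-$BA$ spectrum-matching argument reduces bounding $\Gap(\Pgd)$ to bounding $\lambda_2$ of $Q_n := D_n U_{n-1}$, the up-down walk on $X^{(n-1)}$.

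The heart of the argument is a Garland-type decomposition of $Q_n$: after stripping off the intrinsic $\tfrac{1}{n}$-laziness coming from the probability of removing the ``new'' vertex produced by $U_{n-1}$, the remaining non-lazy part $\widehat{Q}_n$ can be written as a convex combination, indexed by $v \in X^{(1)}$, of (suitably pushed-forward copies of) the up-down walks on the links $X_v$. Each $X_v$ is a simplicial complex of rank $n-1$ and, by the hypothesis, a $(\gamma_1, \ldots, \gamma_{n-2})$-local spectral expander.

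One then inducts on $n$. The base case $n = 2$ is direct: $Q_2 = \tfrac{1}{2} I + \tfrac{1}{2} G^{(\emptyset)}$ as matrices on $X^{(1)}$, so $\Gap(Q_2) = \tfrac{1}{2}(1-\gamma_0)$, and spectrum-matching transfers this to $\Pgd$. For the inductive step, apply the theorem inside each link $X_v$ to bound the Glauber gap there by $\tfrac{1}{n-1}\prod_{i=1}^{n-2}(1-\gamma_i)$, translate back to an upper bound on $\lambda_2$ of the link's up-down walk (another $AB$/$BA$ invocation), combine these bounds across $v$ using the convex-combination decomposition to control $\widehat{Q}_n$, and finally invoke the spectral expansion $\lambda_2(G^{(\emptyset)}) \le \gamma_0$ of the base graph to extract the remaining $(1-\gamma_0)$ factor.

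The main obstacle is that the link decomposition only directly constrains the spectrum of $\widehat{Q}_n$ on the image of $U_{n-1}$; one must separately argue that the orthogonal complement (essentially the kernel of $D_n$) contributes only trivial eigenvalues. A secondary, but delicate, point is the bookkeeping required to see that the $\tfrac{1}{k}$-laziness factors appearing at every level $k = 2, \ldots, n$ telescope to produce exactly the single $\tfrac{1}{n}$ prefactor in the statement, rather than an unwanted $1/n!$-type blowup; this is where the choice to combine the link bounds \emph{before} peeling off the laziness (instead of after) is essential.
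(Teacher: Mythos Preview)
The paper does not prove this theorem; it is quoted as a known result from \cite{AlevL20} (restated in the preliminaries as \cite[Theorem~3.1]{AlevL20}) and used only as background, so there is no in-paper proof to compare against.

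That said, your outline is a faithful sketch of the original Alev--Lau argument: the $AB$/$BA$ spectrum transfer between the down-up and up-down walks, the Garland-type decomposition of the non-lazy up-down walk as a $\pi_1$-weighted average of the up-down walks on the links $X_v$, and induction on rank are exactly the ingredients used there. Your flagged ``secondary, delicate'' point about the telescoping of the $1/k$ laziness factors into a single $1/n$ is precisely the place where the inductive bookkeeping matters, and your instinct to combine the link bounds at the level of the non-lazy walk is correct. The ``main obstacle'' you describe is slightly misstated: the issue is not that the kernel of $D_n$ needs separate treatment (on that subspace $\Pgd$ acts as zero, so it is harmless), but rather that the convex-combination bound on $\widehat Q_n$ a priori only controls $\langle \eff, \widehat Q_n \eff\rangle$ for $\eff$ orthogonal to constants \emph{globally}, whereas the link bounds require orthogonality to constants \emph{locally} in each link; bridging this gap is where the $\gamma_0$ factor from $\lambda_2(G^{(\varnothing)})$ enters.
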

Precursors of this result occured in \cite{KaufmanM17, DinurK17,
KaufmanO18, DiksteinDFH18},

\cref{thm:myn} has found numerous applications in the field of sampling
algorithms through the spectral independence framework,
\cite{AnariLO20, FGYZ20, CGSV20}. We refer the reader to the excellent lecture
notes \cite{StefankovicV23} and the references therein for more information on
this framework.

For our purposes an alternative but related notion of high-dimensional
expansion introduced in \cite{GurLL22, DiksteinD19} will be more
useful to consider.
Whereas, we have said that $\lambda_2(G^{(\alpha)})$ measures an amortized
notion of the correlation between vertices in $V_\alpha$, we will be more
interested in measuring an amortized notion of the correlation between the
vertices in $V_\alpha$ which come from pre-specified parts $U_i$ and $U_j$.
Roughly speaking, given a "tuple" from $X^{(n)}$, we are interested in
measuring the amount of influence that the $i$-th coordinate can exert on the
$j$-th coordinate given that coordinates on the
set $S \subset [n] \setminus \set{i,j}$ agree with some $(x_s)_{s \in
S}$. Formally, we
are interested in the second eigenvalue of the following bipartite  graph
$G_{i, j}^{(\alpha)} = (V_{ij}^{(\alpha)}, E_{ij}^{(\alpha)},
w_{ij}^{(\alpha)})$ for each $\alpha \in X$ such that $|\alpha \cap
U_i| = |\alpha \cap U_j|
= 0$. In words, $\alpha$ can be thought as an assignment to a subset of
coordinates $S \subset [n]$, such that $\set{i,j} \not\in S$. Here,
\[ V_{ij}^{(\alpha)} = \set*{ u \in U_i \cup U_j : \alpha \cup u \in
    X}~~\textrm{and}~~E_{ij}^{(\alpha)} =
\set*{uv  : u \in U_i, v \in U_j, \alpha \cup \set{u, v} \in X}.\]
Similarly, we assign each edge $uv$ a weight between the number of subsets in
$X$ of size $n$ containing $\alpha \cup \set{u, v}$, i.e.~
\[w_{ij}^{(\alpha)}(uv) = \Abs*{ \set*{\omega \in X^{(n)} : | \omega | = n,
\omega_S = \alpha, \omega_i = u, \omega_j = v  }},\]
where we have assumed that $\alpha$ to be a tuple with coordinates in $S \subset
[n]$, $u \in U_i$, and $v \in V_j$. We will call the simplicial complex
\underline{$\vec\ee = (\ee_0, \ldots, \ee_{n-2})$-product} if for all
$\alpha$ and $i,
j$ we have $\lambda_2(G^{(\alpha)}_{i, j}) \le \ee_{|\alpha|}$. We observe,
that the quantity $\lambda_2(G^{(\alpha)}_{i, j}) \le \ee_{|\alpha|}$
quantifies the correlation between the $i$-th and $j$-th coordinate on an
$n$-tuple drawn from $X$ conditioned on agreeing with $\alpha$. We say $X$ is
completely $\ee$-product if it is $\vec \ee = (\ee_i)_{i= 0}^{n-2}$-product
where $\ee_i \le \ee$ for all $i = 0, \ldots, n-2$.

In \cite{GurLL22} this definition was used in proving hypercontractive
estimates for the noise operator over simplicial complexes and in
\cite{DiksteinD19}, it was used in the context of agreement testing. We
recall, that every strong enough local partite expander is also $\vec
\ee$-product \cite[Corollary 7.6]{DiksteinD19}
\begin{theorem}\label{thm:yot}
    Let $X$ be an $n$-partite simplicial complex. If $X$ is a $\vec
    \gamma$-local spectral expander with $\gamma_{n-2} \le \frac{\ee}{(n-2) \ee
    + 1}$, then $X$ is also completely $\ee$-product.
\end{theorem}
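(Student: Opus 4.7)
The plan is downward induction on $k = |\alpha|$, establishing the stronger quantitative bound
\[ \lambda_2\bigl(G^{(\alpha)}_{i,j}\bigr) \;\le\; \frac{\gamma_{n-2}}{1 - (n - k - 2)\gamma_{n-2}} \]
for every $\alpha$ with $|\alpha| = k$ and every pair of unused sides $i \neq j$. Specialising to $k = 0$ then yields $\lambda_2(G^{(\alpha)}_{i,j}) \le \ee$ under the hypothesis $\gamma_{n-2} \le \ee/((n-2)\ee+1)$, and the intermediate values $k > 0$ deliver the finer $\vec\ee$-product constants $\ee_k$. The base case $k = n-2$ is immediate: the link $X_\alpha$ has only two unused sides, so $G^{(\alpha)}_{i,j}$ coincides with the full 1-skeleton $G^{(\alpha)}$, for which the local spectral assumption gives $\lambda_2 \le \gamma_{n-2}$.

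For the inductive step, fix $\alpha$ with $|\alpha| = k < n-2$ and choose any auxiliary coordinate $\ell \notin \{i, j\}$ with $\alpha \cap U_\ell = \emptyset$. Given mean-zero test functions $f$ on $U_i$, $g$ on $U_j$ (with respect to the link marginals), conditioning on $U_\ell$ and applying the law of total covariance yields
\[ \langle f, G^{(\alpha)}_{i,j} g \rangle \;=\; \mathbb E_{x \sim \mu_\ell}\bigl[\langle \tilde f_x, G^{(\alpha \cup \{x\})}_{i,j} \tilde g_x \rangle\bigr] \;+\; \langle \varphi, \psi \rangle_{\mu_\ell}, \]
where $\varphi(x) = \mathbb E[f \mid U_\ell = x]$, $\psi(x) = \mathbb E[g \mid U_\ell = x]$, $\tilde f_x = f - \varphi(x)$, $\tilde g_x = g - \psi(x)$. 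The first summand is handled by the inductive hypothesis in each link $X_{\alpha \cup \{x\}}$ (of rank $n - k - 1$): after Cauchy--Schwarz in $x$ and the variance identity $\mathbb E_x[\|\tilde f_x\|^2] = \|f\|^2 - \|\varphi\|^2$, it is bounded by $B_{k+1}\sqrt{(1-a^2)(1-c^2)}\,\|f\|\,\|g\|$, where $B_{k+1} = \gamma_{n-2}/(1 - (n - k - 3)\gamma_{n-2})$ and $a = \|\varphi\|/\|f\|$, $c = \|\psi\|/\|g\|$. The second summand is at most $ac\,\|f\|\,\|g\|$ by Cauchy--Schwarz, and since $a \le \lambda_2(G^{(\alpha)}_{i,\ell})$ and $c \le \lambda_2(G^{(\alpha)}_{j,\ell})$ also sit at level $k$, the resulting inequality on
\[ \sigma_k^\star \;:=\; \sup_{|\alpha'|=k,\;i'\neq j'}\; \lambda_2\bigl(G^{(\alpha')}_{i',j'}\bigr) \]
is self-referential.

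Maximising the joint bound $B_{k+1}\sqrt{(1-a^2)(1-c^2)} + ac$ over $a, c \in [0,\sigma_k^\star]$ --- the maximum lies at the corner $a = c = \sigma_k^\star$, since no interior critical point satisfies the constraints when $B_{k+1}<1$ --- gives a quadratic inequality that reduces to $\sigma_k^\star \le B_{k+1}/(1 - B_{k+1}) = B_k$, closing the induction. The main technical hurdle is precisely this optimisation: a crude bound on the cross term $\langle \varphi, \psi\rangle$ (for instance, by $\|f\|\|g\|$) loses a factor proportional to the number of remaining sides and is too weak to match the hypothesis. The saving comes from the structural trade-off embedded in the variance identity --- large $\|\varphi\|,\|\psi\|$ forces $\|\tilde f_x\|,\|\tilde g_x\|$ to be small --- and verifying that this trade-off delivers exactly the denominator $1 - (n-k-2)\gamma_{n-2}$ is the crux of the argument.
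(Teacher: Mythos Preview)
The paper does not give its own proof of this statement; both here and in its restatement as \cref{thm:yod} it is quoted from \cite[Corollary 7.6]{DiksteinD19}. So there is no in-paper argument to compare against directly. That said, your proposal \emph{is} essentially the Dikstein--Dinur proof: a colored (bipartite) analogue of Oppenheim's trickle-down, in which one localizes the bipartite walk $G^{(\alpha)}_{i,j}$ at an auxiliary side $\ell$, applies the inductive bound inside each link $X_{\alpha\cup\{x\}}$, and controls the residual mean term via the level-$k$ bipartite walks $G^{(\alpha)}_{i,\ell}$ and $G^{(\alpha)}_{j,\ell}$ themselves. The recursion $B_{k}=B_{k+1}/(1-B_{k+1})$ you derive is exactly right and, with $\gamma_{n-2}=\ee/((n-2)\ee+1)$, unwinds to $B_k=\ee/(k\ee+1)\le\ee$, delivering complete $\ee$-productness.

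One point deserves a sentence of justification that you glossed over. The self-referential inequality $\sigma_k^\star\le B_{k+1}(1-(\sigma_k^\star)^2)+(\sigma_k^\star)^2$ only yields $\sigma_k^\star\le B_{k+1}/(1-B_{k+1})$ after dividing by $1-\sigma_k^\star$, which is vacuous if $\sigma_k^\star=1$. You need to rule this out a priori. In the present paper's framework this is supplied by \cref{prop:connected}: once all link graphs $G_\alpha$ are connected (which is part of what ``$\vec\gamma$-local spectral expander'' means, namely $\gamma_i<1$), each bipartite graph $G^{(\alpha)}_{i,j}$ is connected and hence has $\sigma_2<1$. Alternatively one can observe directly that, for $B_{k+1}<1$, equality in $B_{k+1}\sqrt{(1-a^2)(1-c^2)}+ac\ge 1$ forces $a=c=1$, and then argue that this cannot hold simultaneously for every choice of auxiliary side $\ell$; but invoking connectedness is cleaner and is how the cited source proceeds.
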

And also, make the crude observation,
\begin{obsv}\footnote{We will prove this observation formally.}
    Let $X$ be an $n$-partite simplicial complex. If $X$ is a $\vec
    \gamma$-local spectral expander with $\gamma_{i} \le
    \frac{\ee_i}{n-i-1}$, then $X$ is also $(\ee_0, \ldots, \ee_{n-2})$-product.
\end{obsv}

More generally, we will consider the following bipartite graphs $G_{I,
J}^{(\alpha)} = (V_{I, J}^{(\alpha)}, E_{I, J}^{(\alpha)}, w_{I,
J}^{(\alpha)})$ given $\alpha
\in X$, and disjoint sets $I$ and $J$ such that $|\alpha \cap U_k| = 0$ for all
$k \in I \cup J$. Similarly, we are thinking of $\alpha$ as a partial
assignment to a subset of the coordinates of an $n$-tuple which do not assign
anything to the coordinates in $I$ or $J$. Here,
\[ V_{I, J} = \set*{ \omega_I \in \prod_{i \in I} U_i : \alpha \cup \omega_I
    \in X} \cup
    \set*{ \omega_J \in \prod_{j \in J} U_j : \alpha \cup \omega_J
\in X},\]
and
\[ E_{I, J} = \set*{ \omega_I \omega_J : \omega_I \in \prod_{i \in I},
        \omega_J \in \prod_{j \in I}, \alpha \cup \omega_I \cup \omega_J
\in X},\]
and assign a weight on each edge $\omega_I\omega_J \in E_{I, J}$
based on the number of subsets in $X$ of
size $n$ that contain $\alpha \cup \set{u, v}$, i.e.~
\[w_{I, J}^{(\alpha)}(\omega_I\omega_J) = \Abs*{ \set*{\bar\omega \in X^{(n)}
            :\bar\omega_S = \alpha, \bar\omega_I = \omega_I, \bar\omega_S =
\omega_S}}.\]
where we have assumed $\alpha$ to be a partial assignment to the coordinates in
$S$. Similarly to before, $\lambda_2(G_{I, J}^{(\alpha)})$ is a
measure of correlation
between the coordinates on $I$ and the coordinates in $J$.

In \cite{DiksteinD19, GurLL22}, the following bound is shown,
\begin{theorem}\label{thm:showw}
    If $X$ is an $n$-partite simplicial complex with parts
    $U_1,\ldots, U_n$ which is $\vec \ee$-product
    (where $\ee_i \le \ee$ for all $i = 0, \ldots, n-2$), then for
    all $\alpha$ satisfying $|\alpha \cap U_k| = 0$ for $k \in I \cup
    J$, we have
    \[ \lambda_2(G_{I, J}^{(\alpha)})^2 \le |I||J| \cdot \ee^2.\]
\end{theorem}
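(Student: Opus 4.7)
The plan is to prove by induction on $|I|+|J|$ that $\lambda_2(G_{I,J}^{(\alpha)}) \le \sqrt{|I||J|}\cdot\ee$, which upon squaring gives the stated bound. I will use the standard identification of $\lambda_2(G_{I,J}^{(\alpha)})$ with the maximum correlation between $\omega_I$ and $\omega_J$ under the distribution $\mu_\alpha$ on $X^{(n)}$ conditioned on $\omega_S=\alpha$, namely
\[ \lambda_2(G_{I,J}^{(\alpha)}) = \sup \frac{\bigl|\mathbb{E}_{\mu_\alpha}[fg]\bigr|}{\|f\|_2\|g\|_2}, \]
where the supremum runs over real functions $f$ of $\omega_I$ and $g$ of $\omega_J$ with $\mathbb{E}_{\mu_\alpha}[f]=\mathbb{E}_{\mu_\alpha}[g]=0$. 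The base case $|I|=|J|=1$ is then exactly the $\vec\ee$-product hypothesis at level $|\alpha|$, using $\ee_{|\alpha|}\le\ee$.

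For the inductive step I may assume $|I|\ge 2$ without loss of generality. I fix some $i^\ast\in I$ and decompose $f = f_{i^\ast}+f^-$, where $f_{i^\ast}(\omega_{i^\ast}) := \mathbb{E}_{\mu_\alpha}[f\mid\omega_{i^\ast}]$ is a mean-zero function of the single coordinate $\omega_{i^\ast}$ and $f^- := f - f_{i^\ast}$ is mean-zero conditional on $\omega_{i^\ast}$. The decomposition is orthogonal, so $\|f\|_2^2 = \|f_{i^\ast}\|_2^2 + \|f^-\|_2^2$. The term $\mathbb{E}_{\mu_\alpha}[f_{i^\ast} g]$ is immediately controlled by the inductive hypothesis applied to the pair $(\{i^\ast\}, J)$, which yields $|\mathbb{E}_{\mu_\alpha}[f_{i^\ast}g]|\le\sqrt{|J|}\cdot\ee\cdot\|f_{i^\ast}\|_2\|g\|_2$.

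The core difficulty arises in the cross-term $\mathbb{E}_{\mu_\alpha}[f^-g]$: conditioning on $\omega_{i^\ast}$ (so that the inductive hypothesis can be applied in the link of $\alpha\cup\{\omega_{i^\ast}\}$) may break the mean-zero-ness of $g$. To handle this I will further split $g = \bar g(\omega_{i^\ast}) + \tilde g$ with $\bar g(\omega_{i^\ast}) := \mathbb{E}_{\mu_\alpha}[g\mid\omega_{i^\ast}]$. The contribution of $\bar g$ vanishes against $f^-$ since $\mathbb{E}[f^-\mid\omega_{i^\ast}]=0$, and $\tilde g$ is mean-zero in the link of $\alpha\cup\{\omega_{i^\ast}\}$. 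That link is partite and inherits $\vec\ee$-productness with the same parameter $\ee$ (as $\ee_{|\alpha|+1}\le\ee$), so the inductive hypothesis on $(I\setminus\{i^\ast\}, J)$ gives, pointwise in $\omega_{i^\ast}$,
\[ \bigl|\mathbb{E}[f^-\tilde g\mid\omega_{i^\ast}]\bigr| \le \sqrt{(|I|-1)|J|}\cdot\ee\cdot\|f^-\|_{L^2(\omega_{i^\ast})}\|\tilde g\|_{L^2(\omega_{i^\ast})}. \]
Taking expectation in $\omega_{i^\ast}$, applying Cauchy-Schwarz, and using $\|\tilde g\|_2\le\|g\|_2$ yields $|\mathbb{E}_{\mu_\alpha}[f^-g]|\le\sqrt{(|I|-1)|J|}\cdot\ee\cdot\|f^-\|_2\|g\|_2$.

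Combining both contributions and applying Cauchy-Schwarz one last time to the scalar inequality $\|f_{i^\ast}\|_2 + \sqrt{|I|-1}\,\|f^-\|_2\le\sqrt{|I|}\cdot\|f\|_2$ gives exactly $|\mathbb{E}_{\mu_\alpha}[fg]|\le\sqrt{|I||J|}\cdot\ee\cdot\|f\|_2\|g\|_2$, closing the induction. The main obstacle, flagged above, is the orthogonal splitting of $g$ that restores mean-zero-ness after the extra conditioning on $\omega_{i^\ast}$; without it, the $\vec\ee$-product condition --- which is a statement about pairs of mean-zero functions --- could not be invoked in the link, and the argument would collapse.
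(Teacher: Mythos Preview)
Your argument is correct. The induction on $|I|+|J|$, the orthogonal splitting $f=f_{i^\ast}+f^-$, the observation that the $\bar g$ piece drops out against $f^-$, and the final Cauchy--Schwarz step $\|f_{i^\ast}\|_2+\sqrt{|I|-1}\,\|f^-\|_2\le\sqrt{|I|}\,\|f\|_2$ all go through cleanly.

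A comparison is still worth making. The paper does not prove this statement itself (it is attributed to \cite{DiksteinD19,GurLL22}); what the paper proves is the sharper \cref{thm:cwadv}, and that proof shares your inductive skeleton of peeling off one coordinate $i\in I$ and localizing to links, but combines the pieces differently. Where you split $f$ additively and bound $\mathbb{E}[fg]$ by a sum of two terms, the paper works with $\|\Cee_\varnothing^{I\to J}\Ff\|^2$ and obtains the \emph{multiplicative} recursion
\[
\sigma_2(\Cee_\varnothing^{I\to J})^2 \;\le\; 1 - \bigl(1-\sigma_2(\Cee_{x^\star}^{(I\setminus i)\to J})^2\bigr)\bigl(1-\sigma_2(\Cee_\varnothing^{i\to J})^2\bigr),
\]
which unrolls to the product bound $1-\prod_{p,q}(1-\ee_{|\alpha|+p+q}^2)$. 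Your additive route, after the final Cauchy--Schwarz, lands exactly on $|I||J|\ee^2$; the paper's multiplicative route never exceeds $|I||J|\ee^2$ (by $1-\prod(1-x_k)\le\sum x_k$) and is strictly better when the $\ee_k$ are not all equal or when $|I||J|\ee^2$ is not small. So the two arguments are close cousins, but the paper's bookkeeping buys the tighter product form at no real extra cost.
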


\subsection{Results and Techniques}
Let $X$ be an $n$-partite simplicial complex. We write,
\[ \ee^{I \to J} := \ee^{I \to J}_X = \max\set*{ \lambda_2(G_{I, J}^{(\alpha)})
        : \alpha~~\textrm{is an assignment to the coordinates in
        }~~[n]\setminus (I
\cup J)}.\]
In particular, $\ee^{I \to J}$ quantifies the maximum correlation between the
coordinates in $I$ and $J$ from an $n$-tuple drawn from $X$ given any
assignment $\alpha$ to the coordinates in $(I \cup J)^c$.

We have,
\begin{theorem}[Informal Version of \cref{thm:csv}]\label{thm:showa}
    Let $X$ be an $n$-partite simplicial complex. We have,
    \[ \sigma_2(\Psq)^2 \le 1 - \prod_{j = 2}^n (1- (\ee^{[j-1] \to j})^2),\]
    where $\sigma_2(\bullet)$ denotes the second largest singular value of
    $\bullet$.

    In particular,
    \[ \Gap(\Psq) \ge 1 - \sqrt{1 - \prod_{j = 2}^n (1- (\ee^{[j-1] \to
    j})^2)},\]
    where we have written $\Gap(\bullet) = 1 - \sigma_2(\bullet)$.
\end{theorem}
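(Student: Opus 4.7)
I identify each update $\update_j$ with the orthogonal projection $U_j := \mathbb{E}[\,\cdot \mid \omega_{[n]\setminus\{j\}}]$ on $L^2(X^{(n)})$ equipped with the uniform measure; then $\Psq = U_1 U_2 \cdots U_n$ as an operator on functions (so $U_n$ acts first on an input $f$) and $\sigma_2(\Psq)^2 = \sup\{\|\Psq f\|^2 : f \in L^2_0,\, \|f\| = 1\}$. Fix such an $f$ and, for $1 \le k \le n+1$, set
\[
g_k := U_k U_{k+1} \cdots U_n f, \qquad H_k := \mathbb{E}[g_k \mid \omega_{\{k+1,\ldots,n\}}],
\]
so that $g_{n+1} = f$, $g_1 = \Psq f$, and each $g_k$ is independent of $\omega_k$. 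Two easy endpoint facts: $H_n = \mathbb{E}[f] = 0$ since $f$ is mean-zero, and $H_1 = g_1 = \Psq f$ because $g_1$ already depends only on $\omega_{\{2,\ldots,n\}}$. Moreover, the tower property (using $\sigma(\omega_{\{k+1,\ldots,n\}}) \subseteq \sigma(\omega_{\neq k})$) yields the identity $H_k = \mathbb{E}[g_{k+1} \mid \omega_{\{k+1,\ldots,n\}}]$ for all $1 \le k \le n-1$.

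The crux of the argument is the one-step bound
\[
\|H_k\|^2 \;\le\; (\ee^{[k]\to k+1})^2\,\|g_{k+1}\|^2 \;+\; \bigl(1 - (\ee^{[k]\to k+1})^2\bigr)\,\|H_{k+1}\|^2 \qquad (1 \le k \le n-1).
\]
To prove it, split $g_{k+1} = H_{k+1} + (g_{k+1} - H_{k+1})$; by definition the remainder satisfies $\mathbb{E}[\,\cdot \mid \omega_{\{k+2,\ldots,n\}}] = 0$. Applying $\mathbb{E}[\,\cdot \mid \omega_{\{k+1,\ldots,n\}}]$ and noting that $H_{k+1}$ already depends only on $\omega_{\{k+2,\ldots,n\}}$ gives the orthogonal decomposition $H_k = H_{k+1} + \mathbb{E}[(g_{k+1} - H_{k+1}) \mid \omega_{\{k+1,\ldots,n\}}]$. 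For each assignment $\alpha$ to $\omega_{\{k+2,\ldots,n\}}$, the restriction $(g_{k+1} - H_{k+1})^{(\alpha)}$ is a mean-zero function of $\omega_{[k]}$ in the link of $\alpha$; its conditional expectation given $\omega_{k+1}$ in this link is exactly the bipartite transition operator associated with $G^{(\alpha)}_{[k],\{k+1\}}$, whose squared operator norm on mean-zero functions is at most $\lambda_2(G^{(\alpha)}_{[k],\{k+1\}})^2 \le (\ee^{[k]\to k+1})^2$. Integrating squared norms against the marginal of $\alpha$ and combining with the Pythagorean identity $\|g_{k+1}\|^2 = \|H_{k+1}\|^2 + \|g_{k+1} - H_{k+1}\|^2$ completes the step.

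To conclude, note that $\|g_{k+1}\|^2 \le \|f\|^2$ because each $U_j$ is a contraction. Writing $\eta_j := (\ee^{[j-1]\to j})^2$ and iterating the recursion downward from $k = n-1$ to $k = 1$, a short induction gives
\[
\|H_k\|^2 \;\le\; \Bigl(1 - \prod_{j=k+1}^{n}(1 - \eta_j)\Bigr)\|f\|^2,
\]
so at $k = 1$ we obtain $\|\Psq f\|^2 \le \bigl(1 - \prod_{j=2}^{n}(1 - \eta_j)\bigr)\|f\|^2$; taking the supremum over unit mean-zero $f$ yields the stated bound on $\sigma_2(\Psq)^2$, whence the lower bound on $\Gap(\Psq) = 1 - \sigma_2(\Psq)$ is immediate. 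The main subtlety is the choice of the conditional-expectation ladder $\{H_k\}$: conditioning on the ``upper tail'' $\omega_{\{k+1,\ldots,n\}}$ is what makes the tower property collapse $H_k$ into an expression involving only $g_{k+1}$, while simultaneously forcing the remainder $g_{k+1} - H_{k+1}$ to be mean-zero in exactly the conditional measure needed to invoke the many-to-one spectral bound $\ee^{[k]\to k+1}$ (rather than a product of weaker single-coordinate correlations).
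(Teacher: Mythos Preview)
Your proof is correct and takes a genuinely different route from the paper's. The paper proceeds geometrically: it invokes the Smith--Solmon--Wagner theorem (\cref{thm:prprod}) as a black box for products of projections, then separately establishes that $\bigcap_{i\le j}\Uu_i=\Uu_{[j]}$ (\cref{prop:ints}, requiring link-connectedness) and that $\cos(\Uu_j,\Uu_{[j-1]})\le\ee^{[j-1]\to j}$ (\cref{thm:angl}). Your argument instead builds the recursion directly via the conditional-expectation ladder $\{H_k\}$, effectively reproving the relevant instance of Smith--Solmon--Wagner while simultaneously identifying the angle with the colored-walk singular value. This is essentially the ``$\chi^2$-divergence'' alternative the paper alludes to in its remark after \cref{t:eca}, carried out explicitly on the function side rather than by adapting the entropy-contraction induction of \cref{thm:ecc}. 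Two concrete advantages of your approach: it is fully self-contained (no external projection theorem), and it sidesteps \cref{prop:ints} entirely---you never need $\bigcap_{i\le k}\Uu_i$ to equal $\Uu_{[k]}$, only that $g_1\in\Uu_1=\Uu_{[1]}$, so the link-connectedness hypothesis implicit in the paper's proof is not needed for the bare singular-value bound. The paper's route, on the other hand, makes the underlying geometry more transparent and connects the result to the classical literature on alternating projections.
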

We also give an improved version of \cref{thm:showw},
\begin{theorem}[Simplified Version of \cref{thm:cwadv}]\label{thm:trickleshow}
    Let $(X, \pi)$ be an $n$-partite simplicial complex where $\pi$ is $\vec
    \ee$-product. For any assignment $\alpha$ to $S \subset [n]$ such that $S
    \cap [j] = \varnothing$, we have,
    \begin{equation}\sigma_2(G^{(\alpha)}_{\set j, [j-1]})^2 \le
        1 - \prod_{q = 0}^{j- 2}(1-
        \ee_{|\alpha|+ q}^2)\label{eq:cw}
    \end{equation}
    In particular, we have
    \[ (\ee^{j \to [j-1]})^2 \le 1 - \prod_{q = 0}^{j -
    1}(1 - \ee_{n - j + q}^2)\]
\end{theorem}
In conjunction with \cref{thm:yot} and \cref{thm:showa}, this implies
\begin{corollary}\label{cor:cora}
    If $X$ is completely $\ee$-product, then
    \[ \sigma(\Psq) \le \frac{n \cdot \ee}{\sqrt 2}~~\textrm{and}~~\Gap(\Psq)
    \ge 1 - \frac{n\cdot\ee}{\sqrt 2}.\]
    If $X$ has connected links and is a $\vec \gamma$-local spectral expander
    with $\gamma_{n-2} \le \frac{\ee}{2n}$, then
    \[ \sigma(\Psq) \le \frac{\ee}{\sqrt 2}~~\textrm{and}~~\Gap(\Psq) \ge 1 -
    \frac{\ee}{\sqrt 2}.\]
\end{corollary}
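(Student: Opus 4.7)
The plan is to chain the informal version of Theorem \ref{thm:cwadv} (which controls correlations between the $j$-th coordinate and the first $j-1$ coordinates) with Theorem \ref{thm:showa} (which estimates $\sigma_2(\Psq)$ via the quantities $\ee^{[j-1]\to j}$), and, for the second half, to feed the local spectral hypothesis through Theorem \ref{thm:yot} to obtain a complete $\ee$-product bound.

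First I record the elementary symmetry $\ee^{[j-1]\to j} = \ee^{j\to [j-1]}$ for each $j$: the bipartite graphs $G_{I,J}^{(\alpha)}$ and $G_{J,I}^{(\alpha)}$ are identical as weighted graphs, so their spectra (and hence the relevant second singular values) agree. Under the complete $\ee$-product hypothesis of part~(1), applying Theorem \ref{thm:cwadv} with every $\ee_i$ replaced by $\ee$ yields $(\ee^{[j-1]\to j})^2 \le 1 - (1-\ee^2)^{j-1}$ for all $2 \le j \le n$. Substituting into Theorem \ref{thm:showa} and using $\sum_{j=2}^n (j-1) = \binom{n}{2}$ gives
\[
\sigma_2(\Psq)^2 \;\le\; 1 - (1-\ee^2)^{n(n-1)/2}.
\]
Bernoulli's inequality $(1-\ee^2)^k \ge 1 - k\ee^2$ then bounds the right-hand side by $n(n-1)\ee^2/2 \le n^2\ee^2/2$, producing $\sigma(\Psq) \le n\ee/\sqrt{2}$ and the advertised spectral gap bound.

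For part~(2), I set $\ee' := \ee/n$ and verify that the hypothesis $\gamma_{n-2} \le \ee/(2n)$ suffices to invoke Theorem \ref{thm:yot} with parameter $\ee'$: one needs $\gamma_{n-2} \le \ee'/((n-2)\ee' + 1) = \ee/((n-2)\ee + n)$, and since $\ee \le 1$ yields $(n-2)\ee + n \le 2n$, the required inequality is automatic. The connected-links assumption enters here, because Theorem \ref{thm:yot} presupposes $\vec\gamma$-local spectral expansion at every level, and connectedness of links lets an Oppenheim-style trickle-down argument propagate the top-level bound $\gamma_{n-2}$ down to all lower links. Once $X$ is known to be completely $(\ee/n)$-product, part~(1) applied with $\ee \mapsto \ee/n$ delivers $\sigma(\Psq) \le n \cdot (\ee/n)/\sqrt{2} = \ee/\sqrt{2}$.

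The computation is otherwise routine; the only care needed is to identify the direction of the bipartite correlation used in Theorem \ref{thm:cwadv} with that appearing in Theorem \ref{thm:showa} via the symmetry above, and to match the slack in the $\gamma_{n-2}$ hypothesis with the correct choice of $\ee'$. I do not anticipate any genuine obstacle beyond bookkeeping.
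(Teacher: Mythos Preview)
Your approach matches the paper's: it deduces the corollary by chaining \cref{thm:cwadv} into \cref{thm:showa}, and for the second part first passes through \cref{thm:yot} with the rescaled parameter $\ee' = \ee/n$. The Bernoulli estimate and the arithmetic check $(n-2)\ee' + 1 \le 2$ are exactly the bookkeeping the paper leaves implicit.

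One minor correction to your commentary: the connected-links hypothesis is not what makes \cref{thm:yot} go through---that theorem already handles the trickle-down from the single assumption on $\gamma_{n-2}$. Rather, link-connectedness is needed for \cref{prop:ints} (the identification $\bigcap_{t\in T}\Uu_t = \Uu_T$), which underlies \cref{thm:showa}/\cref{thm:csv}. This does not affect the validity of your argument, only the attribution of where the hypothesis is consumed.
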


This immediately implies that when $\ee \to 0$ we have
$\Gap(\Psq) \to 1$, i.e.~as the $\ee$ parameters get arbitrarily close to 0,
$\Gap(\Psq)$ gets
arbitrarily close to 1. We observe that while the prior for
\cref{cor:cora} is too strong to
be meaningful in the combinatorial sampling setting,\footnote{Where one
    generally has a $\vec \gamma$-local spectral expander with $\gamma_i =
O(1)/(n-i)$.} for families of
local-spectral expanders of bounded rank, such as \cite{KaufmanO18b,
LubotzkySV05} the assumptions are
much milder. We also note that the spectral mixing time estimates one
obtains through the estimate \cref{cor:cora}, say for the simplicial
complexes of \cite{KaufmanO18b, LubotzkySV05} ,are significantly
sharper than those obtained through \cref{thm:myn} (or those obtained
through even stronger entropic estimates, which we will also discuss below)

Recall that the \underline{mixing time} $T(\delta, \Psq)$ of the
random walk $\Psq$ is the minimal number of steps required for the
distirbution $\mu^{(t)} = \mu^{(0)} \Psq^t$ to be $\delta$ close to
the uniform distribution on $X^{(n)}$ in total variation distance
\emph{for any} distribution $\mu^{(0)}$. We can prove,
\begin{corollary}\label{cor:mixt}
    If $X$ has connected links and is a $\vec\gamma$-local spectral
    expander with $\gamma_{n-2} \le \frac{\ee}{2n}$, we have for any
    $\delta > 0$
    \[ T(\delta, \Psq) \le \left\lceil\frac{\log\parens*{
    \frac{\sqrt{|X^{(n)}|}}{\delta}}}{\log(\sqrt 2/ \ee)}\right\rceil\]
\end{corollary}
We note that even in a case where $\ee = 0$, such as the Boolean
hypercube discussed earlier, the random walk can $\Pgd$ requires
$\Omega(n\log n)$ steps to converge to equilibrium. Thus, in the
setting where the assumptions of \cref{cor:cora} and \cref{cor:mixt}
hold, for $\ee \to 0$, the walk $\Psq$ is \emph{significantly}
faster than $\Pgd^n$ -- which is the natural candidate we should
compare the mixing time of $\Psq$ with as each step of $\Psq$ updates
all $n$ coordinates.

In \cref{sec:ramanujan}, we study the relevant expansion parameters
$\ee^{j \to [j-1]}$ in Ramanujan complexes of thickness $p$. We
summarize our results for these complexes below in the following
theorem.\footnote{Recall that a Ramanujan complex of dimension $d-1$
    is a $d$-partite simplicial complex where the colors are named from
$0$ to $d-1$.}
\begin{theorem}[Summary of
    \cref{thm:ramanujan_local,thm:ramanujan_global,thm:ram-mix}]
    Let $X$ be a Ramanujan complex of thickness $p$ and dimension
    $(d-1)$. We have,
    \begin{equation}
        \parens*{\ee^{[0, j-1] \to j}}^2 = \frac{1}{p+1} \parens*{ 1 -
        \Theta\parens*{p^{-j}} }\qquad\textrm{and} \qquad
        \parens*{\ee^{[0, \ldots,
        d-2] \to d-1}}^2 = \frac{1}{p+1} \cdot \parens*{1-
        \Theta\parens*{p^{-j}}}\label{eq:expansion_show}
    \end{equation}
    for all $j = 1, \ldots, d-2$. Consequently,
    \[ \sigma_2\parens*{\Psq}^2 \le \frac{d+1}{p} -
        O\parens*{\frac{d^2}{p^2}}\qquad\textrm{and}\qquad T(\delta,
    \Psq) = O\parens*{\frac{d^2 \log p + \log(N/\delta)}{\log(p/d)}}\]
    where $N$ is the number of vertices in $X$.
\end{theorem}
We note that the expansion parameters in \cref{eq:expansion_show} are
computed exactly in \cref{sec:ramanujan} and they do not suffer the
co-dimension dependent blowups from \cref{thm:trickleshow}, which
would replace the RHS of the expansion parameters in
\cref{eq:expansion_show} with \emph{roughly} $(j-1)/p$ and
$(d-1)/(p+1)$, compare with \cref{thm:showw}. While our techniques
fall short of recovering the
optimal mixing results of \cite{LubetzyLP20}, we hope our techniques
will find further applications in the study of \emph{spectral}
expansion of Ramanujan complexes. We note that our spectral bound
does not follow from the spectral techniques of \cite{LubetzyLP20} as
this work concerns itself with spectral radii of non-backtracking
operators, which in our context do not neatly translate into a bound
on the singular values.

We note that \cref{thm:showa} is proven using geometric
considerations. For each $I
\subset [n]$, we define the subspace $\Uu_I \subset \RR^{X^{(n)}}$ and the
projection operator $\Quu_I$ as,
\[ \Uu_I = \sspan\set*{ \eff \in \RR^{X^{(n)}} : \eff(\omega) =
        \eff(\bar\omega)~~\textrm{for all}~~\omega,\bar\omega \in
        X^{(n)}~\textrm{such that}~\omega_{[n] \setminus I} = \bar\omega_{[n]
\setminus I}}~~\textrm{and }~~\Quu_I =\Proj(\Uu_I),\]
It is well known that $\Psq = \Quu_{\set 1} \cdots \Quu_{\set n}$, e.g.~see
\cite{Amit91}.
We recall that
the \underline{cosine} between two subspaces $\Uu, \Vv \in
\RR^{X^{(n)}}$ is defined as,
\[ \cos(\Uu, \Vv) = \max\set*{ \inpr{\eff, \gee} : \eff \in \Uu \cap
\Vv^\perp, \gee \in \Vv \cap \Uu^\perp, \norm{\eff} = \norm{\gee} = 1}.\]
Similarly, $\sin^2(\Uu, \Vv) = 1 - \cos^2(\Uu, \Vv)$. Now, we can follow in the
footsteps of \cite{Amit91}, who has proven estimates on the spectral
gap of the sequential sweep
in the Gaussian setting, by appealing to the following theorem:
\begin{theorem}[\cite{SmithSW77}]\label{thm:smiths}
    Let $\Quu_1, \ldots, \Quu_p \in \RR^{\Omega \times \Omega}$ be projection
    operators to sub-spaces $\Uu_1, \ldots, \Uu_p$. Set $\Vv_j = \bigcap_{i =
    1}^j \Uu_i$. Writing $\Quu_\star$ for the projection to $\Vv_p$, we have
    \[ \norm*{\Quu_1 \cdots \Quu_p \eff - \Quu_\star \eff}^2 \le
        \parens*{1 - \prod_{j = 2}^p \sin^2(\Uu_j, \Vv_{j-1}) } \cdot
    \norm*{\eff - \Quu_\star \eff}^2. \]
\end{theorem}
Now, \cref{thm:showa} follows from the following geometric interpretation of
$\ee^{I \to J}$:
\begin{theorem}[Informal Version of \cref{thm:angl} and \cref{prop:ints}]
    Let $X$ be an $n$-partite simplicial complex $I, J \subset [n]$
    such that $I \cap J = \varnothing$. Under
    suitable assumptions,
    \[ \cos(\Uu_I, \Uu_J) \le \ee^{I \to J}~~\textrm{and}~~\bigcap_{i \in I}
    \Uu_i = \Uu_I.\]
\end{theorem}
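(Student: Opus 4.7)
The plan is to handle the two assertions separately: the cosine bound is the analytic content, while the equality $\bigcap_{i\in I}\Uu_i = \Uu_I$ is a connectivity statement which I expect to be the main obstacle. For the cosine bound, the strategy is to decompose along the fibers over $S := [n]\setminus (I\cup J)$. An element $\eff \in \Uu_I$ is a function of $(\omega_S,\omega_J)$ alone, and the condition $\eff \perp \Uu_J$ reduces --- using that $\eff$ does not see $\omega_I$ and averaging over the $I$-coordinate --- to the requirement that for each fixed assignment $\alpha$ to $S$, the function $\eff(\alpha,\cdot)$ has zero mean with respect to the marginal of $\omega_J$ given $\alpha$. A symmetric description holds for $\gee \in \Uu_J \cap \Uu_I^\perp$. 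The inner product then decomposes as
\[ \inpr{\eff, \gee} = \sum_\alpha \pi(\alpha) \sum_{\omega_I, \omega_J} \pi(\omega_I, \omega_J \mid \alpha)\, \eff(\alpha, \omega_J)\, \gee(\alpha, \omega_I), \]
and the inner sum is precisely the bilinear form associated with the bipartite graph $G^{(\alpha)}_{I,J}$ applied to two vectors that are orthogonal to the constants on each side. By definition of $\lambda_2$, the inner sum is bounded by $\lambda_2(G^{(\alpha)}_{I,J}) \cdot \norm{\eff(\alpha,\cdot)} \cdot \norm{\gee(\alpha,\cdot)}$; bounding each $\lambda_2(G^{(\alpha)}_{I,J})$ uniformly by $\ee^{I\to J}$ and applying Cauchy--Schwarz over the outer $\alpha$-sum yields the desired inequality.

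For the equality $\bigcap_{i\in I}\Uu_i = \Uu_I$, the inclusion $\Uu_I \subseteq \bigcap_{i\in I}\Uu_i$ is immediate from the definitions, since a function independent of the whole block $I$ is in particular independent of each individual coordinate. For the reverse inclusion, an element $\eff \in \bigcap_{i\in I}\Uu_i$ is by construction invariant under any single-coordinate swap inside $I$ that keeps the tuple in $X^{(n)}$. Placing $\eff$ in $\Uu_I$ amounts to showing it is constant on the fiber of every partial assignment $\alpha$ to $[n]\setminus I$, which reduces to proving that the graph on $\{\omega\in X^{(n)} : \omega_{[n]\setminus I} = \alpha\}$ with edges between tuples differing in exactly one $I$-coordinate is connected. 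I would prove this by induction on $|I|$: pick some $i_0 \in I$, cluster the fiber by $\omega_{I\setminus\{i_0\}}$, connect within each cluster by applying the induction hypothesis inside the link of the extended face, and connect the clusters using the fact that the $1$-skeleton of the link of $\alpha$ between the sides $U_{i_0}$ and the remaining $I$-sides is connected.

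The main obstacle lies in this last connectivity step, i.e.\ in identifying the precise form of the "suitable assumption." The toy example $X^{(2)} = \{(0,0),(1,1)\} \subset \{0,1\}^2$ already shows that $\bigcap_{i\in I}\Uu_i$ can strictly contain $\Uu_I$ when a link fails to be connected, so the hypothesis must include some form of link connectivity. The cleanest sufficient condition I anticipate is that every link of $X$ is a connected weighted graph --- which follows from $\gamma_i < 1$ at every level by the standard Cheeger inequality, consistent with the "connected links" hypothesis invoked in \cref{cor:cora}. Once link connectivity is in hand, the inductive swap argument goes through routinely, and combined with the cosine bound it delivers the geometric interpretation of $\ee^{I\to J}$ used in \cref{thm:showa}.
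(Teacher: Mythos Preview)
Your plan is essentially the paper's proof: the cosine bound is obtained by localizing the inner product over $\alpha\in X[(I\cup J)^c]$, observing that the localized vectors are orthogonal to constants, bounding each term by $\sigma_2(\Cee_\alpha^{I\to J})$, and recombining (the paper uses AM--GM where you use Cauchy--Schwarz, which is immaterial); the intersection statement is proved by induction on $|I|$ together with connectivity of the relevant bipartite link graph, exactly as you outline, and link connectivity is indeed the ``suitable assumption''.

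Two small corrections are in order. First, in the cosine argument the orthogonality constraint coming from \cref{def:cos} is $\eff\perp(\Uu_I\cap\Uu_J)=\Uu_{I\cup J}$, \emph{not} $\eff\perp\Uu_J$. The zero-mean-per-fiber condition you derive is precisely $\eff\perp\Uu_{I\cup J}$ and is what the paper establishes in \cref{cl:perp}; but it is not equivalent to $\eff\perp\Uu_J$, which would instead force $\Exp[\eff\mid\omega_S,\omega_I]=0$ for every $\omega_I$ and is strictly stronger (and not available). So your reduction step as written is false, though the condition you end up using is the right one. Second, in the inductive connectivity argument you want to cluster by the single coordinate $\omega_{i_0}$, so that each cluster is an $(I\setminus\{i_0\})$-fiber inside the link of $\alpha\oplus\omega_{i_0}$ to which the induction hypothesis applies; clustering by $\omega_{I\setminus\{i_0\}}$ as you wrote makes the within-cluster step trivial and pushes all the work to the between-cluster step. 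The paper phrases the same induction as $\Uu_T\cap\Uu_t=\Uu_{T\cup t}$ one coordinate at a time, invoking connectivity of the bipartite graph $G^{(\alpha)}_{t,T}$ (equivalently, of its line graph).
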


We also show an analog of \cref{thm:showa} for contraction in relative
entropy: We recall that the KL-divergence between the distributions $\mu$ and
$\pi$ is defined as $\Div(\mu~\|~\pi) = \Exp_{x \sim \pi}\sqbr*{
    \frac{\mu(x)}{\pi(x)} \log
\frac{\mu(x)}{\pi(x)}}$.  The \underline{entropy contraction factor}
$\EC(\Pii)$ of the random walk
operator $\Pii \in \RR^{V \times V}$ with stationary measure $\pi$ is
defined as,
\begin{equation*}
    \EC(\Pii) = 1 - \kappa(\Pii)~~\textrm{where}~~\kappa(\Pii) = \sup\set*{
        \frac{\Div(\mu\Pii~\|~\pi)}{\Div(\mu~\|~\pi)} : \mu~\textrm{is a
    probability distribution on}~V}.
\end{equation*}
We notice that we always have,
\[ \Div(\mu\Pii~\|~\pi) \le (1 - \EC(\Pii)) \cdot \Div(\mu~\|~\pi),\]
for all distributions $\mu$ on $V$.

We define the following entropy contraction parameters,
\begin{equation*}
    \eta^{I \to J} = 1 - \kappa^{I \to J}~~\textrm{where}~~\kappa^{I \to J} =
    \sup\set*{\frac{\Div(\mu\cdot G_{I,
        J}^{(\alpha)}~\|~\pi_{J}^{(\alpha)})}{\Div(\mu~\|~\pi_I^{(\alpha)})}
        : \alpha \in
    X[(I \cup J)^c]~\textrm{and}~\mu \in \triangle_{X_\alpha[I]}}
\end{equation*}
where we have overloaded the notation $G_{I, J}^{(\alpha)}$ to mean the random
walk from the part of $V_{I, J}^{(\alpha)}$ that describes the transition from
assignments of $I$ to assignments of $J$, and $\pi_I^{(\alpha)}$ and
$\pi_J^{(\alpha)}$ are uniform distributions the assignments of $I$ and
assignments of $J$ respectively.

We prove,
\begin{theorem}[Informal Version of \cref{thm:ecc}]\label{t:eca}
    Let $X$ be an $n$-partite simplicial complex. Then,
    \begin{equation}\EC(\Psq) \ge \prod_{j = 1}^{n - 1} \eta^{[j+1,
        n] \to j}.\label{eq:mls-bd}
    \end{equation}
    In particular, writing $\pi$ for the uniform distribution on $X^{(n)}$ we
    have
    \[ \Div(\mu\Psq~\|~\pi) \le (1 - \EC(\Psq)) \cdot \Div(\mu~\|~\pi).\]
\end{theorem}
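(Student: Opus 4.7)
The plan is to prove the bound by induction on the rank $n$. The base case $n = 1$ is immediate since $\Psq = \update_1$ takes any distribution directly to $\pi$, so $\EC(\Psq) = 1$ matches the empty product on the right of \eqref{eq:mls-bd}. For the inductive step, the idea is to exploit the left-to-right structure of the sweep by conditioning on the value $\omega_1 = x$ after $\update_1$ and reducing the tail $\update_2, \ldots, \update_n$ to the sequential sweep on the $(n-1)$-partite link $X_x$.

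Writing $\mu^{(j)} = \mu \cdot \update_1 \cdots \update_j$ for the distribution after the first $j$ updates, the definition of $\update_1$ gives $\mu^{(1)}(\omega) = \mu_{-1}(\omega_{-1}) \pi(\omega_1 \mid \omega_{-1})$, so $\Div(\mu^{(1)} \| \pi) = \Div(\mu_{-1} \| \pi_{-1})$ and the marginal on coordinate $1$ equals $\mu^{(1)}_1 = \mu_{-1} \cdot G_{[2,n] \to 1}$. Since $\update_j$ for $j \ge 2$ does not touch coordinate $1$, this marginal is preserved for the remaining $n-1$ updates: $\mu^{(n)}_1 = \mu^{(1)}_1$. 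Using the conditional-on-link identity for the uniform measure (namely $\pi(\omega_j \mid \omega_{-j}) = \pi_{X_x}(\omega_j \mid \omega_{\{2,\ldots,n\} \setminus j})$ whenever $\omega_1 = x$), one checks by a direct computation that the conditional distribution of $\omega_{-1}$ given $\omega_1 = x$ evolves along the sweep on the link, so that
\[\mu^{(n)}(\omega_{-1} \mid x) = \mu^{(1)}(\omega_{-1} \mid x) \cdot \Psq^{X_x}.\]

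By the inductive hypothesis applied to the link (after re-indexing its $n-1$ parts), $\EC(\Psq^{X_x}) \ge \prod_{j=2}^{n-1} \eta^{[j+1,n] \to j}_{X_x}$. Since the maximum defining $\kappa^{I \to J}_{X_x}$ ranges over a sub-class of the assignments $\alpha$ used for $\kappa^{I \to J}_X$, we have $\eta^{I \to J}_{X_x} \ge \eta^{I \to J}_X$, yielding a uniform lower bound $\EC(\Psq^{X_x}) \ge \eta^\star := \prod_{j=2}^{n-1} \eta^{[j+1,n] \to j}$. Combining this pointwise contraction with the chain rule $\Div(\nu \| \pi) = \Div(\nu_1 \| \pi_1) + \Exp_{x \sim \nu_1}\sqbr*{\Div(\nu(\cdot \mid x) \| \pi(\cdot \mid x))}$ applied to $\nu = \mu^{(n)}$ and $\nu = \mu^{(1)}$, and using $\mu^{(n)}_1 = \mu^{(1)}_1$, one obtains $\Div(\mu^{(n)} \| \pi) \le \eta^\star \Div(\mu^{(1)}_1 \| \pi_1) + (1 - \eta^\star) \Div(\mu^{(1)} \| \pi)$. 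The definition of $\eta^{[2,n] \to 1}$ further bounds $\Div(\mu^{(1)}_1 \| \pi_1) = \Div(\mu_{-1} G_{[2,n] \to 1} \| \pi_1) \le (1 - \eta^{[2,n] \to 1}) \Div(\mu^{(1)} \| \pi)$, and substituting via the identity $\eta^\star(1-\eta^{[2,n]\to 1}) + (1-\eta^\star) = 1 - \eta^\star\eta^{[2,n]\to 1}$ together with data processing $\Div(\mu^{(1)} \| \pi) \le \Div(\mu \| \pi)$ yields the theorem.

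The main obstacle will be the careful verification of the link recursion $\mu^{(n)}(\omega_{-1} \mid x) = \mu^{(1)}(\omega_{-1} \mid x) \cdot \Psq^{X_x}$, which rests on the standard conditional-on-link identity for the uniform measure and on the observation that $\update_j$ for $j \ge 2$ depends on $\omega_1$ only through its conditioning event. Everything else---the chain rule applied to both $\mu^{(n)}$ and $\mu^{(1)}$, the convex-combination identity, and the link-monotonicity comparison $\eta^{I\to J}_{X_x} \ge \eta^{I \to J}_X$---is routine.
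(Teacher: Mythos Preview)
Your proposal is correct and follows essentially the same approach as the paper: induction on $n$, with the inductive step decomposing $\Psq$ as $\Quu_1$ followed by the link sweep $\Psq^{(x)}$, and combining the chain rule for KL divergence with the identities $(\mu\Psq)_{\{1\}}=(\mu\Quu_1)_{\{1\}}$, $(\mu\Psq)^{(x)}=(\mu\Quu_1)^{(x)}\Psq^{(x)}$, and $(\mu\Quu_1)_{\{1\}}=\mu_{[2,n]}\Cee_\varnothing^{[2,n]\to 1}$ (the paper's Claims~\ref{cl:houd0} and~\ref{cl:houd2}). Your write-up is in fact slightly cleaner in two places: you use the exact equality $\Div(\mu^{(1)}\|\pi)=\Div(\mu_{-1}\|\pi_{-1})$ rather than the paper's two separate appeals to data processing and the chain rule, and you make explicit the link-monotonicity $\eta^{I\to J}_{X_x}\ge\eta^{I\to J}_X$ that the paper uses without comment when passing from the inductive hypothesis on $(X_x,\pi^{(x)})$ to the constant $K(2,n{+}1)$ defined for $(X,\pi)$.
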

The proof of \cref{t:eca} follows from an inductive argument. Our key
observation is that conditional on the 1-st coordinate equalling any
given value $x$, the distribution of $\mu \Psq$ on the remaining
coordinates in $[2, \ldots, n]$ can be thought as the distribution
resulting from a sequential sweep in a smaller complex $X_x$ with
faces $(\omega_2, \ldots, \omega_n)$ such that $(x, \omega_2,\ldots,
\omega_n)$ is a facet of $X$. The proof is then completed using the
well-known Garland method \cite{Garland73} and the chain-rule for KL
divergences. There is one interesting \emph{twist}: usually the
Garland method considers the localizations of functions, which need
not be normalized. Applying the Garland method to distributions,
requires understanding how the  \emph{localized/pinned} distributions
interplay with the graphs $G^{(\alpha)}_{I, J}$. To the best of our
knowledge, this is where our proofs differ from existing
\emph{local-to-global} arguments of \cite{AlevL20, CGSV20, GuoM21,
AlevR24}. We hope our proof techniques will find more applications in
the future.

We note that both $\EC(\Psq)$ and $\Gap(\Psq)$ are useful tools for bounding
the \underline{mixing time} $T(\Psq)$ of the random walk $\Psq$,
i.e.~the number of steps one needs to take to
come within small, say 0.01, distance of the uniform distribution. We have,
\begin{align*} T(\Psq) &~\lesssim
    \left\lceil\frac{\log\parens*{|X^{(n)}|}}{\log\parens*{
    1/\sigma_2(\Psq) }}\right\rceil ~\lesssim
    \left\lceil\frac{\log(|X^{(n)}|)}{\Gap(\Psq)}\right\rceil,\\
    T(\Psq) &~\lesssim~
    \left\lceil\frac{\log\log(|X^{(n)}|)}{\log(1/(1-\EC(\Psq)))}\right\rceil\lesssim
    \left\lceil\frac{\log\log\parens*{|X^{(n)}|}}{\EC(\Psq)}\right\rceil
\end{align*}
When $\Gap(\Psq)$ and $\EC(\Psq)$ tend to 1, the first bound in each
inequality is significantly sharper than the second. However, as they
get closer and closer to 0, the latter inequalities are essentially
lossless. In particular, when $\Gap(\Psq)$ and $\EC(\Psq)$ are of the
same order, the
bounds one obtains through $\EC(\Psq)$ are typically much sharper, though
typically this constant is much more troublesome to bound.

\begin{remark}
    We note that our proof technique for \cref{t:eca} is general enough to
    prove contraction estimates for $\Psq$ with respect to general
    $\Phi$-divergences,
    $\Div_\Phi(\mu~\|~\pi) = \Exp_{\omega \sim \pi}\sqbr*{ \Phi\parens*{
    \frac{\mu(\omega)}{\pi(\omega)}}}$ for convex $\Phi$, assuming that
    $G^{(\alpha)}_{[j + 1, n], \set{j}}$ contract $\Phi$-divergences.
    In particular, we could
    give an alternative proof for \cref{thm:showa} by setting
    $\Phi(t) = (t-1)^2$
    and adapting the proof of \cref{t:eca}.\footnote{This would
        correspond to the
        so-called $\chi^2$-divergence. The bound by which a random walk
        contracts the $\chi^2$ divergence is well understood to
    correspond to the (square of the) second singular value.} We will
    refrain from taking this route and
    carrying this out explicitly, as we believe that proving
    \cref{thm:showa} through \cref{thm:smiths} better highlights the geometric
    nature of this result. We refer to
    \cite{Chafai04,BoucheronLM13,Raginsky16} for more information on
    $\Phi$-divergences and the closely related $\Phi$-entropies, and
    inequalities for $\Phi$-entropy contraction called
    \emph{generalized} data-processing inequalities.
\end{remark}
\subsection{Related Work}
A similar proof strategy in proving a spectral gap bound for the sequential
sweep in the Gaussian setting was employed by \cite{Amit91} -- there
he shows that the cosine of the angle between subspaces can be
bounded in terms of the
covariance matrix of the underlying Gaussian distribution, which is
where our analysis
diverges from his. We rely on the Garland method \cite{Garland73} to
bound these values, which has been a staple in high-dimensional expansion
literature, e.g.~\cite{Oppenheim18, Oppenheim18b, KaufmanO18, DiksteinDFH18,
AlevL20, Oppenheim21, Oppenheim23}. In particular, there is a non-trivial
overlap between our techniques and the ones employed in \cite{Oppenheim17,
    Oppenheim18c, Oppenheim21,
Oppenheim23, GrinbaumO22}. Similar considerations about angles between
subspaces have also been investigated in the context of Kazhdan constants
and the property
(T), e.g.~\cite{Kazhdan67, DymaraJ00, Kassabov11}.

Our work is close in spirit to \cite{LubetzyLP20} which study a
non-stuttering version
of the sequential sweep $\Psq$ in Ramanujan complexes, called the
geodesic flow. The main result of \cite{LubetzyLP20} is that for
these geodesic flows on Ramanujan complexes, the total-variation
mixing time is as small as possible up to an additive logarithmic
constant. Namely, a $k$-regular flow on $n$ states achieves
$\epsilon$-mixing after only $\log_k(n)+O(\log\log(n))$ steps. In
fact, this result holds for a larger class of walks on so-called
\emph{Ramanujan digraphs} \cite{Parzan20}. While we cannot recover
this result, our results hold in much greater generality without
assumption of Ramanujanness, and also show that in suitable
high-dimensional expanders the sequential sweep improves upon down-up walks.

As already mentioned a significant amount of work has been done in
investigating the mixing properties of the down-up walk $\Pgd$ from
considerations of high-dimensional expansion. See for example,
\cite{KaufmanM17, DinurK17, KaufmanO18, DiksteinDFH18, AlevL20} for works
investigating local-spectral expansion (and variants) to get a bound on
$\Gap(\Pgd)$. Most of the concrete applications of these results rely on
the spectral independence framework of \cite{AnariLO20, FGYZ20, CGSV20} -- we
refer to the excellent lecture notes \cite{StefankovicV23} and the
references therein for more on
this subject. Similarly, the works of \cite{CryanGM19, GuoM21, ChenLV20,
AnariJKP22a, AnariJKP22} explore bounding $\EC(\Pgd)$ based on considerations
of local entropy contraction. Our bound on $\EC(\Psq)$ is proven using similar
ideas. There has also been some work in connecting these
concepts to classical Markov Chain techniques such as path coupling
\cite{Aldous83, BubleyD97}, see e.g.~\cite{Liu21,BlancaCPCPS22}.

The random walks $G_{I, J}^{(\alpha)}$ that we have considered have also been
studied in the works \cite{DiksteinD19, GurLL22}, with applications in
hypercontractivity and agreement testing. They are a natural generalization of
swap-walks which have been studied in \cite{AlevJT19, DiksteinD19}. Their study
have found applications in approximation algorithms for constraint satisfaction
problems \cite{AlevJT19}, agreement testing \cite{DiksteinD19},
coding theory \cite{AlevJQST19, JeronimoST21}, and
hypercontractivity \cite{BafnaHKL22, BafnaHKL22a, GurLL22}.

Although the sequential sweep $\Psq$ is a very natural random sampling
algorithm considerably less is known about it and the results apply only in
restricted and specialized settings, e.g.~\cite{DiaconisR00, DyerGJ06, Hayes06,
DyerGJ08}. These results show mixing by considering what is known as
the Dobrushin matrix \cite{Dobrushin70} whose entries give
pessimistic estimates on the worst case dependencies between the
coordinates. In \cite{HeEtAl16, RobertsR15} the impact of the order of
the vertices for $\Psq$ was studied and shown that the mixing time of the
walk crucially depends on the order of the vertices. The relation between the
relaxation time of $\Psq$ and $\Pgd$ (with a given order) was also subject of
investigation in \cite{GuoKZ18}. We refer the readers to the papers mentioned
here and the references therein for more on the topic of scan order
and mixing time.

The sequence of updated coordinates in $\Psq$ can be thought as the set
of vertices visited by a (deterministic) walk on the directed cyle
$C_n = ([n], E)$ initiated from the vertex 1. It is now a natural idea
to replace the directed circle with an arbitrary graph $G = ([n], E)$
and picking the updated vertices by means of a random walk on $[n]$.
\cite{AlevR24} shows that if $G$ is a two-sided spectral expander,
the mixing time of this new walk is comparable to the mixing time of
$\Pgd$ under mild assumptions. Their techniques fall short of
establishing a mixing time for the \underline{sequential sweep} in
blackbox fashion since the directed cycle has no two-sided expansion.
However, we observe that their result essentially translates an $O(n
\log n)$ mixing time for $\Pgd$ into an $O(\frac{n \log n}{1 -
\lambda(G)})$ mixing time for this new-walk.  Observing that this
mixing time is roughly the cover time\footnote{The number of steps
    following which the random is expected to have visited all the
vertices after being initialized from an arbitrary vertex.} of the
graph $G$, suggesting that a tighter analysis could perhaps establish
an $O(1)$-mixing time\footnote{i.e.~$O(n)$ update steps} for our
sweep\footnote{where we observe that the cover time of the directed
cycle is $n$, which would correspond to a single sweep}.

Finally, we mention the work \cite{GaitondeM24}, which proves worst
case comparison theorems between the sequential sweep $\Psq$ and the
Glauber dynamics $\Pgd$, which shows that $\Gap(\Psq) =
\Omega(n^{-1}) \cdot \Gap(\Pgd)$. They further give examples of
complexes in which this bound is tight. We note that our focus in
this work are complexes where $\Gap(\Psq)$ is significantly better
than $\Gap(\Psq)$, thus our result can be thought as complementary to
their result.
% \subsection{Open Questions and Future Directions}
% Our work raises some interesting questions,
% \begin{enumerate}
%     \item Though \cref{thm:showa} implies that the sequential sweep $\Psq$
%         rapidly mixes in a partite-simplicial complex with a constant
%         number of sides
%         under modest assumptions on local spectral expansion,
%         several simple results such as those implied by the boundedness of the
%         Dobrushin matrix \cite{Hayes06, DyerGJ08} seem to be out of
%         reach using this
%         machinery. Can we tighten our bounds to recover these basic
%         results under
%         additional assumptions?
%     \item $\vec\eta$-spectral independence and $\vec\ee$-product
%         definitions seem to both capture
%         an approximate notion of correlation between the coordinates.
%         Whereas the
%         latter quantifies the worst case correlation between any two
%         coordinates, the former quantifies
%         an average notion of correlation between them. Can this
%         intuition be made
%         more precise and translate to bounds on the random
%         permutation scan dynamics. This would be the random walk a
%         single step of which, samples a uniformly random ordering
%         $\vec s$ of $[n]$ and does a sequential sweep according to
%         o    the order $\vec s$.
% \end{enumerate}
\subsection{Acknowledgements}
VLA would like to thank Max Hopkins for many insightful discussions on the
topic. OP was supported by the ISF grant
2990/21. VLA was supported by the ERC grant of Alex Lubotzky (European Union's
    Horizon
2020/882751), the ISF grant 2669/21 and ERC grant 834735 of Gil
Kalai, ISF grant 2990/21 of OP, and ISF grant 721/2024 of Uriya
First.  Both authors thank anonymous
referees for providing very useful feedback on an earlier version of
this manuscript.

\section{Preliminaries}\label{sec:prelim}
\subsection{Linear Algebra}
\subsubsection*{Vectors and Inner Products}
Throughout this paper, we will use bold faces for various scalar
functions/vectors,
i.e.~$\eff: V \to \RR$.  For $i \in V$, we will write $\one_i: V \to
\RR$ for the
indicator function of $i$, i.e.~$\one_i(i) = 1$ and $\one_i(j) = 0$ for all $j
\ne i$; and similarly for $S \subseteq V$, we will write $\one_S = \sum_{i \in
S} \one_i$ for the indicator vector of $S$. The notation $\one_V$ will be used
for the vector of all ones, when $V$ is clear from context we will simply write
$\one \in \RR^V$ in place of $\one_V$.

We will reserve $\pi \in \RR^V$ to denote various probability distributions of
interest. In particular, this means $\pi(x) \ge 0$ for all $x \in V$ and
$\sum_{x \in V} \pi(x) = 1$ -- we will adopt the convention that
whenever $y \not\in V$, $\pi(y) = 0$.

Given
$\eff, \gee \in \RR^V$, and a measure $\pi$ on $V$ such that $\pi(x) > 0$
for all $x \in V$, we use the notations $\langle \eff, \gee\rangle_{\pi}$
and $\norm{\eff}_\pi$ to denote the inner-product and the norm with respect to
the distribution $\pi$, i.e.
\begin{equation}\langle \eff, \gee\rangle_{\pi} = \Exp_{x \sim
    \pi}\eff(x)\gee(x) =  \sum_{x \in V}
    \pi(x) \cdot \eff(x)\gee(x)  ~~\textrm{ and }~~ \norm{\eff}_\pi^2
    = \langle \eff,
    \eff\rangle_{\pi}.\label{eq:inpr-defn}
\end{equation}

\subsubsection*{Matrices and Eigenvalues}
In this section, we will recall some results concerning eigenvalues and
eigenvectors of matrices.

We will use serif faces for matrices, i.e.~$\Aye,\Bee \in \RR^{U \times V}$. We
will call a matrix $\Bee \in \RR^{U \times V}$
\underline{row-stochastic} if all entries of
$\Bee$ are non-negative and all rows of $\Bee$ sum up to 1, i.e.~
\[ \textrm{for all } u \in U, v \in V~~\Bee(u, v) \ge 0~~\textrm{and}~~\Bee
\one = \one.\]
The \underline{adjoint} of the operator $\Bee \in \RR^{U \times V}$,
with respect to the
inner-products defined by the measures $\pi_U$ and $\pi_V$ on $U$ and $V$,
is the
operator $\Bee^* \in \RR^{V \times U}$ satisfying
\[ \langle \eff, \Bee
    \gee\rangle_{\pi_U} = \langle \Bee^* \eff,
    \gee\rangle_{\pi_V}~~\textrm{ for all
} \eff \in \RR^U, \gee \in \RR^V.\] If $U = V$ and $\pi_U = \pi_V$, the operator
$\Bee$ is called \underline{self-adjoint} when $\Bee^*= \Bee$.  If $\Bee$ is a
row-stochastic matrix, we will call $\Bee^*$ the
\underline{time-reversal} of $\Bee$ with
respect to $\pi_U,\pi_V$ and
say that $\Bee$ is reversible if $\Bee = \Bee^*$. It is well known that the
operator $\Bee^* \in \RR^{V \times U}$ is uniquely determined by the choice of
$\Bee \in \RR^{U \times V}$ and the
inner-products defined by $\pi_U$ and $\pi_V$ (see
e.g.~\cite[p.~318]{Saloff-Coste97}),
\begin{proposition}\label{prop:adjoint-defn}
    Let $\Bee \in \RR^{U \times V}$ be arbitrary. We write $\Bee^*$ for the
    adjoint operator to $\Bee$ with respect to the inner-products defined by
    the distributions $\pi_U$ and $\pi_V$.
    Then,
    \[ \Bee^*(y, x) = \Bee(x, y) \cdot \frac{\pi_U(x)}{\pi_V(y)}~~\textrm{for
    all}~~x \in U, y \in V.\]
\end{proposition}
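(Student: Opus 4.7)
The plan is to verify the formula by a direct computation on indicator vectors and then conclude uniqueness from bilinearity. First I would fix arbitrary $x \in U$ and $y \in V$, substitute $\eff = \one_x \in \RR^U$ and $\gee = \one_y \in \RR^V$ into the defining identity $\langle \eff, \Bee \gee\rangle_{\pi_U} = \langle \Bee^* \eff, \gee\rangle_{\pi_V}$, and expand both sides using the weighted inner product formula in \cref{eq:inpr-defn}. Since $\one_x$ vanishes off $x$, the left-hand side collapses to the single term $\pi_U(x) \cdot \Bee(x,y)$; symmetrically, the right-hand side collapses to $\pi_V(y) \cdot \Bee^*(y,x)$.

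Equating the two and using the standing assumption that $\pi_V(y) > 0$ for every $y \in V$, I can divide through to obtain
\[
    \Bee^*(y,x) \;=\; \Bee(x,y) \cdot \frac{\pi_U(x)}{\pi_V(y)},
\]
which is exactly the claimed identity. Uniqueness of the adjoint then follows for free from this calculation: any operator satisfying the adjoint relation must have its $(y,x)$-entry equal to the right-hand side above, and conversely the matrix defined by this formula extends by bilinearity of $\langle \cdot, \cdot\rangle_{\pi_U}$ and $\langle \cdot, \cdot\rangle_{\pi_V}$ to satisfy the adjoint relation for all $\eff \in \RR^U$ and $\gee \in \RR^V$, since indicator vectors span these spaces.

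There is essentially no obstacle in this proof — the whole argument is a one-line specialization of the inner product formula — so the only thing to be careful about is bookkeeping: namely, making sure the matrix sides ($U \times V$ versus $V \times U$) and the weights ($\pi_U$ versus $\pi_V$) are tracked correctly on each side of the identity, and that the positivity hypothesis $\pi_V(y) > 0$ is invoked before dividing.
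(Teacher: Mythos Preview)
Your proof is correct and is the standard direct computation. Note that the paper does not actually supply a proof of this proposition: it is stated as a well-known fact with a reference to \cite[p.~318]{Saloff-Coste97}, so there is nothing to compare against beyond observing that your argument is exactly the expected one.
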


We also recall the following standard fact which is an immediate consequence of
\cref{prop:adjoint-defn},
\begin{fact}\label{fac:reversal}
    If $\Bee \in \RR^{U, V}$ is a row-stochastic matrix satisfying $\pi_U \Bee =
    \pi_V$, then the adjoint matrix $\Bee^*$ with respect to $\pi_U, \pi_V$ is
    also row-stochastic and satisfies $\pi_V \Bee^* = \pi_U$.
\end{fact}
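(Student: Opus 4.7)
The plan is to deduce everything from the explicit formula for the adjoint given by Proposition \ref{prop:adjoint-defn}, namely $\Bee^*(y,x) = \Bee(x,y) \cdot \pi_U(x)/\pi_V(y)$. There are three things to check: non-negativity of entries, that each row of $\Bee^*$ sums to $1$, and that $\pi_V \Bee^* = \pi_U$. Each is a one-line computation, and there is no real obstacle beyond bookkeeping.

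First I would observe non-negativity: since $\pi_U, \pi_V$ are probability distributions with strictly positive mass (which is implicitly needed for the inner products in \eqref{eq:inpr-defn} and the adjoint to be well-defined) and $\Bee(x,y) \ge 0$ by row-stochasticity, the formula immediately gives $\Bee^*(y,x) \ge 0$. Next, for any fixed $y \in V$, I would compute
\[ \sum_{x \in U} \Bee^*(y,x) \;=\; \sum_{x \in U} \Bee(x,y)\cdot\frac{\pi_U(x)}{\pi_V(y)} \;=\; \frac{1}{\pi_V(y)} \sum_{x \in U} \pi_U(x) \Bee(x,y) \;=\; \frac{(\pi_U \Bee)(y)}{\pi_V(y)}, \]
which equals $1$ by the hypothesis $\pi_U \Bee = \pi_V$. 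Hence $\Bee^* \one = \one$, so $\Bee^*$ is row-stochastic.

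Finally, for the stationarity claim, I would fix $x \in U$ and compute
\[ (\pi_V \Bee^*)(x) \;=\; \sum_{y \in V} \pi_V(y) \Bee^*(y,x) \;=\; \sum_{y \in V} \pi_V(y) \cdot \Bee(x,y) \cdot \frac{\pi_U(x)}{\pi_V(y)} \;=\; \pi_U(x) \sum_{y \in V} \Bee(x,y) \;=\; \pi_U(x), \]
where the last step uses that $\Bee$ is row-stochastic. This gives $\pi_V \Bee^* = \pi_U$, completing the fact. Since the argument is a direct substitution into the formula of Proposition \ref{prop:adjoint-defn}, there is no serious obstacle; the only thing worth flagging is the implicit positivity of $\pi_U, \pi_V$ needed to divide by $\pi_V(y)$ in the adjoint formula.
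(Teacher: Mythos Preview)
Your proof is correct and proceeds exactly as the paper intends: the paper does not spell out a proof at all but simply declares the fact ``an immediate consequence of \cref{prop:adjoint-defn}'', and your three one-line checks (non-negativity, row sums, stationarity) are precisely how one unpacks that consequence.
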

Let $\Why \in \RR^{V \times V}$ be a square operator. A vector $\vecc v \in
\CC^V\setminus \set{0}$ is called an \underline{eigenvector} of $\Why$ if there
exists an \underline{eigenvalue} $\lambda \in \CC$
such that, $\Why \vecc v = \lambda \vecc v$. We now recall the spectral
theorem (see, e.g.~\cite{HornJ12})
\begin{theorem}[Spectral Theorem]\label{thm:spectral}
    Let $\Why\in \RR^{V \times V}$ be a self-adjoint operator with respect to
    the inner-product defined by $\pi$. Then, all the eigenvalues
    $\lambda_1(\Why), \ldots, \lambda_{|V|}(\Why)$ are real. Further, $\Why$
    has an orthonormal collection of real eigenvectors $\why_1, \ldots,
    \why_{|V|} \in \RR^V$
    such that,
    \[ \Why = \sum_{i = 1}^{|V|} \lambda_i(\Why) \cdot \why_i \why_i^*
        ~~\textrm{and}~~\langle \why_i, \why_j\rangle_{\pi} =
        0~~\textrm{whenever}~i
    \ne j~~\textrm{and}~~\norm{\why_i}_{\pi} = 1,\]
    where $\why^*_i(v) = \why_i(v) \cdot \pi(v)$ for all $v \in V$.
\end{theorem}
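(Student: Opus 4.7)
The plan is to reduce the weighted version to the classical spectral theorem for real symmetric matrices (as proven in \cite{HornJ12}). The key observation is that the $\pi$-weighted inner product on $\RR^V$ is isometric to the standard Euclidean inner product via a diagonal change of variables, and under this change the self-adjointness condition on $\Why$ translates precisely to ordinary symmetry.

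Concretely, I would introduce the diagonal matrix $\Dee \in \RR^{V \times V}$ with $\Dee(v, v) = \sqrt{\pi(v)}$ (well-defined since $\pi(v) > 0$), and consider the linear map $\eff \mapsto \Dee \eff$. A direct computation gives $\langle \Dee \eff, \Dee \gee \rangle = \sum_v \pi(v) \eff(v) \gee(v) = \langle \eff, \gee\rangle_\pi$, so $\Dee$ is an isometry from $(\RR^V, \langle \cdot, \cdot\rangle_\pi)$ onto $(\RR^V, \langle \cdot, \cdot\rangle)$. The first real step will be to verify that the conjugated operator $\Mee = \Dee \Why \Dee^{-1}$ is symmetric in the standard inner product. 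Using \cref{prop:adjoint-defn}, self-adjointness of $\Why$ with respect to $\pi$ is equivalent to the detailed balance condition $\Why(x, y) \pi(x) = \Why(y, x) \pi(y)$, and an entrywise check gives $\Mee(x, y) = \sqrt{\pi(x)} \Why(x, y) / \sqrt{\pi(y)} = \sqrt{\pi(y)} \Why(y, x) / \sqrt{\pi(x)} = \Mee(y, x)$, so $\Mee = \Mee^\top$.

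Having reduced to a real symmetric matrix, the classical spectral theorem supplies real eigenvalues $\lambda_1(\Mee), \dots, \lambda_{|V|}(\Mee)$ and an orthonormal (in the standard inner product) basis of real eigenvectors $\mee_1, \dots, \mee_{|V|}$. I would then pull these back by setting $\why_i = \Dee^{-1} \mee_i$. Since $\Mee = \Dee \Why \Dee^{-1}$, the $\why_i$ are eigenvectors of $\Why$ with the same eigenvalues; and since $\Dee$ is an isometry between the two inner products, $\{\why_i\}$ is orthonormal in the $\pi$-inner product. In particular all eigenvalues are real, proving the first assertion, and the eigenvectors satisfy the claimed orthonormality.

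For the spectral decomposition $\Why = \sum_i \lambda_i(\Why) \why_i \why_i^\ast$, I would interpret the rank-one operator $\why_i \why_i^\ast$ using the paper's convention $\why_i^\ast(v) = \why_i(v) \pi(v)$, so $(\why_i \why_i^\ast \eff)(u) = \why_i(u) \sum_v \pi(v) \why_i(v) \eff(v) = \langle \why_i, \eff\rangle_\pi \why_i(u)$; thus $\why_i \why_i^\ast$ is the orthogonal projection (in the $\pi$-inner product) onto $\sspan \why_i$. Expanding an arbitrary $\eff$ in the orthonormal basis as $\eff = \sum_i \langle \why_i, \eff\rangle_\pi \why_i$ and applying $\Why$ termwise yields $\Why \eff = \sum_i \lambda_i(\Why) \langle \why_i, \eff\rangle_\pi \why_i = \sum_i \lambda_i(\Why) \why_i \why_i^\ast \eff$, giving the decomposition. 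The only step requiring care is the entrywise translation between self-adjointness with respect to $\pi$ and symmetry of $\Mee$; the rest is bookkeeping around the isometry $\Dee$.
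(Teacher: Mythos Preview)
Your argument is correct and is the standard reduction of the $\pi$-weighted spectral theorem to the Euclidean one via the similarity $\Mee = \Dee \Why \Dee^{-1}$. Note, however, that the paper does not actually prove this statement: it is merely recalled as a classical fact with a citation to \cite{HornJ12}, so there is no ``paper's own proof'' to compare against. Your write-up simply supplies the (routine) details that the paper omits.
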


For self-adjoint operators $\Why \in \RR^{V \times V}$ we adopt the
convention of taking
$\lambda_i(\Why)$ to be the $i$-th largest eigenvalue of $\Why$, i.e.~we have
$\lambda_1(\Why) \ge \cdots \ge \lambda_{|V|}(\Why)$. We will also write
$\lambda_{\min}(\Why)$ for the least eigenvalue $\lambda_{|V|}(\Why)$ of
$\Why$. We will call the operator $\Why$ positive semi-definite if $\Why$ is
self-adjoint and we have $\lambda_{\min}(\Why) \ge 0$.

We recall the following fundamental theorem from linear algebra (see,
e.g.~\cite{Bhatia2013}),
\begin{theorem}[Courant-Fischer-Weyl Minimax
    Principle]\label{thm:courant-fischer} Let $\Why \in \RR^{V
    \times V}$ be a self-adjoint operator with respect to the measure
    $\pi$. Then,
    \[ \lambda_j(\Why) = \max_{ \Xx \subseteq \RR^V,\atop \dim \Xx =
        j} \min_{\vecc f \in \Xx,\atop \norm{\eff}_{\pi} = 1}
    \langle \eff, \Why \eff\rangle_{\pi}  \]
    where the minimum runs over all subspaces $\mathcal U$ of $\RR^V$ of
    dimension $k$. Further, the maximizer $\Xx$ is spanned by the top $j$
    eigenvectors of $\Why$, i.e.~there exists $\eff_1,\ldots, \eff_j$ such that
    \[ \Xx = \sspan\set*{\eff_1, \ldots, \eff_j}\]
    such that $\langle \eff_k, \eff_l\rangle_{\pi} = 0$ whenever $k \ne l$,
    $\norm{\eff_k}_{\pi} = 1$ for all $k$, and $\Why \eff_k =
    \lambda_k(\Why) \eff$.
\end{theorem}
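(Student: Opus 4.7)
The plan is to prove the minimax identity by combining the spectral decomposition from \cref{thm:spectral} with a standard dimension-counting argument, working throughout in the Hilbert space $(\RR^V,\langle\cdot,\cdot\rangle_\pi)$.

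First, I would apply \cref{thm:spectral} to obtain an orthonormal basis of eigenvectors $\why_1,\dots,\why_{|V|}$ with eigenvalues $\lambda_1(\Why)\ge\cdots\ge\lambda_{|V|}(\Why)$, orthonormal with respect to $\langle\cdot,\cdot\rangle_\pi$. Every $\eff\in\RR^V$ then admits an expansion $\eff=\sum_i c_i\why_i$ with $c_i=\langle\eff,\why_i\rangle_\pi$, and one computes $\langle\eff,\Why\eff\rangle_\pi=\sum_i \lambda_i(\Why)\,c_i^2$ and $\norm{\eff}_\pi^2=\sum_i c_i^2$. This identity is the workhorse of both directions.

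For the lower bound, I would take the candidate subspace $\Xx_\star=\sspan\{\why_1,\dots,\why_j\}$. Any unit $\eff\in\Xx_\star$ has $c_i=0$ for $i>j$, so $\langle\eff,\Why\eff\rangle_\pi=\sum_{i=1}^j\lambda_i(\Why)c_i^2\ge\lambda_j(\Why)\sum_{i=1}^j c_i^2=\lambda_j(\Why)$, showing the max is at least $\lambda_j(\Why)$; equality is achieved by $\eff=\why_j$, which identifies the maximizer.

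For the upper bound, I would let $\Xx\subseteq\RR^V$ be any $j$-dimensional subspace and set $\Yy=\sspan\{\why_j,\why_{j+1},\dots,\why_{|V|}\}$, which has dimension $|V|-j+1$. A dimension count gives $\dim\Xx+\dim\Yy=|V|+1>|V|$, so $\Xx\cap\Yy$ contains a $\pi$-unit vector $\eff$. For such $\eff$ only coefficients $c_i$ with $i\ge j$ are nonzero, hence $\langle\eff,\Why\eff\rangle_\pi=\sum_{i\ge j}\lambda_i(\Why)c_i^2\le\lambda_j(\Why)\sum_{i\ge j}c_i^2=\lambda_j(\Why)$. Thus the inner minimum over $\Xx$ is at most $\lambda_j(\Why)$, completing the proof of the identity. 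The characterisation of the optimal $\Xx$ follows by tracing equality cases: equality in the upper bound forces the minimising $\eff$ to lie in the $\lambda_j(\Why)$-eigenspace, and combined with attainment at $\Xx_\star$ this shows $\Xx_\star$ is a valid maximiser and yields the claimed orthonormal basis $\eff_1,\dots,\eff_j$.

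There is no real obstacle here: the only point requiring care is the bookkeeping with the $\pi$-weighted inner product, but since \cref{thm:spectral} already gives a $\pi$-orthonormal eigenbasis, the classical Euclidean proof transfers verbatim once one systematically uses $\langle\cdot,\cdot\rangle_\pi$ and $\norm{\cdot}_\pi$ in place of the standard ones.
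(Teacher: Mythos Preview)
The paper does not actually prove this theorem; it is only \emph{recalled} as a standard fact with a reference to \cite{Bhatia2013}. Your proposal is the standard and correct proof: the spectral decomposition reduces the quadratic form to $\sum_i \lambda_i c_i^2$, the candidate $\Xx_\star=\sspan\{\why_1,\dots,\why_j\}$ gives the lower bound, and the dimension count against $\sspan\{\why_j,\dots,\why_{|V|}\}$ gives the upper bound. There is nothing to compare against in the paper itself.
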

We note the following consequence of the Perron-Frobenius Theorem
(cf.~\cite[Theorem 8.4.4]{HornJ12}) and \cref{thm:courant-fischer},
\begin{fact}
    If $\Why \in \RR^{V \times V}$ is a row-stochastic matrix that is
    self-adjoint with respect to the measure $\pi$, then we have
    $\lambda_1(\Why) = 1$ and
    \[ \lambda_2(\Why) = \max\set*{ \inpr*{\eff, \Why \eff}_\pi : \eff \in
    \RR^V, \inpr*{\eff, \one}_\pi = 0, \norm{\eff}_\pi = 1}.\]
\end{fact}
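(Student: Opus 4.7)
The plan is to deduce the claim directly from the Perron-Frobenius theorem and the Courant-Fischer-Weyl minimax principle (\cref{thm:courant-fischer}), which have both been recalled. The entire argument amounts to checking that $\one$ is the top eigenvector of $\Why$, and then applying the variational characterization of the second eigenvalue.

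First I would observe that row-stochasticity of $\Why$ means $\Why \one = \one$, so $1$ is an eigenvalue of $\Why$ with eigenvector $\one$. Since $\Why$ is non-negative and $\one$ is a strictly positive eigenvector with eigenvalue $1$, the Perron-Frobenius theorem (as cited from \cite{HornJ12}) implies that the spectral radius of $\Why$ equals $1$ and hence all real eigenvalues of $\Why$ lie in $[-1, 1]$; together with $\Why \one = \one$ this yields $\lambda_1(\Why) = 1$. Here I am using that self-adjointness with respect to $\pi$ already guarantees the eigenvalues are real by \cref{thm:spectral}, so no appeal to the complex form of Perron-Frobenius is necessary.

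Next I would apply \cref{thm:courant-fischer} with $j = 2$. By that theorem, there is an orthonormal (with respect to $\inpr{\cdot,\cdot}_\pi$) family of eigenvectors $\eff_1, \ldots, \eff_{|V|}$ of $\Why$ with $\Why \eff_k = \lambda_k(\Why) \eff_k$. Since $\lambda_1(\Why) = 1$ is attained by $\one$, and since $\norm{\one}_\pi^2 = \sum_{v \in V} \pi(v) = 1$ so $\one$ is already $\pi$-normalized, we may take $\eff_1 = \one$. Expanding an arbitrary $\eff \in \RR^V$ orthogonal to $\eff_1 = \one$ in the orthonormal basis $\eff_2, \ldots, \eff_{|V|}$ and computing $\inpr{\eff, \Why \eff}_\pi$ gives $\sum_{k \geq 2} \lambda_k(\Why) \cdot c_k^2$ where $c_k = \inpr{\eff, \eff_k}_\pi$; subject to $\sum_k c_k^2 = 1$ this is maximized at $c_2 = 1$ with value $\lambda_2(\Why)$. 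This yields the claimed equality.

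The only mildly delicate point is the Perron-Frobenius step, since in principle one could worry whether $\Why$ is irreducible; but this is not needed because we only need the spectral radius bound $\rho(\Why) \leq \norm{\Why}_\infty = 1$, which holds without irreducibility and combined with $\Why \one = \one$ forces $\lambda_1(\Why) = 1$. Everything else is a direct substitution into previously established facts, so I do not anticipate any real obstacle.
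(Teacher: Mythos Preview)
Your proposal is correct and matches the paper's own treatment: the paper states this fact without proof, merely noting it ``is a consequence of the Perron-Frobenius Theorem and \cref{thm:courant-fischer},'' and your argument fills in precisely those details. Your observation that irreducibility is unnecessary (since the norm bound $\rho(\Why)\le \norm{\Why}_\infty = 1$ already suffices) is a nice clarification.
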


Throughout this paper, we will often deal with operators which are not
self-adjoint, and therefore we will need to deal with singular
values.  Given an operator $\Bee \in \RR^{V \times U}$ we will write
$\sigma_i(\Bee)$ for the $i$-th largest singular value of $\Bee$, i.e.~we have
$\sigma_i(\Bee) = \sqrt{\lambda_i(\Bee^*\Bee)}$.\footnote{Crucially, the
    $i$-th singular value always depends on the choice of the measures which
    define the adjoint matrix. We will mostly deal with row-stochastic
    matrices that satisfy $\pi_U \Bee = \pi_V$ for some choice of distributions
    $\pi_U, \pi_V$. The adjoint matrix $\Bee^*$ is then defined with respect to
these distributions.}
This expression is well-defined since $\Bee^*\Bee$ is
positive-semi definite and therefore $\lambda_i(\Bee^*\Bee) \ge 0$.
If $\Bee \in \RR^{V \times V}$ is
a self-adjoint square operator, by the Spectral \cref{thm:spectral} we can pick
an orthonormal basis of eigenvectors $\set{\vecc w}_{i = 1}^{|V|}$ of
$\Bee$ that satisfy,
\[ \Bee^*\Bee = \sum_{i,j =1}^{|V|} \lambda_i(\Bee)\lambda_j(\Bee) \langle \vecc
    w_i, \vecc w_j\rangle \vecc w_i \vecc w_j^* = \sum_{j=1}^{|V|}
    \lambda_j(\Bee)^2
\vecc w_i \vecc w_i^*,\]
where we have used that $\vecc w_i^* \vecc w_j = \langle \vecc w_i, \vecc
w_j\rangle_\pi = 0$ whenever $i \ne j$ and $\langle \vecc w_i, \vecc
w_i\rangle_\pi =
1$. We summarize a few consequneces of the preceding discussion, which will be
very important when dealing with non-self adjoint row-stochastic matrices $\Bee
\in \RR^{U \times V}$.

\begin{fact}\label{fac:rsig}
    Let $\Bee \in \RR^{U \times V}$ be a row-stochastic matrix satisfying
    $\pi_U \Bee = \pi_V$. Then,
    \begin{enumerate}
        \item $\sigma_1(\Bee) = 1$,
        \item $\sigma_2(\Bee) = \max\set*{ \inpr*{\eff, \Bee
                \gee}_{\pi_U} : \eff \in
                \RR^U, \gee \in \RR^V, \norm{\eff}_{\pi_U} =
                \norm{\gee}_{\pi_V} = 1, \inpr*{\eff, \one_U}_{\pi_U} =
            \inpr*{\gee, \one_V}_{\pi_V} = 0}$.
        \item $\sigma_2(\Bee)
            = \max\set*{
                \frac{\norm*{(\Bee - \one_U\pi_V) \eff}_{\pi_U}}{\norm*{\eff -
                \one_V\pi_V \eff}_{\pi_V}} : \eff \in \RR^V, \norm*{\eff -
            \one_V\pi_V \eff}_{\pi_V} \ne 0}$
    \end{enumerate}
\end{fact}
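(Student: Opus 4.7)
The plan is to prove the three items in order, extracting all of them from one computation of $\Bee^{*}\one_{U}$ together with the standard Courant--Fischer characterization applied to the self-adjoint operator $\Bee^{*}\Bee \in \RR^{V \times V}$.

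First I would establish the two identities $\Bee\one_{V}=\one_{U}$ (immediate from row-stochasticity) and $\Bee^{*}\one_{U}=\one_{V}$. The latter follows from \cref{prop:adjoint-defn}: for every $y \in V$,
\[
(\Bee^{*}\one_{U})(y) \;=\; \sum_{x\in U}\Bee^{*}(y,x) \;=\; \sum_{x\in U}\Bee(x,y)\,\frac{\pi_{U}(x)}{\pi_{V}(y)} \;=\; \frac{(\pi_{U}\Bee)(y)}{\pi_{V}(y)} \;=\; 1,
\]
using the hypothesis $\pi_{U}\Bee=\pi_{V}$. Consequently $\Bee^{*}\Bee\,\one_{V}=\one_{V}$, so $1$ is an eigenvalue of $\Bee^{*}\Bee$. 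To show it is the largest, I would verify the operator norm bound $\|\Bee\eff\|_{\pi_{U}} \le \|\eff\|_{\pi_{V}}$ by Jensen (or Cauchy--Schwarz) applied rowwise:
\[
\|\Bee\eff\|_{\pi_{U}}^{2} \;=\; \sum_{x}\pi_{U}(x)\Bigl(\sum_{y}\Bee(x,y)\eff(y)\Bigr)^{2} \;\le\; \sum_{x}\pi_{U}(x)\sum_{y}\Bee(x,y)\eff(y)^{2} \;=\; \sum_{y}\pi_{V}(y)\eff(y)^{2},
\]
where in the last step I swap sums and use $\pi_{U}\Bee=\pi_{V}$. This gives $\sigma_{1}(\Bee)\le 1$, and combined with the eigenvalue $1$ witnessed by $\one_{V}$ yields item (1).

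For item (2), the operator $\Bee^{*}\Bee$ is self-adjoint with respect to $\pi_{V}$ and has $\one_{V}$ as a top eigenvector, so \cref{thm:courant-fischer} gives
\[
\sigma_{2}(\Bee)^{2} \;=\; \lambda_{2}(\Bee^{*}\Bee) \;=\; \max\bigl\{\|\Bee\gee\|_{\pi_{U}}^{2} : \gee\in\RR^{V},\ \|\gee\|_{\pi_{V}}=1,\ \langle \gee,\one_{V}\rangle_{\pi_{V}}=0\bigr\}.
\]
Then I would rewrite $\|\Bee\gee\|_{\pi_{U}}$ using Cauchy--Schwarz duality as $\sup_{\|\eff\|_{\pi_{U}}=1}\langle \eff,\Bee\gee\rangle_{\pi_{U}}$, and argue that restricting to $\eff \perp \one_{U}$ does not change the supremum: if $\eff = c\one_{U}+\eff_{0}$ with $\eff_{0}\perp\one_{U}$, then $\langle \one_{U},\Bee\gee\rangle_{\pi_{U}} = \langle \Bee^{*}\one_{U},\gee\rangle_{\pi_{V}} = \langle \one_{V},\gee\rangle_{\pi_{V}} = 0$ by the orthogonality constraint on $\gee$, so the $c$ component contributes nothing. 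This yields the two-sided variational formula of item (2).

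Finally, for item (3) I would perform the change of variables $\bar\eff = \eff - \one_{V}\,\langle \eff,\one_V\rangle_{\pi_V} = \eff - \one_{V}\pi_{V}\eff$. Note that $\bar\eff \perp \one_{V}$ in $\pi_{V}$, that $\|\eff - \one_{V}\pi_{V}\eff\|_{\pi_{V}} = \|\bar\eff\|_{\pi_{V}}$, and using $\Bee\one_{V}=\one_{U}$ from the first step,
\[
\Bee\bar\eff \;=\; \Bee\eff - (\Bee\one_{V})\,\pi_{V}\eff \;=\; \Bee\eff - \one_{U}\pi_{V}\eff \;=\; (\Bee - \one_{U}\pi_{V})\eff.
\]
Hence the ratio in item (3) equals $\|\Bee\bar\eff\|_{\pi_{U}}/\|\bar\eff\|_{\pi_{V}}$ over nonzero $\bar\eff\perp\one_{V}$, and by item (2) its supremum is $\sigma_{2}(\Bee)$. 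The only genuine step is the adjoint computation $\Bee^{*}\one_{U}=\one_{V}$; everything else is a routine application of Courant--Fischer and the identity $\Bee\one_{V}=\one_{U}$, so I do not expect any real obstacle.
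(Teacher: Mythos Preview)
The paper does not supply a proof of this fact; it presents it as a standard consequence of the spectral theorem and Courant--Fischer, stated immediately after those results. Your argument is correct and is exactly the kind of derivation the paper is implicitly invoking: identify $\one_{V}$ as the top eigenvector of $\Bee^{*}\Bee$ via the computation $\Bee^{*}\one_{U}=\one_{V}$, bound the operator norm by Jensen to pin down $\sigma_{1}=1$, read off the variational formula for $\lambda_{2}(\Bee^{*}\Bee)$ from \cref{thm:courant-fischer}, and then do the centering change of variables for item~(3). There is nothing to add.
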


A matrix $\Quu \in \RR^{V \times V}$ is called an \underline{orthogonal
projection} with
respect to $\pi$ if it
is self adjoint with respect to $\pi$, i.e.~$\Quu = \Quu^*$ and it
satisfies the equation $\Quu^2 =
\Quu$. We will say that $\Quu$ is the orthogonal projection operator to a
subspace $\Uu \subset \RR^V$ if the image $\im(\Quu) = \Quu\RR^V$ equals $\Uu$.
The following result is a well-known consequence of the Spectral
\cref{thm:spectral},
\begin{fact}\label{fac:proj}
    Let $\RR^V$ be a vector space equipped with an inner-product with respect
    to the measure $\pi$. The operator $\Quu \in \RR^{V \times V}$ is
    the orthogonal projection to $\Uu$,
    if and only if there exists an orthonormal basis $\Bb = \set*{ \uuu_1,
    \ldots, \uuu_\ell}$ of $\Uu$ such that,
    \[ \Quu = \sum_{i = 1}^\ell \uuu_i \uuu_i^*.\]
\end{fact}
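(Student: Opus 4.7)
The plan is to prove both directions of the biconditional as direct applications of the Spectral Theorem (\cref{thm:spectral}) and the defining algebraic relations $\Quu = \Quu^*$, $\Quu^2 = \Quu$, $\im(\Quu) = \Uu$.

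For the forward direction, I would assume $\Quu$ is the orthogonal projection to $\Uu$. Since $\Quu$ is self-adjoint with respect to $\pi$, \cref{thm:spectral} furnishes an orthonormal basis of real eigenvectors $\why_1, \ldots, \why_{|V|}$ with real eigenvalues $\lambda_1, \ldots, \lambda_{|V|}$, and $\Quu = \sum_i \lambda_i \why_i \why_i^*$. Applying $\Quu^2 = \Quu$ to each $\why_i$ yields $\lambda_i^2 \why_i = \lambda_i \why_i$, so $\lambda_i \in \set{0, 1}$. Collecting those $\why_i$ with $\lambda_i = 1$ gives an orthonormal family $\uuu_1, \ldots, \uuu_\ell$, and $\Quu = \sum_{i=1}^\ell \uuu_i \uuu_i^*$. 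It remains to verify that $\set{\uuu_1, \ldots, \uuu_\ell}$ is a basis of $\Uu$: since $\Quu \uuu_i = \uuu_i$, each $\uuu_i$ lies in $\im(\Quu) = \Uu$; conversely any $\vvv \in \Uu$ satisfies $\vvv = \Quu \vvv = \sum_i \inpr{\uuu_i, \vvv}_\pi \uuu_i$ (using the definition $\uuu_i^*(v) = \uuu_i(v)\pi(v)$), which exhibits $\vvv$ as a $\sspan$-combination of the $\uuu_i$.

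For the backward direction, suppose $\Quu = \sum_{i=1}^\ell \uuu_i \uuu_i^*$ for an orthonormal basis $\uuu_1, \ldots, \uuu_\ell$ of $\Uu$. Self-adjointness is immediate since each rank-one summand $\uuu_i \uuu_i^*$ is self-adjoint with respect to $\pi$ (this is essentially the content of the definition $\uuu_i^*(v) = \uuu_i(v)\pi(v)$, combined with \cref{prop:adjoint-defn}). Idempotence follows from orthonormality: for any $\eff \in \RR^V$,
\[ \Quu^2 \eff = \sum_{i,j = 1}^\ell \uuu_i \uuu_i^* \uuu_j \uuu_j^* \eff = \sum_{i,j=1}^\ell \inpr{\uuu_i, \uuu_j}_\pi \cdot \inpr{\uuu_j, \eff}_\pi \uuu_i = \sum_{i=1}^\ell \inpr{\uuu_i, \eff}_\pi \uuu_i = \Quu \eff, \]
where I used $\inpr{\uuu_i, \uuu_j}_\pi = \delta_{ij}$. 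Finally, $\im(\Quu) \subseteq \Uu$ since each column is a combination of the $\uuu_i$, and conversely any $\uuu_k$ satisfies $\Quu \uuu_k = \uuu_k$, so $\Uu \subseteq \im(\Quu)$.

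There is no genuine obstacle here; the only step that requires a moment of care is bookkeeping the convention $\uuu_i^*(v) = \uuu_i(v)\pi(v)$ so that the ``outer product'' $\uuu_i \uuu_i^*$ really acts on $\eff$ by $\eff \mapsto \inpr{\uuu_i, \eff}_\pi \uuu_i$, rather than as an unweighted outer product. Once this is pinned down, everything follows from manipulation of orthonormal expansions.
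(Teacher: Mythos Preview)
Your proof is correct and complete. The paper does not actually give a proof of this fact; it simply states it as ``a well-known consequence of the Spectral \cref{thm:spectral}'', which is exactly the route you take for the forward direction, and your backward direction is the standard verification.
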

Finally, we will recall some results concerning the angles between subspaces
and products of projection matrices.
\begin{defn}\label{def:cos}
    Let $\Uu, \Vv \subset \RR^{\Omega}$ equipped with the
    inner-product $\pi$. The cosine of the angle between $\Uu$ and
    $\Vv$ is defined to be,
    \[\cos(\Uu, \Vv) = \max\set*{ \inpr*{\uuu, \vvv}_\pi : \uuu \in
            \Uu \cap (\Uu \cap \Vv)^\perp, \vvv \in (\Uu \cap \Vv)^\perp,
    \norm{\uuu}_\pi = \norm{\vvv}_\pi = 1  }, \]
\end{defn}
\begin{theorem}[\cite{SmithSW77}]\label{thm:prprod}
    Let $\Quu_1, \ldots, \Quu_p \in \RR^{\Omega \times \Omega}$ be projection
    operators to sub-spaces $\Uu_1, \ldots, \Uu_p$. Set $\Vv_j = \bigcap_{i =
    1}^j \Uu_i$. Writing $\Quu_\star$ for the projection to $\Vv_p$, we have
    \[ \norm*{\Quu_1 \cdots \Quu_p \eff - \Quu_\star \eff}_\pi^2 \le
        \parens*{1 - \prod_{j = 2}^p \sin^2(\Uu_j, \Vv_{j-1}) } \cdot
    \norm*{\eff - \Quu_\star \eff}^2_\pi \]
    where $\sin^2(\Uu, \Vv) = 1 - \cos^2(\Uu, \Vv)$.
\end{theorem}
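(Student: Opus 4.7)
The plan is to reduce to an operator-norm bound on $\Vv_p^{\perp}$ and then chain per-step Friedrichs-angle contractions. Because $\Vv_p \subseteq \Uu_j$ for every $j$, each $\Quu_j$ fixes $\Vv_p$ pointwise, so $\Quu_j \Quu_\star = \Quu_\star \Quu_j = \Quu_\star$ and therefore $(\Quu_1 \cdots \Quu_p - \Quu_\star)\eff = (\Quu_1 \cdots \Quu_p)(\eff - \Quu_\star \eff)$. Setting $\gee := \eff - \Quu_\star \eff \in \Vv_p^{\perp}$ and noting that $(\Quu_1 \cdots \Quu_p)^* = \Quu_p \cdots \Quu_1$ has the same operator norm on $\Vv_p^{\perp}$, it suffices to establish
\[
    \norm*{\Quu_p \Quu_{p-1} \cdots \Quu_1 \gee}_\pi^2 \;\le\; \parens*{1 - \prod_{j=2}^{p} \sin^2(\Uu_j, \Vv_{j-1})} \cdot \norm*{\gee}_\pi^2.
\]
With the reversed ordering the iteration index $j$ is aligned with the nested intersection $\Vv_j$.

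I would set $\gee_0 := \gee$ and $\gee_j := \Quu_j \gee_{j-1} \in \Uu_j$. Self-adjointness of each $\Quu_i$ together with $\Vv_p \subseteq \Uu_i$ gives $\gee_j \perp \Vv_p$ at every step, and the Pythagorean identity (using $(I-\Quu_j)\gee_{j-1} \in \Uu_j^{\perp}$ and $\gee_j \in \Uu_j$) telescopes to
\[
    \norm*{\gee}_\pi^2 - \norm*{\gee_p}_\pi^2 \;=\; \sum_{j=1}^{p} \norm*{(I-\Quu_j)\gee_{j-1}}_\pi^2.
\]
The core geometric input is an immediate rephrasing of \cref{def:cos}: for any $\vvv \in \Vv_{j-1}$ orthogonal to $\Vv_{j-1} \cap \Uu_j = \Vv_j$, one has $\norm*{(I-\Quu_j)\vvv}_\pi \ge \sin(\Uu_j, \Vv_{j-1}) \cdot \norm*{\vvv}_\pi$.

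To assemble these per-step bounds into the product $\prod_{j=2}^{p} \sin^2(\Uu_j, \Vv_{j-1})$, I would track, alongside $\gee_j$, the auxiliary vectors $\vecc h_j := \Proj(\Vv_j)\gee_j$. Using $\Vv_j \subseteq \Uu_j$ one checks $\vecc h_j = \Proj(\Vv_j)\gee_{j-1}$, and combined with $\Vv_j \subseteq \Vv_{j-1}$ this gives $\vecc h_j = \Proj(\Vv_j)\vecc h_{j-1}$; hence the increments $\vecc h_{j-1} - \vecc h_j$ lie in $\Vv_{j-1} \cap \Vv_j^{\perp}$, exactly where the Friedrichs inequality above applies. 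Moreover, $\vecc h_1 = \gee_1$ while $\vecc h_p = 0$ (since $\gee_p \perp \Vv_p$), so $\{\vecc h_{j-1} - \vecc h_j\}_{j=2}^{p}$ is a pairwise-orthogonal decomposition of $\gee_1$. A short inductive bookkeeping combining these $p-1$ Friedrichs estimates with the telescoping Pythagoras identity delivers the claimed product.

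The main obstacle is that the iterates $\gee_j$ live in the individual $\Uu_j$ rather than in the nested intersections $\Vv_{j-1}$ to which the Friedrichs angles refer, so the inequality does not act on $\gee_{j-1}$ directly. The device of passing to $\vecc h_j := \Proj(\Vv_j)\gee_j$ is precisely what bridges this gap, and a careful induction (rather than a naive sum-to-product step, which would lose a multiplicative factor of $p$) is needed to correctly recombine the per-step contractions. As a sanity check, the $p=2$ case has $\Vv_1 = \Uu_1$ and collapses to $\norm{\Quu_1 \Quu_2 \gee}_\pi \le \cos(\Uu_1, \Uu_2) \norm{\gee}_\pi$ for $\gee \perp \Uu_1 \cap \Uu_2$, which follows in one line from $\Quu_2 \gee \in \Uu_2 \cap (\Uu_1 \cap \Uu_2)^{\perp}$ together with \cref{def:cos}.
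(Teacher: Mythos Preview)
The paper does not supply its own proof of \cref{thm:prprod}; the result is quoted from \cite{SmithSW77} (it also appears earlier as \cref{thm:smiths}) and is used as a black box in the proof of \cref{thm:csv}. There is therefore nothing in the paper to compare your argument against.

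For what it is worth, your outline tracks the standard Smith--Solmon--Wagner strategy: pass to $\Vv_p^{\perp}$, iterate $\gee_j=\Quu_j\gee_{j-1}$, use the Pythagorean telescoping $\norm{\gee}^2-\norm{\gee_p}^2=\sum_j\norm{(I-\Quu_j)\gee_{j-1}}^2$, and control the terms via the Friedrichs-angle inequality on $\Vv_{j-1}\cap\Vv_j^{\perp}$. The auxiliary sequence $\vecc h_j=\Proj(\Vv_j)\gee_j$ and the identities $\vecc h_j=\Proj(\Vv_j)\vecc h_{j-1}$, $\vecc h_1=\gee_1$, $\vecc h_p=0$ are exactly the bookkeeping devices used in the original proof. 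The one place where your sketch is genuinely thin is the phrase ``short inductive bookkeeping'': the Friedrichs bound applies to $(I-\Quu_j)\vecc h_{j-1}$, whereas the telescoping sum contains $(I-\Quu_j)\gee_{j-1}$, and $\gee_{j-1}\notin\Vv_{j-1}$ in general for $j\ge 3$. Bridging this requires carrying an inductive inequality of the form $\norm{\gee}^2-\norm{\gee_j}^2\ge\bigl(\prod_{k=2}^{j}s_k^2\bigr)\bigl(\norm{\gee}^2-\norm{\vecc h_j}^2\bigr)$ (which specializes at $j=p$ to the claim since $\vecc h_p=0$), and the inductive step uses both the Friedrichs bound and the slack coming from $s_j^2\le 1$ on the previously accumulated terms. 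That step is the actual content of the SSW argument, so if you flesh it out your proof will be complete and essentially the classical one.
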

\subsection{Probability Distributions and Relative Entropy}
Let $\Omega$ be a finite set. We will write $\triangle_\Omega$ for the
\underline{probability simplex} with vertices $\Omega$, i.e.~
\[ \triangle_\Omega := \set*{ \mu: \Omega \to \RR_{\ge 0} : \sum_{\omega \in
\Omega} \mu(\omega) = 1}.\]
We will adopt the convention of treating elements of $\mu \in
\triangle_{\Omega}$
as row-vectors. For $\mu \in \triangle_\Omega$, we will write
$\supp(\mu) = \set*{ \omega \in
\Omega: \mu(\omega) > 0}$ for the \underline{support} of $\mu$.
Throughout the paper,
$\Omega$ (or $X$) will always be a set of $n$-tuples for some $n \ge
1$, i.e.~there
exists a finite sequence of sets $\Omega_1, \ldots, \Omega_n$ such that $\Omega
\subset \Omega_1 \times \cdots \times \Omega_n$. Given a set $S \subset [n]$,
we will write $\Omega[S]$ for the \underline{projection} of $\Omega$
to the coordinates in
$S$, i.e.~
\[ \Omega[S] = \set*{ (\omega_s)_{s \in S} : (\omega_1, \ldots, \omega_n) \in
\Omega}.\]
The \underline{$S$-marginal distribution} of $\mu \in \Omega$ is defined by,
\begin{equation}\mu_S( \omega_S ) = \sum_{\bar\omega \in
    \Omega[{S^c}] } \mu(\omega_S \oplus
    \bar\omega)~~\textrm{for all}~~\omega_S \in \Omega[S],\label{eq:m-def}
\end{equation}
where $S^c = [n] \setminus S$.

Let $\omega_S \in \Omega_S$. We will write $\Omega_{\omega_S}$ and
$\mu^{(\omega_S)}$ for the
\underline{$\omega_S$-pinning} of $\Omega$ and $\mu_S$, where
\begin{equation}\Omega_{\omega_S} = \set*{\bar\omega \in \Omega[S^c]
        : \omega_S \oplus
    \bar\omega \in \Omega}~~\textrm{and}~~\mu^{(\omega_S)}(\bar\omega) =
    \frac{\mu(\omega_S \oplus \bar\omega)}{\sum_{\tilde \omega \in \Omega[S^c]}
    \mu(\omega_S \oplus \tilde\omega)},\label{eq:p-def}
\end{equation}
for all $\omega_S \in \Omega[S]$. The following relation between the
$S$-marginal $\mu_S$ and $\omega_S$-pinning
is immediate from the definitions. Let, $S, T \subset [n]$ such that $S \cap T
= \varnothing$,
\begin{equation}
    \mu(\omega_S \oplus \omega_{T}) = \mu_S(\omega_S) \cdot
    \mu_T^{(\omega_S)}(\omega_T)\label{eq:b-rule}
\end{equation}
We note that the statement of \cref{eq:b-rule} is a simple
consequence of the law of
conditional probability (Bayes' rule).

The \underline{Kullback-Leibler divergence} $\Div(\mu~\|~\nu)$ between two
distributions $\mu, \nu \in \triangle_\Omega$ is defined as,
\begin{equation}\label{eq:kl-def}
    \Div(\mu~\|~\nu) = \Exp_{x \sim \nu}\sqbr*{ \frac{\mu(x)}{\nu(x)}
    \log\frac{\mu(x)}{\nu(x)}}\tag{KL divergence}
\end{equation}
where we have adopted the convention $0/0 = 1$ and assumed $\supp(\mu) \subset
\supp(\nu)$.

It is a well-known fact that $\Div(\bullet~\|~\bullet)$ is monotone-decreasing
under the application of a row-stochastic matrix,
\begin{fact}[Data Processing Inequality]\label{fac:dpi}
    Let $\Emm \in \RR^{\Omega \times \Omega'}$ be any row-stochastic
    operator. Then, for any
    pair of distributions $\mu, \nu \in \Delta_\Omega$ we have,
    \[ \Div(\mu\Emm ~\|~\nu \Emm) \le \Div(\mu~\|~\nu).\]
\end{fact}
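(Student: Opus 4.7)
The plan is to prove the inequality via convexity of the scalar function $\Phi(t) = t \log t$ on $\RR_{\ge 0}$ (with the convention $\Phi(0) = 0$), applied pointwise on the output space $\Omega'$. This is the classical ``log-sum inequality'' argument, and once the correct convex combination is identified, everything reduces to a single application of Jensen's inequality and a swap of summations using the row-stochasticity of $\Emm$.

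First I would rewrite
\[ \Div(\mu\Emm \mid\mid \nu\Emm) \;=\; \sum_{y \in \Omega'} (\nu\Emm)(y)\, \Phi\!\left( \frac{(\mu\Emm)(y)}{(\nu\Emm)(y)} \right), \]
assuming without loss of generality that $\supp(\mu) \subset \supp(\nu)$, which by the transport via $\Emm$ implies $\supp(\mu\Emm) \subset \supp(\nu\Emm)$, so every denominator below is positive. The key identity to set up is
\[ \frac{(\mu\Emm)(y)}{(\nu\Emm)(y)} \;=\; \sum_{x \in \Omega} \frac{\nu(x)\,\Emm(x,y)}{(\nu\Emm)(y)} \cdot \frac{\mu(x)}{\nu(x)}, \]
which exhibits the left-hand side as a convex combination of the ratios $\mu(x)/\nu(x)$ with weights $w_{y}(x) := \nu(x)\Emm(x,y)/(\nu\Emm)(y)$ that indeed sum to $1$ over $x \in \Omega$ for each fixed $y$.

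Next, I apply Jensen's inequality for the convex function $\Phi$ to each $y$:
\[ \Phi\!\left( \frac{(\mu\Emm)(y)}{(\nu\Emm)(y)} \right) \;\le\; \sum_{x \in \Omega} w_{y}(x)\, \Phi\!\left( \frac{\mu(x)}{\nu(x)} \right). \]
Multiplying through by $(\nu\Emm)(y)$ absorbs the weight denominators and yields
\[ (\nu\Emm)(y)\, \Phi\!\left( \frac{(\mu\Emm)(y)}{(\nu\Emm)(y)} \right) \;\le\; \sum_{x} \nu(x)\,\Emm(x,y)\, \Phi\!\left( \frac{\mu(x)}{\nu(x)} \right). \]
Summing over $y \in \Omega'$, swapping the order of summation, and invoking row-stochasticity $\sum_{y} \Emm(x,y) = 1$, the right-hand side collapses to $\sum_{x} \nu(x)\, \Phi(\mu(x)/\nu(x)) = \Div(\mu \mid\mid \nu)$, which is exactly the claim.

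There is no real obstacle here; the only thing to be careful about is the boundary conventions $0/0 = 1$ and $\Phi(0) = 0$, and the support hypothesis, all of which are handled by the standing conventions in the paper. I would note in passing that the same template proves the monotonicity of any $\Phi$-divergence $\Div_\Phi(\mu \mid\mid \nu) = \Exp_{x \sim \nu}[\Phi(\mu(x)/\nu(x))]$ under row-stochastic maps, whenever $\Phi$ is convex, which foreshadows the remark following \cref{t:eca} that the same framework extends to general $\Phi$-divergences.
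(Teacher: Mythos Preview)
Your proof is correct; this is the standard log-sum/Jensen argument for the data processing inequality, and all the steps are sound. Note, however, that the paper does not actually give a proof of \cref{fac:dpi}: it is stated as a well-known fact and simply used later (in the derivation of \cref{eq:hotd}). So there is no ``paper's own proof'' to compare against. Your argument would serve perfectly well if the authors had chosen to include one, and your closing remark about general $\Phi$-divergences is indeed consonant with the paper's own remark following \cref{t:eca}.
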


We also recall the following consequence of the chain-rule for the
\ref{eq:kl-def},
\begin{fact}\label{fac:cr}
    Let $n \ge 0$ finite sets $\Omega_1, \ldots, \Omega_n$ be given. Suppose
    the set $\Omega \subset \Omega_1 \times \cdots \times \Omega_n$ and the
    distribution $\mu \in \triangle_{\Omega}$ are arbitrary. Then, writing $S^c
    = [n] \setminus S$, we have
    \[\Div(\mu~\|~\nu) =  \Div(\mu_S~\|~\nu_S) + \Exp_{\omega_S \sim \mu_S}
    \Div(\mu_{S^c}^{(\omega_S)}~\|~\nu_{S^c}^{(\omega_S)}). \]
\end{fact}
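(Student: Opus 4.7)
The plan is to prove this standard chain rule by direct computation, expanding the definition of KL divergence and decomposing each joint probability into a marginal times a pinning via Bayes' rule (\cref{eq:b-rule}). First I would unfold the definition \ref{eq:kl-def} into the explicit sum
\[ \Div(\mu~\|~\nu) = \sum_{\omega \in \Omega} \mu(\omega) \log\frac{\mu(\omega)}{\nu(\omega)},\]
and write each $\omega \in \Omega$ in the form $\omega_S \oplus \omega_{S^c}$ with $\omega_S \in \Omega[S]$ and $\omega_{S^c} \in \Omega_{\omega_S}$, so that the single sum becomes the double sum over these two components.

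Next, I would apply \cref{eq:b-rule} to both $\mu$ and $\nu$ to factorize
\[ \mu(\omega_S \oplus \omega_{S^c}) = \mu_S(\omega_S) \cdot \mu_{S^c}^{(\omega_S)}(\omega_{S^c}), \qquad \nu(\omega_S \oplus \omega_{S^c}) = \nu_S(\omega_S) \cdot \nu_{S^c}^{(\omega_S)}(\omega_{S^c}). \]
Substituting this inside the logarithm and using $\log(ab) = \log a + \log b$, the integrand splits into the sum of $\log(\mu_S(\omega_S) / \nu_S(\omega_S))$ and $\log(\mu_{S^c}^{(\omega_S)}(\omega_{S^c}) / \nu_{S^c}^{(\omega_S)}(\omega_{S^c}))$. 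This split is the only nontrivial content of the argument, and it is essentially forced by the factorization of the joint density.

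I would then distribute the double sum across the two terms. For the first term, the inner sum over $\omega_{S^c}$ of $\mu_{S^c}^{(\omega_S)}(\omega_{S^c})$ is $1$ by the definition of the pinning in \cref{eq:p-def}, so the remaining outer sum is exactly $\Div(\mu_S~\|~\nu_S)$ by \cref{eq:m-def}. For the second term, the outer weighting by $\mu_S(\omega_S)$ is precisely the expectation $\Exp_{\omega_S \sim \mu_S}[\cdot]$, and the inner sum is the definition of $\Div(\mu_{S^c}^{(\omega_S)}~\|~\nu_{S^c}^{(\omega_S)})$. Adding the two gives the claimed identity.

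There is no real obstacle here beyond bookkeeping; the only subtlety is the convention $0/0 = 1$ together with the hidden assumption $\supp(\mu) \subset \supp(\nu)$, which must be verified to carry over to the pair $(\mu_S, \nu_S)$ and to $(\mu_{S^c}^{(\omega_S)}, \nu_{S^c}^{(\omega_S)})$ for every $\omega_S \in \supp(\mu_S)$. Both are immediate from the factorization: if $\mu_S(\omega_S) > 0$ then there exists $\omega_{S^c}$ with $\mu(\omega_S \oplus \omega_{S^c}) > 0$, hence $\nu(\omega_S \oplus \omega_{S^c}) > 0$, forcing $\nu_S(\omega_S) > 0$; an identical argument handles the pinnings. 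With this checked, the algebraic identity above is the entire proof.
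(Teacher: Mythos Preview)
Your proof is correct and is the standard direct computation; the paper itself does not supply a proof for this statement, merely recording it as a known fact (the chain rule for KL divergence). There is nothing to compare against, and your argument is exactly the expected one.
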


\subsection{Random Walks and Mixing Times}\label{ss:rws}

The \ref{eq:mix} $T(\ee, \Pii)$ of the random walk operator $\Pii \in \RR^{V
\times V}$ is defined
to be the least time step where the distribution of the random walk
is $\ee$-close to the stationary distribution $\pi$ of $\Pii$ in the
$\ell_1$ distance,
i.e.~$\pi^{(t)}$ for the distribution of the random walk after $t$-steps, we
have
\[ T(\ee, \Pii) = \min\set*{ t \in \NN_{\ge 0}: \norm{\pi^{(t)} -
        \pi}_{\ell_1} \le
\ee}.\]
We will write $\Aye(x, \bullet)$ for the $x$-th row of the matrix $\Aye$.
Recalling that the distribution of the random walk $\Pii$ starting from $x$
after $t$ steps is given by $\pi^{(t)} = \Pii^t(x, \bullet)$, we formally define
\begin{equation}
    T(\ee, \Pii) = \min\set*{ t \in \NN_{\ge 0} : \norm{\Pii^t(x, \bullet) -
    \pi^\top}_{\ell_1}\le \ee \textrm{ for all }x \in V}.
    \label{eq:mix}\tag{mixing time}
\end{equation}

We define the \underline{spectral gap} $\Gap(\Pii)$ of the random
walk $\Pii \in \RR^{\Omega
\times \Omega}$ with stationary distribution $\pi$ as,
\begin{equation} \Gap(\Pii) = 1 - \sigma_2(\Pii)\label{eq:gap-def}\tag{spectral
    gap}
\end{equation}
The following mixing time bound is well-known, see for example
\cite[Proposition 1.12]{MontenegroT05}.
\begin{theorem}[Spectral Mixing Time Bound]\label{thm:spec-mix-bd}
    Let $\Pii \in \RR^{V \times V}$ be a random walk matrix $\Pii$
    with stationary distribution $\pi$. One has,
    \[ T(\ee, \Pii) \le \left\lceil \frac{\log\parens*{\parens*{\ee
                    \cdot \sqrt{\min_{x \in V}
        \pi(x)}}^{-1}}}{\log(1/\sigma_2(\Pii))}\right\rceil\le
        \left\lceil\frac{\log\parens*{\parens*{\ee
    \cdot \sqrt{\min_{x \in V} \pi(x)}}^{-1}}}{\Gap(\Pii)}\right\rceil\]
    where $\Gap(\Pii)$ is the \ref{eq:gap-def} of the operator $\Pii$.
\end{theorem}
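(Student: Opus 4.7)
The plan is to reduce the $\ell_1$ mixing bound to an $\ell_2$ (i.e.\ $\pi$-norm) contraction estimate for the density-ratio of the walk, and then read off the latter from Fact~\ref{fac:rsig}. Fix a starting vertex $x \in V$ and write $\mu^{(0)} = \one_x^\top$ and $\mu^{(t)} = \mu^{(0)} \Pii^t$. Introduce the density ratio $\hh_t \in \RR^V$ defined by $\hh_t(y) = \mu^{(t)}(y)/\pi(y)$, so that $\Exp_\pi[\hh_t] = 1$. A direct computation using \cref{prop:adjoint-defn} gives $\hh_{t+1} = \Pii^* \hh_t$, so $\hh_t - \one = (\Pii^*)^t (\hh_0 - \one)$, where we used that $\Pii^*$ is row-stochastic (\cref{fac:reversal}) and hence fixes $\one$.

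The first step is the $\ell_1$-to-$\ell_2$ comparison: by Cauchy--Schwarz with respect to the inner product $\inpr{\cdot,\cdot}_\pi$,
\[
\norm*{\mu^{(t)} - \pi}_{\ell_1} = \Exp_{y \sim \pi}\sqbr*{\Abs*{\hh_t(y) - 1}} \le \norm*{\hh_t - \one}_\pi.
\]
The second step is to note that $\hh_0 - \one$ is $\pi$-orthogonal to $\one$ (it has $\pi$-mean zero), and that the subspace $\one^\perp \subset \RR^V$ is invariant under $\Pii^*$: indeed, for $\ff \in \one^\perp$ we have $\inpr{\Pii^* \ff, \one}_\pi = \inpr{\ff, \Pii \one}_\pi = \inpr{\ff, \one}_\pi = 0$. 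Combined with the variational characterization in \cref{fac:rsig}(2) applied to $\Pii^*$, this yields $\norm{\Pii^* \ff}_\pi \le \sigma_2(\Pii^*) \norm{\ff}_\pi$ for every $\ff \in \one^\perp$, and $\sigma_2(\Pii^*) = \sigma_2(\Pii)$ by definition of the singular values. Iterating,
\[
\norm*{\hh_t - \one}_\pi \le \sigma_2(\Pii)^t \cdot \norm*{\hh_0 - \one}_\pi.
\]

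The third step is the initial-norm estimate. Since $\hh_0 = \one_x / \pi(x)$,
\[
\norm*{\hh_0 - \one}_\pi^2 = \pi(x)\parens*{\frac{1}{\pi(x)} - 1}^2 + \parens*{1 - \pi(x)} = \frac{1 - \pi(x)}{\pi(x)} \le \frac{1}{\min_{y \in V}\pi(y)}.
\]
Chaining the three displays, $\norm{\mu^{(t)} - \pi}_{\ell_1} \le \sigma_2(\Pii)^t / \sqrt{\min_y \pi(y)}$. Demanding the right-hand side to be at most $\ee$ and using $-\log \sigma_2(\Pii) \ge 1 - \sigma_2(\Pii) = \Gap(\Pii)$ gives the claimed bound
\[
T(\ee,\Pii) \le \frac{1}{\Gap(\Pii)} \cdot \log\frac{1}{\ee \cdot \sqrt{\min_{y \in V}\pi(y)}}.
\]

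\textbf{Main obstacle.} The only subtle point is the non-reversibility of $\Pii$: one cannot invoke the spectral theorem to diagonalize $\Pii$ and argue eigenvalue by eigenvalue. The workaround, and the step most worth stating carefully, is to work with $\hh_t - \one$ in $\one^\perp$ and use submultiplicativity of the operator norm of $\Pii^*\big|_{\one^\perp}$, whose value equals $\sigma_2(\Pii)$ via \cref{fac:rsig}. Everything else is a routine chain of inequalities.
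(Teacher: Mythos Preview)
Your argument is correct and is essentially the standard proof of this folklore bound. Note, however, that the paper does not supply its own proof of \cref{thm:spec-mix-bd}: it simply records the statement as well known and cites \cite[Proposition~1.12]{MontenegroT05}. So there is nothing in the paper to compare your proof against beyond the reference; your write-up is exactly the kind of $\ell_1$-to-$\ell_2(\pi)$ reduction followed by operator-norm contraction on $\one^\perp$ that one finds in that reference, and your handling of the non-reversible case via $\Pii^*$ acting on the density ratio, together with $\sigma_2(\Pii^*)=\sigma_2(\Pii)$, is the right way to avoid appealing to the spectral theorem.
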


The \underline{entropy contraction factor} $\EC(\Pii)$ of the random walk
operator $\Pii$ with stationary measure $\pi$ is defined as,
\begin{equation}
    \EC(\Pii) = 1 - \kappa(\Pii)~~\textrm{where}~~\kappa(\Pii) = \sup\set*{
        \frac{\Div(\mu\Pii~\|~\pi)}{\Div(\mu~\|~\pi)} : \mu \in
    \triangle_{\Omega}}\tag{entropy contraction factor}\label{eq:mlsc}
\end{equation}
Then, the following estimate can be used to bound the mixing time, see for
example \cite[Lemma 2.4]{BlancaCPSV21},
\begin{theorem}
    Let $\Pii \in \RR^{V \times V}$ be a random walk operator $\Pii$
    with stationary distribution $\pi$. One has,
    \[ T(\ee, \Pii) \le\frac{C}{\EC(\Pii)} \cdot \log\log \frac{1}{\ee
    \cdot {\min_{x \in V} \pi(x)}},\]
    where $\EC(\Pii)$ is the \ref{eq:mlsc} of $\Pii$ and $C$ is a
    universal
    constant independent of $(P, \pi)$.
\end{theorem}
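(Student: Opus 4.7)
The plan is to combine two classical ingredients: the definition of entropy contraction (which lets us iterate a one-step bound on KL divergence) and Pinsker's inequality (which converts KL divergence into total variation distance). The key point driving the $\log\log$ dependence on $\min_x \pi(x)$, as opposed to the $\log$ that would appear from the spectral gap bound of \cref{thm:spec-mix-bd}, is that the worst-case initial KL divergence is only logarithmic in $1/\min_x \pi(x)$.

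First, I would iterate the definition of $\EC(\Pii)$. By \ref{eq:mlsc}, for any distribution $\mu$ we have $\Div(\mu\Pii~\|~\pi) \le (1-\EC(\Pii)) \Div(\mu~\|~\pi)$, so applying this $t$ times to an arbitrary starting distribution $\mu_0$ yields
\[
\Div(\mu_0 \Pii^t~\|~\pi) \;\le\; (1-\EC(\Pii))^t \,\Div(\mu_0~\|~\pi) \;\le\; e^{-t \cdot \EC(\Pii)} \cdot \Div(\mu_0~\|~\pi).
\]
Second, I would bound the initial divergence. For the extremal case where $\mu_0 = \one_x$ is a point mass at the worst starting state $x \in V$, a direct computation gives $\Div(\one_x~\|~\pi) = \log(1/\pi(x)) \le \log(1/\min_y \pi(y))$, so this quantity never exceeds $\log(1/\min_y \pi(y))$ for any $\mu_0$.

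Third, I would apply Pinsker's inequality $\norm{\mu - \pi}_{\ell_1}^2 \le 2\,\Div(\mu~\|~\pi)$ to the distribution $\mu_0 \Pii^t$. Chaining the three bounds,
\[
\norm{\Pii^t(x,\bullet) - \pi^\top}_{\ell_1}^2 \;\le\; 2\,e^{-t\cdot\EC(\Pii)} \cdot \log\bigl(1/\min_y \pi(y)\bigr),
\]
so in order to make the right-hand side at most $\ee^2$ it suffices to take
\[
t \;\ge\; \frac{1}{\EC(\Pii)} \cdot \log\!\left(\frac{2\log(1/\min_y \pi(y))}{\ee^2}\right),
\]
which, up to the universal constant $C$, is of the claimed form $\frac{C}{\EC(\Pii)} \cdot \log\log \frac{1}{\ee \cdot \min_y \pi(y)}$.

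There is no serious obstacle here; the only mildly delicate point is the final algebraic manipulation to absorb the $\log(1/\ee^2)$ term inside the double logarithm, which is why the statement carries a universal constant $C$ rather than an explicit prefactor. I would simply verify that for $\ee \in (0,1]$ and $\min_y \pi(y) \in (0,1]$, the quantity $\log(2\log(1/\min_y \pi(y))/\ee^2)$ is bounded by a constant multiple of $\log\log(1/(\ee \cdot \min_y \pi(y)))$, which follows from elementary estimates on logarithms. This is essentially the argument given in \cite[Lemma 2.4]{BlancaCPSV21} that the statement cites.
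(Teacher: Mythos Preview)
The paper does not prove this theorem; it merely cites \cite[Lemma~2.4]{BlancaCPSV21}. Your argument is the standard one and is correct up to the last paragraph.

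However, your final algebraic claim is false: writing $p = \min_y \pi(y)$, the quantity $\log\bigl(2\log(1/p)/\ee^2\bigr)$ is \emph{not} bounded by a universal constant times $\log\log\bigl(1/(\ee p)\bigr)$ for all $\ee, p \in (0,1]$. Take $p = e^{-1}$ fixed and let $\ee \to 0$: the former grows like $2\log(1/\ee)$ while the latter grows only like $\log\log(1/\ee)$, so no constant $C$ works. The bound you actually derive,
\[
T(\ee,\Pii)\;\le\;\frac{1}{\EC(\Pii)}\left(\log\log\frac{1}{\min_y\pi(y)} + 2\log\frac{1}{\ee} + O(1)\right),
\]
is the correct and standard statement; the additive $\log(1/\ee)$ term cannot be absorbed into a double logarithm. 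The form quoted in the paper is imprecise in its $\ee$-dependence (it is fine when $\ee$ is a fixed constant, which is how the paper uses it in the introduction), so the discrepancy lies in the theorem statement rather than in your reasoning.
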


\subsection{(Partite) Simplicial Complexes}\label{sec:pc}
A \underline{simplicial complex} is a downward closed collection of
subsets of a finite set
$U$. Formally, $X \subset 2^U$ and whenever $\beta \in X$ for all
$\alpha \subset \beta$ we have
$\alpha \in X$. The \underline{rank} of a face $\alpha$ is
$|\alpha|$. Given some $j$, we
will adopt the notation $X^{(j)}$ to refer to the collection faces of $X$ of
rank $j$ and the notation $X^{(\le j)}$ to refer to the collection of
faces of $X$
of rank at most $j$, i.e.~
\[ X^{(j)} := X \cap \binom{U}{j}~~\textrm{and}~~X^{(\le j)} :=
    \bigcup_{i = 0}^j
X^{(i)}.\]
We say $X$ is a \underline{simplicial complex of rank $n$} if the
largest rank of
any face $\alpha \in X$ is $n$. We note
that by definition $X^{(0)} = \set{\varnothing}$.

We say that a simplicial complex $X$ of rank $n$ is pure, if any face
$\alpha \in
X^{(j)}$ for any $j < n$
is contained in another face $\beta \in X^{(n)}$. Equivalently, in a
pure simplicial complex the only inclusion maximal
faces are those of maximal rank. In this article, we will only deal
with pure simplicial complexes.

A rank-$n$ pure simplicial complex $X$ is called $n$-partite if we
can partition $X^{(1)}$ into disjoint sets $X[1], \ldots, X[n]$ such that
\[ \textrm{for all}~\beta \in X^{(n)}~\textrm{and for all}~i=1, \ldots,
    n~~\textrm{we have}~~|\beta
\cap X[i]| = 1. \tag{$n$-partiteness}\]
We will call the sets $X[1], \ldots, X[n]$ the sides of the complex
$X$. Equivalently, every element of a rank-$n$ face $\beta \in X[n]$
comes from a distinct side $X[i]$. We observe that a bipartite graph is a
2-partite simplicial complex.
\begin{remark}
    Our notation differs slightly from the preceding work, the notation $X(j)$
    in the preceding work is often used to denote what we have called
    $X^{(j+1)}$, i.e.~$X(j)$ is the set of $j$-dimensional faces. We
    have avoided this notation to
    (i) prevent potential confusion with $X[j]$ and (ii) to avoid refering to
    an $n$-partite complex as an $(n-1)$-dimensional $n$-partite complex, as
    was done in e.g.~\cite{Oppenheim18b, DiksteinD19}.
\end{remark}
To keep our nomenclature simple, we will simply
refer to a pure $n$-partite simplicial complex of rank $n$ as an
\underline{$n$-partite simplicial
complex}, i.e.~we will not consider $n$-partite complexes which are not pure.

For a face $\alpha \in X$ we introduce the notation, $\typ(\alpha) = \set*{ i
\in [n] : \alpha \cap X[i] \ne \varnothing}$ for the \underline{type}
of the face $\alpha$,
i.e.~the collection of sides of $X$ that $\alpha$ intersects.

For any $i \in [n]$ and $\beta \in X^{(n)}$ we will write $\beta_i
\in X^{(1)}$ for
the unique element of $\beta$ satisfying $\set{\beta_i} = \beta \cap X[i]$. We
will refer to $\beta_i$ as the $i$-th coordinate of $\beta$. We
will also write $\beta_T = \set*{\beta_t : t \in T}$ for all $T \subset [n]$.
We extend this notation to arbitrary faces $\alpha \in X$ and $T
\subset \typ(\alpha)$. In keeping with the view that a face $\alpha \in X$ with
$\typ(\alpha) = \set*{t_1, \ldots, t_k}$ can be represented as a tuple
$(a_{t_1}, \cdots, a_{t_k})$, we will favour the notation $\alpha \oplus
\alpha'$ to denote the union of two faces $\alpha, \alpha' \in X$
with $\typ(\alpha)
\cap \typ(\alpha') = \varnothing$ over the usual notation $\alpha \cup \alpha'$.

We observe that for facets $\beta \in X^{(n)}$, i.e.~faces of maximal rank, we
have $\typ(\beta) = [n]$. Given, $\alpha \in X$ we recall that the
\underline{link} $X_\alpha$ is
defined as,
\[ X_\alpha = \set{ (\beta \setminus \alpha) \in X : \beta \in X,
\beta \supset \alpha}.\]
The following observation is immediate,
\begin{fact}
    Let $X$ be an $n$-partite simplicial complex with
    sides $X[1], \ldots, X[n]$ and $\alpha \in X^{(j)}$ for some $j
    \in [0,n]$. Then,
    the simplicial complex $X_\alpha$ is an $(n - j)$-partite
    simplicial complex with sides $X_\alpha[j] := X[j] \cap
    X_\alpha^{(1)}$ for
    $j \in [n]\setminus \typ(\alpha)$.
\end{fact}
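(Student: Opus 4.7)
The plan is to verify the four defining properties of an $(n-j)$-partite simplicial complex in sequence: downward closure, purity of the claimed rank, well-definedness of the proposed side decomposition, and the one-element-per-side condition on facets. Throughout, every step will boil down to applying the corresponding property of $X$ to the face $\alpha \cup \gamma$, where $\gamma$ is a face of $X_\alpha$.

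For downward closure, if $\gamma \in X_\alpha$ and $\gamma' \subseteq \gamma$, then $\alpha \cup \gamma' \subseteq \alpha \cup \gamma \in X$, so $\alpha \cup \gamma' \in X$ by downward closure of $X$, whence $\gamma' = (\alpha \cup \gamma') \setminus \alpha \in X_\alpha$. For purity, given any $\gamma \in X_\alpha$ the face $\alpha \cup \gamma$ lies in $X$ and by purity of $X$ is contained in some facet $\beta \in X^{(n)}$; then $\gamma \subseteq \beta \setminus \alpha \in X_\alpha$, and $|\beta \setminus \alpha| = n - j$ since $\alpha \subseteq \beta$ and $|\alpha| = j$. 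This simultaneously bounds the rank of $X_\alpha$ by $n-j$ and shows that every face of $X_\alpha$ extends to one of rank exactly $n-j$.

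For the side decomposition, take any $u \in X_\alpha^{(1)}$; then $u \notin \alpha$ and $\alpha \cup \{u\} \in X$. Partiteness of $X$ places $u$ in a unique side $X[i]$, and I would show $i \notin \typ(\alpha)$ by extending $\alpha \cup \{u\}$ to a facet $\beta$ via purity: if $i$ were in $\typ(\alpha)$, then both $\alpha_i \in \alpha$ and $u \notin \alpha$ would lie in $\beta \cap X[i]$, contradicting $|\beta \cap X[i]| = 1$. Finally, for the one-per-side condition at top rank, let $\gamma \in X_\alpha^{(n-j)}$, so that $\alpha \cup \gamma \in X^{(n)}$ and $|(\alpha \cup \gamma) \cap X[i]| = 1$ for each $i \in [n]$; for $i \in \typ(\alpha)$ the single element is $\alpha_i$ and $\gamma \cap X[i] = \varnothing$, while for $i \in [n] \setminus \typ(\alpha)$ it necessarily lies in $\gamma$, giving $|\gamma \cap X_\alpha[i]| = 1$. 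The argument is a routine unpacking of definitions with no real obstacle; the only point worth care is invoking purity of $X$ (not merely downward closure) to produce the facet witnessing partiteness for each vertex of $X_\alpha$.
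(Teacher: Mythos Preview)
Your proof is correct and complete; each of the four verifications (downward closure, purity, the side partition, and the one-per-side condition on facets) is carried out cleanly, with the only subtle point---invoking purity of $X$ to produce a facet witnessing that no vertex of $X_\alpha$ can lie in a side already occupied by $\alpha$---handled properly. The paper itself offers no proof at all, introducing the statement with ``The following observation is immediate,'' so your argument simply fills in the routine details the authors chose to omit.
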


For $T \subset [n]$, we will also introduce the notation $X[T]$ to refer to
all faces of $X$ of type $T$, i.e.~
\[ X[T] = \set*{ \alpha \in X : \typ(\alpha) = T}.\]
Notice,
\[ X^{(n)} = X[1,\ldots, n]~~\textrm{and}~~X^{(j)} = \bigcup_{T \in
\binom{[n]}{j}} X[T].\]
\subsection{Weighted Simplicial Complexes}

A \underline{weighted simplicial} complex $(X, \pi)$ of rank $n$ is a
pure simplicial
complex of rank $n$ where $\pi
:= \pi_n$ is a probablity distribution on $X^{(n)}$ with full
support, i.e.~$\pi \in
\triangle_{X^{(n)}}$ and $\supp(\pi) = X^{(n)}$.
For $j \in [0, n-1]$, we inductively define the probability distributions
$\pi_j: X^{(j)} \to \RR$ as
\begin{equation}
    \pi_j(\alpha) = \frac{1}{j+1} \sum_{\beta \in X[j+1],\atop \beta \subset
    \alpha} \pi_{j+1}(\beta).\label{eq:onestep}
\end{equation}
The distribution $\pi_{j}(\alpha)$ can be thought as
the probability of sampling $\alpha \in X^{(j)}$ by first sampling
some $\beta \sim
\pi_{j+1}$ and then removing one of the elements of $\beta$ uniformly at random.
The following proposition generalizes this observation, and follows from a
simple inductive argument (cf.~\cite[Proposition 2.3.1]{AlevL20})
\begin{proposition}\label{prop:multstep}
    Let $(X, \pi)$ be a simplicial complex of rank $n$. For all $0 \le j
    \le k \le n$ and $\alpha \in X^{(j)}$, one has
    \[ \pi_j(\alpha) = \frac{1}{\binom{k}{j}} \sum_{\beta
    \supset\alpha,\atop \beta \in X^{(k)}} \pi_k(\beta).\]
\end{proposition}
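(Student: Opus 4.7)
The plan is to prove this by induction on the non-negative integer $k - j$, using equation \eqref{eq:onestep} as the one-step building block. The base case $k = j$ is immediate since $\binom{j}{j} = 1$ and the only face $\beta \in X^{(j)}$ containing $\alpha$ is $\alpha$ itself, reducing the claimed formula to the tautology $\pi_j(\alpha) = \pi_j(\alpha)$.

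For the inductive step, I would assume the identity for $k-1$, namely
\[ \pi_j(\alpha) = \frac{1}{\binom{k-1}{j}} \sum_{\gamma \in X^{(k-1)},\, \gamma \supset \alpha} \pi_{k-1}(\gamma), \]
and then expand each $\pi_{k-1}(\gamma)$ via \eqref{eq:onestep} as $\pi_{k-1}(\gamma) = \tfrac{1}{k} \sum_{\beta \in X^{(k)},\, \beta \supset \gamma} \pi_{k}(\beta)$. Substituting and swapping the order of summation yields a double sum over pairs $(\gamma, \beta)$ with $\alpha \subset \gamma \subset \beta$, where $\gamma \in X^{(k-1)}$ and $\beta \in X^{(k)}$.

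For each fixed $\beta \supset \alpha$ in $X^{(k)}$, the number of $\gamma \in X^{(k-1)}$ with $\alpha \subset \gamma \subset \beta$ equals the number of ways to remove one element from the set $\beta \setminus \alpha$ of size $k-j$, which is simply $k-j$. (Note that every such $\gamma$ automatically lies in $X$ by downward closure of the complex, so no hidden obstruction from the simplicial structure arises here.) Thus after interchange I obtain
\[ \pi_j(\alpha) = \frac{k-j}{k \cdot \binom{k-1}{j}} \sum_{\beta \in X^{(k)},\, \beta \supset \alpha} \pi_k(\beta), \]
and the remaining task is to check the binomial identity $\frac{k-j}{k \binom{k-1}{j}} = \frac{1}{\binom{k}{j}}$, which follows directly from the absorption relation $\binom{k}{j} = \frac{k}{k-j}\binom{k-1}{j}$.

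There is really no conceptual obstacle: the only step requiring attention is the double-counting argument, and it is crucial to verify that the combinatorial multiplicity $k-j$ is independent of the particular $\beta$ chosen, which it is precisely because each $\alpha \in X^{(j)}$ and $\beta \in X^{(k)}$ with $\alpha \subset \beta$ give $|\beta \setminus \alpha| = k-j$. Thus the entire proof reduces to one induction plus one binomial-coefficient manipulation.
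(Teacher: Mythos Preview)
Your proof is correct and is exactly the ``simple inductive argument'' the paper alludes to (the paper does not spell out the details, citing \cite[Proposition 2.3.1]{AlevL20} instead). The induction on $k-j$, the swap of summation, the count of $k-j$ intermediate faces via downward closure, and the absorption identity $\binom{k}{j} = \tfrac{k}{k-j}\binom{k-1}{j}$ are all in order.
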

Similarly, given a face $\alpha \in X^{(j)}$, we define the distribution
$\pi^{(\alpha)}$ on $X_\alpha^{(n-j)}$
by conditioning $\pi$ on the containment of $\alpha$, i.e.~for
all $\alpha' \in X_\alpha^{(n- j)}$ we have,
\begin{equation}\pi_{n- j}^{(\alpha)}(\alpha') = \frac{\pi_d(\alpha \cup
    \alpha')}{\sum_{\beta \in X^{(n)},\atop \beta \supset \alpha}
    \pi_n(\beta)} =
    \frac{\pi_d(\alpha \cup \alpha')}{\binom{n}{j} \cdot
    \pi_j(\alpha)},\label{eq:link-def}
\end{equation}
where the last part is due to \cref{prop:multstep}. Analogously, we have
\begin{proposition}\label{prop:linkdef}
    Let $(X, \pi)$ be a simplicial complex of rank $n$. Let $\alpha \in
    X^{(j)}$ and $\tau \in X_\alpha^{(\ell)}$ be faces for some $0
    \le \ell \le j \le
    n$.
    Then,
    \[ \pi_l^{(\alpha)}(\tau) = \frac{\pi_{j+\ell}(\alpha \cup
    \tau)}{\binom{j+\ell}{\ell} \cdot \pi_j(\alpha)}.\]
\end{proposition}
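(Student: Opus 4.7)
The plan is to derive the identity by combining two applications of \cref{prop:multstep}: one inside the link $X_\alpha$ (which is itself a pure simplicial complex of rank $n-j$) and one inside $X$ itself. Heuristically, $\pi_\ell^{(\alpha)}(\tau)$ is the probability that a uniformly chosen size-$\ell$ subset of a random facet of $X$ (drawn from $\pi$ conditioned to contain $\alpha$) equals $\tau$; the claim is that this can be computed directly from the rank-$(j+\ell)$ marginal $\pi_{j+\ell}$ evaluated at $\alpha \cup \tau$, up to a combinatorial factor.

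Concretely, first I would note that the link $(X_\alpha, \pi^{(\alpha)})$ is a weighted simplicial complex of rank $n-j$, where $\pi^{(\alpha)} = \pi_{n-j}^{(\alpha)}$ is the facet distribution given by \cref{eq:link-def}. Applying \cref{prop:multstep} inside $X_\alpha$ to the face $\tau \in X_\alpha^{(\ell)}$ (with target rank $n-j$) gives
\[
\pi_\ell^{(\alpha)}(\tau) \;=\; \frac{1}{\binom{n-j}{\ell}} \sum_{\alpha' \in X_\alpha^{(n-j)},\, \alpha' \supset \tau} \pi_{n-j}^{(\alpha)}(\alpha').
\]
Next I would substitute \cref{eq:link-def} for $\pi_{n-j}^{(\alpha)}(\alpha')$ and re-index the sum by the facet $\beta = \alpha \cup \alpha' \in X^{(n)}$: the map $\alpha' \mapsto \alpha \cup \alpha'$ is a bijection between facets of $X_\alpha$ containing $\tau$ and facets of $X$ containing $\alpha \cup \tau$. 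This yields
\[
\pi_\ell^{(\alpha)}(\tau) \;=\; \frac{1}{\binom{n-j}{\ell}\binom{n}{j}\pi_j(\alpha)} \sum_{\beta \in X^{(n)},\, \beta \supset \alpha \cup \tau} \pi_n(\beta).
\]

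Now I would apply \cref{prop:multstep} a second time, this time inside $X$ itself, to the face $\alpha \cup \tau \in X^{(j+\ell)}$, so that the inner sum equals $\binom{n}{j+\ell}\,\pi_{j+\ell}(\alpha \cup \tau)$. Putting it together,
\[
\pi_\ell^{(\alpha)}(\tau) \;=\; \frac{\binom{n}{j+\ell}}{\binom{n-j}{\ell}\binom{n}{j}} \cdot \frac{\pi_{j+\ell}(\alpha \cup \tau)}{\pi_j(\alpha)}.
\]
The proof is then completed by the elementary factorial identity
\[
\frac{\binom{n}{j+\ell}}{\binom{n-j}{\ell}\binom{n}{j}} \;=\; \frac{n!\,\ell!\,j!\,(n-j-\ell)!}{(j+\ell)!\,(n-j-\ell)!\,n!} \;=\; \frac{1}{\binom{j+\ell}{\ell}}.
\]

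There is no real obstacle here; the only thing to be careful about is the implicit hypothesis $j+\ell \le n$ (so that $X^{(j+\ell)}$ makes sense and $\tau$ can be extended to a facet via an element of $X_\alpha^{(n-j)}$), which is needed for the re-indexing step and is a natural reading of the statement.
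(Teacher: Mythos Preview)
Your proof is correct. The paper does not spell out a proof for this proposition; it merely writes ``Analogously, we have'' after \cref{prop:multstep} and \cref{eq:link-def}, and your argument---applying \cref{prop:multstep} once in the link $(X_\alpha,\pi^{(\alpha)})$ and once in $X$, together with \cref{eq:link-def} and the binomial identity $\binom{n}{j+\ell}=\binom{n-j}{\ell}\binom{n}{j}\binom{j+\ell}{\ell}^{-1}$---is precisely the intended derivation.
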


When $(X, \pi)$ is an $n$-partite weighted simplicial complex, for
any $T \subset [n]$, we will define $\pi_T$ as the \underline{marginal
distribution} of
$\pi$ on the coordinates in $T$ according to \cref{eq:m-def}, i.e.~
\[ \pi_T(\omega_T) = \sum_{\bar \omega \in X[T^c]} \pi(\omega_T \oplus
\bar\omega).\]
Similarly, given any $\omega_T \in X[T]$, we define the weighting
$\pi^{(\omega_T)}$ on $(X_{\omega_T}, \pi^{(\omega_T)})$ according to
\cref{eq:p-def}. Thus,
\[ \pi^{(\omega_T)}(\bar\omega) = \Pr_{\omega' \sim \pi}[ \omega'_{T^c} =
    \bar\omega \mid \omega'_T = \bar\omega]  = \frac{\pi(\omega_T \oplus
    \bar\omega)}{\pi_T(\omega_T)}~~\textrm{for all}~~\bar\omega \in
X[T^c].\]

\begin{remark}
    There is some similarity between the notations we have used for
    $\pi_i$ on $X^{(i)}$
    and for the distribution $\pi_{\set i}$ on $X[i]$. We stress
    that these distributions are not identical and that the latter will
    \textbf{always} be marked by
    the curly braces.
\end{remark}
\subsection{Link Graphs and the Influence Matrix}\label{ss:links}
Let $(X, \pi)$ be a pure weighted simplicial complex of rank $n$. For any
face $\alpha \in X^{(\le n -2)}$, the \underline{link graph} $G_\alpha =
(X_\alpha^{(1)}, X_\alpha^{(2)}, \pi^{(\alpha)}_2)$ is the weighted graph that
underlies the simplicial complex $X_\alpha$. We will write $\Emm_\alpha$ for the
(row-stochastic) random walk matrix of $G_\alpha$, i.e.~
\begin{equation}\label{eq:walkdef}
    \Emm_\alpha(x, y) = \frac{\pi_2^{(\alpha)}(x,y)}{2 \cdot
    \pi^{(\alpha)}_1(x)} = \pi_1^{(\alpha \sqcup x)}(y). \tag{random
    walk on~$G_\alpha$}
\end{equation}
Let $(X, \pi)$ be an $n$-partite simplicial
complex. We define the vectors $\pphi_{\alpha, i} \in \RR^{X_\alpha^{(1)}}$ for
all $i \in [n] \setminus \typ(\alpha)$ as follows,
\begin{equation}\pphi_{\alpha,i}(u) =
    \begin{cases} n-|\alpha|-1 & \textrm{if}~v \in X_\alpha[i],\\
        -1 &\textrm{otherwise}.
    \end{cases}\label{eq:phidef}
\end{equation}
We recall the following result of Oppenheim, \cite[Lemma
5.5]{Oppenheim18}. Our statement follows the
presentation  in \cite[Section 4]{Oppenheim18b}.

\begin{theorem}\label{thm:opartite}
    Let $(X, \pi)$ be an n-partite simplicial
    complex and $\alpha \in X^{(\le n-2)}$ a given face of rank at
    most $n-2$. Writing $\Emm_\alpha$ for the
    random-walk matrix of the empty link $G_\alpha$ of $(X, \pi)$ we have,
    \[ \Emm_\alpha \pphi_{\alpha,i} = \frac{-1}{n-|\alpha| - 1} \cdot
        \pphi_{\alpha,i}~~\textrm{for
    all}~i \in [n] \setminus \typ(\alpha),\]
    where the vectors $\pphi_{\alpha,i}$ are defined as in
    \cref{eq:phidef}. The vectors $\pphi_{\alpha,i}$ span the
    entirety of the eigenspace of
    $\Emm_\alpha$ corersponding to $-1/(n -|\alpha| - 1)$ and all the remaining
    negative eigenvalues $\lambda$ of $\Emm_\alpha$ satisfy $|\lambda| <
    1/(n-|\alpha| - 1)$.
\end{theorem}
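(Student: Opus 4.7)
My plan is to handle the two assertions separately. Since $\Emm_\alpha$ is the random walk on the empty link of $(X_\alpha, \pi^{(\alpha)})$, which is itself an $(n - |\alpha|)$-partite simplicial complex, I may assume without loss of generality that $\alpha = \varnothing$; I set $k := n - |\alpha|$ so the target eigenvalue is $-1/(k-1)$ and the vectors $\pphi_i$ are indexed by $i \in [k]$.

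The eigenvector computation is a direct check. Using \cref{eq:walkdef} together with the $k$-partite structure and \cref{prop:multstep}, for any $u \in X[s]$ the row sum $\sum_{v \in X[j]} \Emm_\varnothing(u,v)$ evaluates to $\frac{1}{k-1}$ when $j \ne s$ and to $0$ when $j = s$. Splitting into the cases $u \in X[i]$ and $u \notin X[i]$ and plugging in the explicit values $\pphi_i = k - 1$ on $X[i]$ and $\pphi_i = -1$ elsewhere, $(\Emm_\varnothing \pphi_i)(u)$ simplifies in both cases to $-\pphi_i(u)/(k-1)$.

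For the claim that the $\pphi_i$'s span the full $-1/(k-1)$-eigenspace and that all other negative eigenvalues are strictly smaller in absolute value, I would recast the problem as a positive-semidefiniteness statement. Setting $\Sigma := \Ide + (k-1) \Emm_\varnothing$, a short calculation using \cref{prop:multstep} and \cref{prop:linkdef} identifies $\frac{1}{k}\Sigma$ with the composition $\mathsf{U}\mathsf{D}$, where $\mathsf{U} \in \RR^{X^{(1)} \times X^{(k)}}$ sends a vertex $u$ to a $\pi$-weighted random facet containing it and $\mathsf{D} \in \RR^{X^{(k)} \times X^{(1)}}$ sends a facet to a uniformly chosen vertex. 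By \cref{prop:adjoint-defn}, $\mathsf{D} = \mathsf{U}^*$ with respect to the inner products induced by $\pi_1$ and $\pi$, so $\Sigma = k\,\mathsf{U}\mathsf{U}^*$ is positive semidefinite. This pins every eigenvalue of $\Emm_\varnothing$ to at least $-1/(k-1)$ and identifies the $-1/(k-1)$-eigenspace with $\ker(\mathsf{U}^*) = \ker(\mathsf{D})$.

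It then remains to identify $\ker(\mathsf{D})$ with $\operatorname{span}\{\pphi_i\}_{i \in [k]}$. Since $(\mathsf{D}\eff)(\beta) = \frac{1}{k}\sum_{u \in \beta}\eff(u)$, the kernel consists exactly of those $\eff$ whose sum over every facet vanishes. The inclusion $\operatorname{span}\{\pphi_i\} \subseteq \ker(\mathsf{D})$ is immediate, since each facet meets every side exactly once and hence $\sum_{u \in \beta} \pphi_i(u) = (k-1) - (k-1) = 0$. For the reverse inclusion, taking two facets $\beta, \beta'$ that agree in every coordinate except the $i$-th and subtracting the two vanishing sums forces $\eff(\beta_i) = \eff(\beta'_i)$; propagating this swap step through the connectivity of the link complex forces $\eff$ to be constant on each side $X[i]$, after which the sum-over-facet condition pins $\eff$ to the codimension-one subspace already spanned by the $\pphi_i$'s. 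The main obstacle is this last step: one must invoke a suitable connectivity hypothesis on the link to ensure enough facet swaps are available to propagate equality between every pair of vertices within a side, which is precisely the place where the connectivity assumption on the links used in the downstream applications (e.g.\ \cref{cor:cora}) enters.
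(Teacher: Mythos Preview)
The paper does not supply its own proof of \cref{thm:opartite}; it is quoted from Oppenheim's work \cite{Oppenheim18,Oppenheim18b}, so there is no in-paper argument to compare against. Your approach is correct and is essentially the standard one in the high-dimensional expansion literature: the eigenvector verification is a direct computation, and the identity $\Ide + (k-1)\Emm_\varnothing = k\,\mathsf{U}\mathsf{U}^*$ (with $\mathsf{U}$ the vertex-to-facet ``up'' operator and $\mathsf{D} = \mathsf{U}^*$ the averaging ``down'' operator) is precisely the Garland-type decomposition that underlies Oppenheim's argument. The positive-semidefiniteness of $\mathsf{U}\mathsf{U}^*$ then pins all eigenvalues above $-1/(k-1)$ and reduces the eigenspace question to $\ker(\mathsf{D})$.

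Your diagnosis of the connectivity requirement is also accurate: the claim that the $\pphi_i$ span the \emph{entire} $-1/(k-1)$-eigenspace genuinely fails without it (for $k=2$, a disjoint union of two edges already has a two-dimensional $(-1)$-eigenspace while $\sspan\{\pphi_1,\pphi_2\}$ is one-dimensional), so the statement as written in the paper is carrying an implicit connectivity hypothesis inherited from the cited source. One point worth tightening in your last paragraph: the single-coordinate-swap argument gives $\eff(u) = \eff(u')$ only when $u,u' \in X[i]$ share a common neighbour in $X[[k]\setminus\{i\}]$, so the graph whose connectivity you actually need is $G_{\{i\},[k]\setminus\{i\}}^{(\varnothing)}$ for each $i$, not merely the link graph $G_\varnothing$. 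This does follow from connectedness of all proper links (cf.\ \cref{prop:connected} and \cref{rem:veryconnected}), but it is a nontrivial consequence rather than an immediate one, and you should name it explicitly.
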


We note that this means that any link $\Emm_\alpha$ of an $n$-partite
simplicial complex has the trivial eigenvalues $1$ (corresponding to the vector
$\one$) and $-1/(n- |\alpha| - 1)$ (corresponding to the eigenvectors
$\pphi_i$). We also note that,
\[\Tt_\alpha := \sspan\set{\one, \pphi_i : i \in [n] \setminus \typ(\alpha)} =
\sspan\set*{ \one_{X_\alpha[i]} : i \in [n] \setminus \typ(\alpha)}.\]
In particular, $\Tt_\alpha$ is the \underline{span of the trivial
eigenvectors} of
$\Emm_\alpha$ mentioned above.

For $i=0, \ldots, n-2$, we define the parameters
\begin{equation}\gamma_i := \gamma_i(X, \pi) = \max_{\alpha \in X^{(i)}}
    \lambda_2(\Emm_\alpha)\label{eq:linkexp}
\end{equation}
as the largest second eigenvalue of any rank-$i$ face of $(X, \pi)$

Let $\vec c = (c_i)_{i= 0}^{n-2}$ be a given vector. We will call $(X, \pi)$ a
\underline{a one-sided $\vec c$-expander} if we have $\gamma_i
\le c_i$ for all $i = 0, \ldots, n-2$. We also recall the following alternative
characterization of the parameters $c_i$ due to \cite{AnariLO20, CGSV20}
according to the spectral independence framework: Recall that the
\underline{influence matrix} $\Inf_\alpha \in
\RR^{X_\alpha^{(1)} \times X_\alpha^{(1)}}$ is defined as the following matrix,
\[ \Inf_\alpha(x_i, y_j) =
    \begin{cases}
        \Pr_{\omega \sim \pi}[ \omega_j = y_j \mid \omega_i = x_i,
        \omega_{\typ(\alpha)} = \alpha] - \Pr_{\omega
        \sim \pi}[\omega_j = y_j \mid \omega_{\typ(\alpha)} = \alpha]
        & \textrm{ if } i
        \ne j,\\
        0 & \textrm{ if } i = j.
\end{cases}\]
for all $i, j \in [n] \setminus \typ(\alpha)$ and for all $x_i \in X_\alpha[i],
y_j \in X_\alpha[j]$.

We note that with our definitions for all $i \ne j$, we
have $\Inf_\alpha(x_i, y_j) = \pi_{\set j}^{(\alpha \oplus x_i)}(y_j) -
\pi_{\set j}^{(\alpha)}(y_j)$. Let $\vec c = (c_i)_{i= 0}^{n-2}$, the
distribution $\pi$ is called \underline{$\vec
c$-spectrally independent} if we have, $\lambda_{\max}(\Inf_\alpha) \le
c_{|\alpha|}$ for all $\alpha \in X^{(\le n - 2)}$.

The following equivalence result is well known,
\begin{fact}[\cite{CGSV20}]\footnote{This is an easy consequence of the proof
        of Theorem 8 in the referenced paper, though the theorem statement is
    weaker.}
    Let $(X, \pi)$ be an $n$-partite simplicial complex. Writing
    $\Tee_\alpha$ for the
    orthogonal projection to $\Tt_\alpha$, we have
    \[ (n- |\alpha| - 1) \cdot (\Ide - \Tee_\alpha) \Emm_\alpha (\Ide -
    \Tee_\alpha) = \Inf_\alpha,\]
    In particular,
    \[ \frac{\lambda_{\max}(\Inf_\alpha)}{n-|\alpha| - 1} =
    \lambda_2(\Emm_\alpha).\]
\end{fact}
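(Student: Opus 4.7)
The plan is to establish the operator identity by an entrywise calculation, and then deduce the eigenvalue correspondence from the block decomposition of $\Emm_\alpha$ with respect to the splitting $\RR^{X_\alpha^{(1)}} = \Tt_\alpha \oplus \Tt_\alpha^\perp$.

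First I would unpack the matrix entries of $\Emm_\alpha$ and $\Tee_\alpha$ using $n$-partiteness. By \cref{prop:linkdef} together with the observation that in an $n$-partite link the walk from any vertex in side $i$ lands uniformly in one of the other $n-|\alpha|-1$ sides, one obtains $\Emm_\alpha(u,v) = \pi_{\set{j}}^{(\alpha \oplus u)}(v)/(n-|\alpha|-1)$ for $u \in X_\alpha[i]$, $v \in X_\alpha[j]$, $i \ne j$, and $\Emm_\alpha(u,v) = 0$ when $i = j$. Since the side-indicators $\one_{X_\alpha[k]}$ are mutually orthogonal in $\inpr{\cdot,\cdot}_{\pi_1^{(\alpha)}}$, the projection $\Tee_\alpha$ admits the explicit matrix form $\Tee_\alpha(u,v) = \one_{X_\alpha[i]}(v) \cdot \pi_{\set{i}}^{(\alpha)}(v)$ for $u \in X_\alpha[i]$; equivalently, $(\Tee_\alpha \eff)|_{X_\alpha[i]}$ is the constant $\Exp_{v \sim \pi_{\set{i}}^{(\alpha)}}[\eff(v)]$.

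The key simplification is provided by \cref{thm:opartite}: $\Tt_\alpha$ is $\Emm_\alpha$-invariant, and since $\Emm_\alpha$ is self-adjoint with respect to $\pi_1^{(\alpha)}$, so is $\Tt_\alpha^\perp$. Consequently $\Emm_\alpha$ and $\Tee_\alpha$ commute and $\Tee_\alpha \Emm_\alpha \Tee_\alpha = \Emm_\alpha \Tee_\alpha$, so the sandwich collapses to
\[
(\Ide - \Tee_\alpha) \Emm_\alpha (\Ide - \Tee_\alpha) = \Emm_\alpha - \Emm_\alpha \Tee_\alpha.
\]
A one-line entry computation gives $(\Emm_\alpha \Tee_\alpha)(u,v) = \pi_{\set{j}}^{(\alpha)}(v) \cdot \Emm_\alpha(u, X_\alpha[j])$, which equals $\pi_{\set{j}}^{(\alpha)}(v)/(n-|\alpha|-1)$ for $i \ne j$ and $0$ otherwise. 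Subtracting from $\Emm_\alpha(u,v)$ and scaling by $n-|\alpha|-1$ recovers exactly $\pi_{\set{j}}^{(\alpha \oplus u)}(v) - \pi_{\set{j}}^{(\alpha)}(v) = \Inf_\alpha(u,v)$; the $i = j$ block is the trivial identity $0 = 0$, matching the convention that $\Inf_\alpha$ vanishes on same-side pairs.

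For the eigenvalue statement, the identity shows that $\Inf_\alpha$ vanishes on $\Tt_\alpha$ and equals $(n-|\alpha|-1)\, \Emm_\alpha|_{\Tt_\alpha^\perp}$ on the orthogonal complement, so $\lambda_{\max}(\Inf_\alpha) = (n-|\alpha|-1)\, \lambda_{\max}(\Emm_\alpha|_{\Tt_\alpha^\perp})$. By \cref{thm:opartite}, every non-trivial eigenvalue of $\Emm_\alpha$ strictly exceeds the trivial eigenvalue $-1/(n-|\alpha|-1)$, so in the regime of interest $\lambda_2(\Emm_\alpha) = \lambda_{\max}(\Emm_\alpha|_{\Tt_\alpha^\perp})$, yielding the claimed equality. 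The only delicate part of the whole argument is bookkeeping in the entrywise calculation; once the $\Tt_\alpha$-invariance of $\Emm_\alpha$ is invoked to reduce the two-sided sandwich to a one-sided subtraction, the rest is purely mechanical.
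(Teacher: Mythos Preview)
The paper does not supply its own proof of this statement: it is presented as a cited Fact from \cite{CGSV20}, with only a footnote remarking that it follows from the proof of Theorem~8 there. So there is no in-paper argument to compare against, and your proposal stands on its own.

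Your proof is correct. The reduction $(\Ide - \Tee_\alpha)\Emm_\alpha(\Ide - \Tee_\alpha) = \Emm_\alpha - \Emm_\alpha\Tee_\alpha$ via the $\Tt_\alpha$-invariance coming from \cref{thm:opartite} is exactly the right shortcut, and the entrywise computation that follows is clean and matches the definition of $\Inf_\alpha$ in the paper (recall the remark after the definition that $\Inf_\alpha(x_i,y_j) = \pi_{\set j}^{(\alpha\oplus x_i)}(y_j) - \pi_{\set j}^{(\alpha)}(y_j)$). One small remark on the eigenvalue deduction: since $\Inf_\alpha$ annihilates $\Tt_\alpha$, zero is always in its spectrum, so strictly speaking $\lambda_{\max}(\Inf_\alpha) = \max\bigl\{0,\ (n-|\alpha|-1)\,\lambda_{\max}(\Emm_\alpha|_{\Tt_\alpha^\perp})\bigr\}$. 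The equality $\lambda_{\max}(\Inf_\alpha)/(n-|\alpha|-1) = \lambda_2(\Emm_\alpha)$ thus implicitly presumes $\lambda_2(\Emm_\alpha)\ge 0$. Your phrase ``in the regime of interest'' gestures at this, but it would be cleaner to say explicitly that the edge case $\lambda_2(\Emm_\alpha)<0$ (in which the Fact as literally stated breaks) is being excluded; this is a wrinkle in the Fact itself rather than in your argument.
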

\subsection{Update Operators, Glauber Dynamics, and the Sequential Sweep}
Let $(X, \pi)$ be a pure, $n$-partite simplicial complex of rank $n$. We define
the \underline{update operators} $\Quu_1, \ldots, \Quu_n \in \RR^{X^{(n)} \times
X^{(n)}}$ as follows,
\begin{equation}\label{eq:q-def} \Quu_i(\omega, \omega') =
    \Pr_{\bar\omega \sim \pi}\sqbr*{ \bar\omega =
    \omega' \mid \bar\omega \supset \omega_{[n] \setminus i} } =
    \pi_{\set i}^{(\omega_{[n] \setminus i})}(\omega'_i).
\end{equation}
Thus, $\Quu_{i}$ can be thought as describing the random walk which
makes a transition from $\omega \in X^{(n)}$ to a random state $\omega' \in
X^{(n)}$
by only re-sampling the $i$-th coordinate from the distribution $\pi$ without
changing the remaining coordinates, $\omega_{[n] \setminus i}$.

Recall that the \underline{down-up walk} $\Pgd$ on $(X, \pi)$ -- also known as
the \underline{Glauber dynamics} -- updates a randomly sampled
coordinate at each step. Thus, we have
\[ \Pgd = \frac{1}{n} \sum_{i=1}^n \Quu_i.\]
There is a long sequence of works giving sharp estimates for eigenvalues of the
down-up walk \cite{KaufmanM17, KaufmanO18, DinurK17, DiksteinDFH18, AlevL20}.
We recall,
\begin{theorem}[\cite{AlevL20}, Theorem 3.1]
    Let $(X, \pi)$ be an $n$-partite simplicial complex Then,
    \[ \Gap(\Pgd) \ge \frac{1}{n}\prod_{i = 0}^n (1 - \gamma_i).\]
\end{theorem}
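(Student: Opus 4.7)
The approach I would take is the ``local-to-global'' inductive strategy developed in \cite{KaufmanM17, DinurK17, KaufmanO18, DiksteinDFH18} and sharpened in \cite{AlevL20}. First I would identify $\Pgd$ with the canonical down-up walk at level $n$: introduce, for each $2 \le k \le n$, the down operator $D_k \in \RR^{X^{(k)} \times X^{(k-1)}}$ sending a face to a uniformly random sub-face of one size smaller, and let $U_{k-1}$ be its adjoint with respect to $\pi_k$ and $\pi_{k-1}$. A direct unfolding shows that $D_n U_{n-1}$ acts by removing a uniformly random coordinate of $\omega \in X^{(n)}$ and resampling it according to $\pi_{\{i\}}^{(\omega_{[n]\setminus i})}$, so $D_n U_{n-1} = \Pgd$. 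Since $D_n U_{n-1}$ and $U_{n-1} D_n$ share their nonzero spectrum, it is enough to bound $\lambda_2(U_{n-1} D_n)$.

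I would then prove by induction on $k$, running upward from $k=2$, the stronger claim
\begin{equation*}
  \lambda_2(U_{k-1} D_k) \;\le\; 1 - \frac{1}{k}\prod_{i=0}^{k-2}(1-\gamma_i).
\end{equation*}
The base case $k=2$ is immediate: a direct computation gives $U_1 D_2 = \tfrac{1}{2}\Ide + \tfrac{1}{2}\Emm_\varnothing$, and the bound $\lambda_2(\Emm_\varnothing) \le \gamma_0$ is the very definition of local spectral expansion at level $0$. For the inductive step, the key identity I would exploit is the Kaufman-Oppenheim-style decomposition
\begin{equation*}
  U_{k-1} D_k \;=\; \tfrac{1}{k}\,\Ide \;+\; \tfrac{k-1}{k}\, M_{k-1},
\end{equation*}
where $M_{k-1} \in \RR^{X^{(k-1)} \times X^{(k-1)}}$ is the walk that removes a uniformly random element of $\tau$ (producing some $\alpha \in X^{(k-2)}$) and then takes one step of the link-graph walk $\Emm_\alpha$ inside $X_\alpha$. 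The identity is a direct Bayes computation: after adding a random element via $U_{k-1}$, with probability exactly $\tfrac{1}{k}$ we remove it again via $D_k$. The non-trivial spectrum of $M_{k-1}$ is then handled by a Garland-style localization that recursively relates it to up-down walks on the rank-$(n-1)$ links $X_{\{u\}}$ for $u \in X^{(1)}$; the inductive hypothesis applied inside these links furnishes the factor $\prod_{i=1}^{k-2}(1-\gamma_i)$, while the factor $(1-\gamma_0)$ enters from the base level.

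The main obstacle, in my view, is the careful bookkeeping in the inductive step. One must (i) verify the identity $U_{k-1} D_k = \tfrac{1}{k}\Ide + \tfrac{k-1}{k} M_{k-1}$ with the correct $\pi$-weightings between adjacent levels, and (ii) control $\lambda_2(M_{k-1})$ by a $\pi$-weighted \emph{average} of second eigenvalues of link walks, rather than their maximum, so that the telescoping product structure is preserved. A naive application of the maximum would lose a multiplicative factor of $k-1$ per level and destroy the tight bound; the averaging step is where the Garland argument does real work. Once the inductive bound is established at $k=n$, the identity of nonzero spectra between $U_{n-1}D_n$ and $D_n U_{n-1} = \Pgd$ delivers the stated lower bound on $\Gap(\Pgd)$.
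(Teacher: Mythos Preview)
This theorem is not proved in the present paper; it is quoted from \cite{AlevL20} as a known result, without proof. Your proposal correctly reconstructs the argument of that reference: the identification $\Pgd = D_n U_{n-1}$, the decomposition $U_{k-1}D_k = \tfrac{1}{k}\Ide + \tfrac{k-1}{k}M_{k-1}$, and the Garland-type localization of the quadratic form of $M_{k-1}$ are exactly the ingredients of \cite[Theorem~3.1]{AlevL20}.

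One small correction to your sketch: in the \cite{AlevL20} proof the localization is carried out at faces $\alpha \in X^{(k-2)}$ --- one writes $\langle f, M_{k-1} f\rangle_{\pi_{k-1}}$ as an average over such $\alpha$ of $\langle f_\alpha, \Emm_\alpha f_\alpha\rangle_{\pi_1^{(\alpha)}}$ --- which directly yields a recursion for $\lambda_2(U_{k-1}D_k)$ in terms of $\gamma_{k-2}$ and $\lambda_2(U_{k-2}D_{k-1})$. You instead describe localizing at single vertices $u \in X^{(1)}$ and applying the inductive hypothesis inside the links $X_{\{u\}}$. Both routes produce the same telescoping product, so your plan is sound.
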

The following upper bound is also well-understood,
\begin{proposition}[\cite{AlevL20}, Proposition 3.3]\label{prop:lb}
    Let $(X, \pi)$ be an $n$-partite simplicial complex, where
    $|X[i]| > 1$ for all $i \in [n]$. Then, writing $\Pgd$ for the
    down-up walk on $(X, \pi)$
    \[ \Gap(\Pgd) \le \frac{2}{n}.\]
\end{proposition}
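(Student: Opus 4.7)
The plan is to exhibit a ``coboundary'' test function depending on a single coordinate, which is fixed by $n-1$ of the $n$ update operators and therefore forces $\lambda_2(\Pgd)$ to be close to $1$.

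Fix any $i \in [n]$; since $|X[i]|>1$, pick distinct $x,y\in X[i]$ and define $g:X[i]\to\RR$ by $g(x)=\pi_{\set i}(y)$, $g(y)=-\pi_{\set i}(x)$, and $g(z)=0$ for all other $z$. Then $\sum_{z\in X[i]}\pi_{\set i}(z)g(z)=0$ and $g\not\equiv 0$. Lift to $\eff \in \RR^{X^{(n)}}$ by $\eff(\omega):=g(\omega_i)$, so that $\inpr{\eff,\one}_\pi=0$ and $\norm{\eff}_\pi^2=\pi_{\set i}(x)\pi_{\set i}(y)\bigl(\pi_{\set i}(x)+\pi_{\set i}(y)\bigr)>0$, giving a legitimate Courant--Fischer witness for $\lambda_2(\Pgd)$.

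The key observation is that $\Quu_j\eff=\eff$ for every $j\ne i$: the definition of $\Quu_j$ forces $\omega'_{[n]\setminus j}=\omega_{[n]\setminus j}$ whenever $\Quu_j(\omega,\omega')>0$, in particular $\omega'_i=\omega_i$, hence $(\Quu_j\eff)(\omega)=g(\omega_i)=\eff(\omega)$. Consequently
\[
\Pgd\,\eff \;=\; \frac{1}{n}\sum_{j=1}^n\Quu_j\eff \;=\; \frac{n-1}{n}\,\eff + \frac{1}{n}\,\Quu_i\eff.
\]
Since $\Quu_i$ is a self-adjoint idempotent with respect to $\pi$ (immediate from the explicit formula $\Quu_i(\omega,\omega')=\pi_{\set i}^{(\omega_{[n]\setminus i})}(\omega'_i)$ in \cref{eq:q-def}), we have $\inpr{\eff,\Quu_i\eff}_\pi=\norm{\Quu_i\eff}_\pi^2\ge 0$. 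Pairing the display with $\eff$ yields
\[
\inpr{\eff,\Pgd\eff}_\pi \;\ge\; \frac{n-1}{n}\,\norm{\eff}_\pi^2,
\]
and \cref{thm:courant-fischer} applied to the self-adjoint operator $\Pgd$ gives $\lambda_2(\Pgd)\ge (n-1)/n$. Since $\sigma_2(\Pgd)\ge \lambda_2(\Pgd)$, we conclude $\Gap(\Pgd)\le 1/n\le 2/n$.

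There is no real obstacle: the argument is a one-shot coboundary obstruction, and the hypothesis $|X[i]|>1$ is used only to produce a nonzero mean-zero function on the $i$-th side. The bound obtained is in fact strictly sharper than the stated $2/n$, reflecting that only the single operator $\Quu_i$ out of the $n$ summands in $\Pgd$ can perturb a function of $\omega_i$.
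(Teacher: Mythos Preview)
Your proof is correct. The paper does not supply its own argument for this proposition; it is quoted verbatim from \cite{AlevL20}, where the statement is proved for general (not necessarily partite) rank-$n$ complexes via a coboundary/cocycle construction. Your argument is the natural specialisation of that idea to the partite setting: a function of a single coordinate lies in $\Uu_j$ for every $j\ne i$, so all but one summand in $\Pgd=\tfrac1n\sum_j\Quu_j$ act as the identity on it. As you observe, this actually yields the sharper bound $\Gap(\Pgd)\le 1/n$; the constant $2$ in the cited statement is an artefact of the non-partite setting, where the analogous test function is only \emph{approximately} fixed by the remaining update operators. One small remark: your final inequality $\sigma_2(\Pgd)\ge\lambda_2(\Pgd)$ is in fact an equality here, since $\Pgd$ is an average of orthogonal projections and hence positive semidefinite.
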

On a partite simplicial complex, the underlying reason for \cref{prop:lb} is
clear: To be able to get really close
to the stationary distribution $\pi$, one needs to perform an update to nearly
all the coordinates, which by the coupon-collector bound requires at least
$\Theta(n\log n)$ random updates, hence after $n$-steps the
distribution of the random walk can still be
far away from the stationary distirbution.

Given an ordering $\vec s = (s(1), \ldots, s(n))$ of $[n]$, we define the
\underline{sequential sweep} $\Psq^{(\vec s)}$ as the random walk operator
\[ \Psq^{(\vec s)} := \Quu_{s(1)} \cdots \Quu_{s(n)},\]
which updates the coordinates $s(1), \ldots, s(n)$ sequentially one after the
other. When $\vec s = \parens*{1, \ldots, n}$, i.e.~$\vec s$ is the canonical
order of $[n]$, we will simply write $\Psq := \Psq^{(\vec s)}$. We note that
the weighting $\pi$ of the simplicial complex is stationary for
$\Psq$, since $\pi
\Quu_i = \pi$ for all $i = 1,\ldots, n$. Thus,
\[ \pi \Psq = \pi\Quu_1 \Quu_2 \cdots \Quu_n = \pi \Quu_2 \cdots \Quu_n =
\cdots = \pi \Quu_n = \pi.\]
\begin{remark}
    Notice that we have,
    \[ \Psq^* = (\Quu_1 \cdots \Quu_n)^* = \Quu_n^* \cdots \Quu_1^* = \Quu_n
    \cdots \Quu_1,\]
    In particular the random-walk $\Psq$ is not necessarily reversible!
\end{remark}

It has already been observed in \cite{Amit91, Oppenheim21}, that the
update operators
$\Quu_i$ are orthogonal projections.
\begin{proposition}\label{prop:proj}
    For all $i = 1, \ldots, n$ the operator $\Quu_i$ is an orthogonal
    projection to the subspace,
    \[ \Uu_i = \sspan\set*{ \uuu_{\alpha} : \alpha \in X[1, \cdots, i-1, i+1,
    \cdots, n]}, \]
    where $\uuu_\alpha(\omega) = \one[\omega \supset \alpha]$.
\end{proposition}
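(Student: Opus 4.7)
The plan is to verify the three defining properties of an orthogonal projection for $\Quu_i$: idempotency, self-adjointness (with respect to $\pi$), and image equal to $\Uu_i$. The key observation that drives everything is that $(\Quu_i \eff)(\omega)$ is by \cref{eq:q-def} simply the conditional expectation of $\eff$ given the coordinates $\omega_{[n]\setminus i}$, i.e.,
\[
(\Quu_i \eff)(\omega) \;=\; \sum_{\omega'_i \in X[i]} \pi_{\set i}^{(\omega_{[n]\setminus i})}(\omega'_i)\,\eff\bigl(\omega_{[n]\setminus i} \oplus \omega'_i\bigr).
\]

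First I would note that because the right-hand side depends on $\omega$ only through $\omega_{[n]\setminus i}$, the image of $\Quu_i$ is contained in the space of functions that are constant on fibers of the projection $\omega \mapsto \omega_{[n]\setminus i}$. This latter space is exactly $\Uu_i$, since for $\alpha \in X[1,\ldots,i-1,i+1,\ldots,n]$ the vectors $\uuu_\alpha$ are indicators of these fibers and form a basis for it. Next, for such $\alpha$, a direct calculation shows $\Quu_i \uuu_\alpha = \uuu_\alpha$: on $\omega$ with $\omega_{[n]\setminus i} = \alpha$ the conditional expectation of $\uuu_\alpha$ equals $1$, and on all other $\omega$ it equals $0$. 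Thus $\Quu_i$ is the identity on $\Uu_i$, which together with $\im(\Quu_i) \subset \Uu_i$ immediately yields $\Quu_i^2 = \Quu_i$ and $\im(\Quu_i) = \Uu_i$.

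It then remains to check that $\Quu_i$ is self-adjoint with respect to $\pi$, for which by \cref{prop:adjoint-defn} it suffices to verify the detailed balance identity $\pi(\omega)\Quu_i(\omega,\omega') = \pi(\omega')\Quu_i(\omega',\omega)$ for all $\omega,\omega' \in X^{(n)}$. Both sides vanish unless $\omega_{[n]\setminus i} = \omega'_{[n]\setminus i}$, in which case unfolding the definition of $\pi_{\set i}^{(\omega_{[n]\setminus i})}$ via \cref{eq:p-def} gives
\[
\pi(\omega)\Quu_i(\omega,\omega') \;=\; \frac{\pi(\omega)\,\pi(\omega')}{\pi_{[n]\setminus i}(\omega_{[n]\setminus i})},
\]
and the symmetric expression for $\pi(\omega')\Quu_i(\omega',\omega)$ coincides with this because $\omega$ and $\omega'$ agree on $[n]\setminus i$. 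Combining idempotency, self-adjointness, and the identification of the image completes the proof.

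There is no real obstacle here; the statement is essentially a bookkeeping verification once one recognises $\Quu_i$ as conditional expectation onto the $\sigma$-algebra generated by $\omega_{[n]\setminus i}$. The only minor care needed is to use \cref{eq:p-def} correctly when rewriting $\pi_{\set i}^{(\omega_{[n]\setminus i})}$ in the self-adjointness check, and to be explicit that $\set{\uuu_\alpha : \alpha \in X[1,\ldots,i-1,i+1,\ldots,n]}$ really is a spanning set (in fact a basis) for the space of functions depending only on $\omega_{[n]\setminus i}$.
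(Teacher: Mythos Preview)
Your proof is correct and takes a genuinely different route from the paper's. You verify the three axioms of an orthogonal projection (idempotency, self-adjointness, image identification) directly, leaning on the interpretation of $\Quu_i$ as conditional expectation with respect to the $\sigma$-algebra generated by $\omega_{[n]\setminus i}$. The paper instead goes the other way: it starts from the abstract orthogonal projector $\Quu_{\Uu_i}$ onto $\Uu_i$, writes it explicitly via \cref{fac:proj} as $\sum_\alpha \uuu_\alpha \uuu_\alpha^*/\norm{\uuu_\alpha}_\pi^2$ (using that the $\uuu_\alpha$ are disjointly supported and hence orthogonal), and then computes the $(\omega,\omega')$ entry of this sum to show it coincides with the defining formula \cref{eq:q-def} for $\Quu_i$. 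Your approach is more conceptual and makes the probabilistic meaning transparent; the paper's is a direct matrix-entry computation that avoids any separate self-adjointness check (since $\Quu_{\Uu_i}$ is self-adjoint by construction). Both are short and neither requires anything the other does not already implicitly use.
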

\begin{remark}
    We have,
    \[ \Uu_i = \sspan\set*{ \eff \in \RR^{X^{(n)}} : \eff(\omega) =
            \eff(\bar\omega)~~\textrm{for all}~~\omega,\bar\omega \in
            X^{(n)}~\textrm{such that}~\omega_{[n] \setminus i} =
            \bar\omega_{[n]
    \setminus i}}.\]
\end{remark}
\begin{proof}[Proof of \cref{prop:proj}]
    We note that for each $i \in [n]$, the vectors $\uuu_\alpha$ for $\alpha
    \in X^{(i)}$ are disjointly supported. Thus, by \cref{fac:proj},
    the orthogonal projector $\Quu_{\Uu_i}$ to the subspace $\Uu_i$ is given by,
    \[ \Quu_{\Uu_i} = \sum_{\alpha \in X^{(i)}} \frac{\uuu_\alpha
    \uuu_\alpha^*}{\norm{\uuu_\alpha}^2_{\pi}}. \]
    Further, we note
    \[ \norm{\uuu_\alpha}^2_{\pi} = \sum_{\omega \in X^{(n)}}
        \pi(\omega) \cdot \one[\omega \supset \alpha] = \Pr_{\bar\omega
    \sim \pi}\sqbr*{\bar\omega \supset \alpha}. \]
    Now, fix $\omega$ and $\omega'$. We have,
    \[\Quu_{\Uu_i}(\omega, \omega') = \one_{\omega}^\top \cdot \Quu_{\Uu_i}
        \cdot \one_{\omega'} = \sum_{\alpha \in X^{(-i)}} \frac{\one[\omega
            \supset \alpha] \cdot \pi(\omega') \cdot \one[\omega' \supset
        \alpha]}{\Pr_{\bar\omega \sim \pi}\sqbr*{\bar\omega \supset \alpha}} =
        \begin{cases} 0 & \textrm{if } \omega_{[n] \setminus i} \ne
            \omega'_{[n]\setminus i},\\ \frac{\pi(\omega')}{\Pr_{\bar\omega \sim
            \pi}\sqbr*{\bar\omega \supset \omega_{[n]\setminus i}}} &
            \textrm{otherwise.}
    \end{cases}\]
    Here we have used the observations,
    \[\one_\omega^\top \uuu_\alpha = \one[\omega \supset \alpha] =
        \one[\omega_{[n] \setminus i} =
        \alpha]~~\textrm{and}~~\uuu_\alpha^* \one_{\omega'} =
        \pi(\omega') \cdot \one[\omega' \supset \alpha] = \pi(\omega') \cdot
    \one[\omega'_{[n]\setminus i} = \alpha].\]
    Noting that there can be at most one $\alpha \in X[1,\ldots, i-1, i+1,
    \ldots, n]$ such that $\alpha
    = \omega_{[n] \setminus i}= \omega'_{[n] \setminus i}$ yields the
    equality. Now, note that by Bayes' law
    \cref{eq:b-rule},
    \[ \Quu_{\Uu_i}(\omega, \omega') = \Pr_{\bar\omega \sim \pi}[\bar\omega=
        \omega' \mid \bar\omega \supset \omega_{[n] \setminus i}] =
    \Quu_i(\omega, \omega').\]
\end{proof}

\subsection{Colored Random Walks on Partite Simplicial
Complexes}\label{ss:colrw}
It will also be important to recall the \underline{colored random
walks}, introduced in \cite{DiksteinD19, GurLL22}.  Let  $\alpha \in
X$ be any given face of an
$n$-partite weighted simplicial complex. Let $I, J \subset [n] \setminus
\typ(\alpha)$ be two disjoint sets, i.e.~$I \cap J = \varnothing$. We define the
\underline{$(I, J)$ colored random walk} $\Cee_\alpha^{I \to J} \in
\RR^{X_\alpha[I]
\times X_\alpha[J]}$ from $X_\alpha[I]$ to
$X_\alpha[J]$ as follows:
\begin{equation}\Cee_\alpha^{I \to J}(\tau_I, \tau_J) = \pi^{(\alpha \oplus
    \tau_I)}_{J}(\tau_J) = \Pr_{\omega \sim
    \pi}[\omega_{J} = \tau_J \mid \omega_I = \tau_I,
    \omega_{\typ(\alpha)} = \alpha]~~\textrm{for all}~~\tau_I \in
    X_\alpha[I], \tau_J \in X_\alpha[J].\label{eq:c-def}
\end{equation}

We observe,
\begin{obsv}
    The matrix $\Cee_\alpha^{I \to J}$ is a row-stochastic matrix satisfying
    $\pi_I^{(\alpha)} \Cee_\alpha^{I \to J} = \pi_J^{(\alpha)}$.
    Similarly, $(\Cee_\alpha^{I \to J})^* = \Cee_\alpha^{J \to I}$
\end{obsv}
\begin{proof}
    The row-stochasticity of $\Cee_\alpha^{I \to J}$ is evident from the
    definition. The second claim is an immediate consequence of the Bayes' rule
    \cref{eq:b-rule} and \cref{eq:p-def},
    \[ [\pi_I^{(\alpha)}\Cee_\alpha^{I \to J}](\tau_J) = \sum_{\tau_I \in
        X_\alpha[I]} \pi^{(\alpha)}_I(\tau_I) \pi^{(\alpha \oplus
        \tau_I)}_J(\tau_J) = \sum_{\tau_I \in X_\alpha[I]}
        \pi^{(\alpha)}_{I \cup
    J}(\tau_I \oplus \tau_J) = \pi_J^{(\alpha)}(\tau_J).\]

    For the second claim we appeal to \cref{prop:adjoint-defn}. For
    all $\tau_I \in
    X_\alpha[I]$ and $\tau_J \in X_\alpha[J]$, we have
    {\small\[ (\Cee_\alpha^{I \to J})^*(\tau_J, \tau_I)
            = \pi^{(\alpha\oplus \tau_I)}_J(\tau_J) \cdot
            \frac{\pi_I^{(\alpha)}(\tau_I)}{\pi_J^{(\alpha)}(\tau_J)} =
            \frac{\pi^{(\alpha)}_{I \cup J}(\tau_I \oplus
            \tau_J)}{\pi^{(\alpha)}_J(\tau_J)} = \pi_I^{(\alpha
            \oplus \tau_J)}(\tau_I) =
    \Cee_\alpha^{J \to I}(\tau_J, \tau_I).\]}
\end{proof}
\begin{remark}\label{rem:comb}
    Combinatorially, we can think of $\Cee_\alpha^{I \to J}$ as
    describing a random
    walk on a bipartite graph $G_{I, J}^{(\alpha)}$ with left part
    $X_\alpha[I]$ and right part
    $X_\alpha[J]$, where each edge $(\tau_I, \tau_J)$ is given the
    weight $\pi_{I
    \cup J}^{(\alpha)}(\tau_I \oplus \tau_J)$. In particular, writing $\Emm$ for
    the random walk matrix of $G_{I, J}^{(\alpha)}$ we have
    \[ \Emm =
        \begin{pmatrix} 0 & \Cee_\alpha^{I \to J}\\ \Cee_\alpha^{J \to I} & 0
    \end{pmatrix}.\]
    In particular, we have $\sigma_2(\Cee_\alpha^{I \to J}) < 1$ if and only if
    $G^{(\alpha)}_{I, J}$ is connected.
\end{remark}

\begin{remark}
    In \cite{DiksteinD19}, the notation $\Emm_\alpha^{I, J}$ is used
    for what we have
    called $\Cee_\alpha^{I \to J}$, whereas \cite{GurLL22} uses the notation
    $\Aye_{I, J}$.
\end{remark}

\begin{remark}
    The colored random walk $\Cee_\alpha^{I
    \to J}$ is the partite analogue of what is generally referred as swap-walks
    $\Ess_{|I|, |J|}$. We refer the reader to \cite{AlevJT19, DiksteinD19} for
    more information on these graphs.
\end{remark}
Let $I, J \subset [n]$ be two subsets satisfying $I \cap J = \varnothing$.
Suppose for some $\alpha \in X[(I \cup J)^c]$ the marginal distributions
$\pi_I^{(\alpha)}$ and $\pi_J^{(\alpha)}$ are independent, i.e.~we
have
\[ \pi_{I \cup J}^{(\alpha)}(\tau_I \oplus \tau_J) =
    \pi_I^{(\alpha)}(\tau_I) \cdot \pi_J^{(\alpha)}(\tau_J)~~\textrm{for
all}~~\tau_I \in X_\alpha[I], \tau_J \in X_\alpha[J].\]
It is easy to see, that in this case $\Cee_\alpha^{I \to J} = \one
\pi_J^{(\alpha)}$. In particular, we have\footnote{the first equality
    is the consequence of $\Cee_\alpha^{I \to J}$
being a rank one matrix}
\[ \sigma_2(\Cee_\alpha^{I \to J}) = \sigma_2(\one\pi_J^{(\alpha)})=
    0~~\textrm{and}~~\Div(\mu \Cee_\alpha^{I
    \to J}~\|~\nu \Cee_\alpha^{I \to J})
= \Div(\pi_J^{(\alpha)}~\|~\pi_J^{(\alpha)}) = 0,\]
for all $\mu,\nu \in \triangle_{X_\alpha[I]}$. We will consider
relaxations of both statements
to consider the proximity of $\pi_I^{(\alpha)}$ and $\pi_J^{(\alpha)}$ to being
product distributions. Thus, we define the \underline{$(I, J)$
    variance contraction
parameter} $\ee^{I \to
J}$ as,
\begin{equation}
    \ee^{I \to J} := \ee^{I \to J}_{(X, \pi)} = \max_{\alpha \in X[(I \cup
    J)^c]} \sigma_2\parens*{\Cee_\alpha^{I \to J}}\label{eq:gsplit}
\end{equation}
We also define, the related \underline{$(I, J)$ entropy contraction parameter}
\begin{equation}\label{eq:dsplit}
    \eta^{I \to J} = 1 - \kappa^{I \to J}~~\textrm{where}~~\kappa^{I \to J} =
    \sup\set*{\frac{\Div(\mu \Cee_\alpha^{I \to
        J}~\|~\pi_{J}^{(\alpha)})}{\Div(\mu~\|~\pi_I^{(\alpha)})} : \alpha \in
    X[(I \cup J)^c]~\textrm{and}~\mu \in \triangle_{X_\alpha[I]}}
\end{equation}
\begin{remark}
    There is a high-level similarity between the way we have defined the
    constant $\eta^{I \to J}$ and the \ref{eq:mlsc} in \cref{ss:rws}.
    Since the operators
    $\Cee_\alpha^{I \to J}$ are rectangular, we will not use the same
    notation
    for them.
\end{remark}
\begin{remark}
    Let $\alpha \in X[(I \cup J)^c]$ be arbitrary. For any $\eff \in
    \RR^{X_\alpha[J]}$ satisfying $\Exp_{\pi_J^{(\alpha)}} \eff = 0$, we have
    \[ \norm*{ \Cee_\alpha^{I \to J} \eff}_{\pi_I^{(\alpha)}} \le \ee^{I \to J}
    \cdot \norm{\eff}_{\pi_J^{(\alpha)}}.\]
    Similarly, for any $\mu \in \triangle_{X_\alpha[I]}$ we have
    \[ \Div( \mu \Cee_\alpha^{I \to J}~\|~\pi_J^{(\alpha)}) \le (1 -
    \eta^{I \to J}) \cdot \Div(\mu~\|~\pi_I^{(\alpha)}).\]
\end{remark}

\subsubsection*{$\vec \ee$-product distributions and $\ee^{I \to J}$}
When dealing with the colored random walks $\Cee_\alpha^{I \to J}$ an
alternative notion of high-dimensional expansion introduced in
\cite{GurLL22} proves more convenient: Let $(X, \pi)$ be an
$n$-partite simplicial complex with parts
$X[1], \cdots, X[n]$. The distribution $\pi$ is called
\underline{$\ee$-product} if,
\[ \sigma_2(\Cee_\varnothing^{i \to j}) \le \ee~~\textrm{for all}~~i, j \in
[n]~\textrm{where}~i \ne j.\]

We extend this definition in the following manner: Let $\vec \ee = (\ee_0,
\ldots, \ee_{n-2})$ and $(X, \pi)$ be an arbitrary $n$-partite simplicial
complex. We say the distribution $\pi$ is \underline{$\vec \ee$-product} if for
all $\alpha \in X^{(\le n -2)}$ we have,
\[ \sigma_2(\Cee_\alpha^{i \to j}) \le \ee_{|\alpha|}~~\textrm{for all}~~i, j
\in [n]\setminus \alpha~\textrm{where}~i \ne j.\]
Equivalently, $\pi$ is $\vec\ee$-product if and only if $\pi^{(\alpha)}$ is
$\ee_{|\alpha|}$-product for all $\alpha \in X^{(\le n- 2)}$.
We recall the following result of \cite[Claim 7.10]{DiksteinD19},
\begin{proposition}\label{prop:connected}
    Let $(X, \pi)$ be an $n$-partite simplicial complex in which every link
    $G_{\alpha}$ is connected -- equivalently, $(X, \pi)$ is a one-sided
    $\vec \gamma$-local spectral expander where $\gamma_i < 1$ for all $i$.
    Then, the bipartite graph $G_{\set i, \set j}$
    described in \cref{rem:comb} is connected. In particular,
    there exists a vector $\vec\ee = (\ee_i)_{i = 0}^{n-2}$ such
    that $\ee_i < 1$ for all $i$, for which the distribution $\pi$
    is $\vec\ee$-product.
\end{proposition}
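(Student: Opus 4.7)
The equivalence between "every link $G_\alpha$ is connected" and "one-sided $\vec\gamma$-local spectral expander with $\gamma_i<1$" is immediate from Perron--Frobenius applied to the non-negative row-stochastic matrices $\Emm_\alpha$, since the second eigenvalue of a finite connected reversible Markov chain is strictly less than $1$. By \cref{rem:comb}, $\sigma_2(\Cee_\alpha^{i\to j})<1$ holds if and only if the bipartite graph $G^{(\alpha)}_{\{i\},\{j\}}$ is connected. Thus the heart of the proof is to show that for every $\alpha\in X^{(\le n-2)}$ and every pair of distinct $i,j\in[n]\setminus\typ(\alpha)$, the graph $G^{(\alpha)}_{\{i\},\{j\}}$ is connected. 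The $\vec\ee$-product property then follows by taking $\ee_i$ to be the maximum of the (finitely many) numbers $\sigma_2(\Cee_\alpha^{r\to s})$ over $\alpha\in X^{(i)}$ and distinct $r,s\in[n]\setminus\typ(\alpha)$, all of which are strictly less than $1$.

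I plan to prove the connectedness of $G^{(\alpha)}_{\{i\},\{j\}}$ by induction on $r=n-|\typ(\alpha)|\ge 2$. The base case $r=2$ is immediate: there $\typ(\alpha)=[n]\setminus\{i,j\}$, so $X_\alpha$ is itself a bipartite graph on $X_\alpha[i]\cup X_\alpha[j]$ which coincides with $G^{(\alpha)}_{\{i\},\{j\}}$, and it is connected by hypothesis.

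For the inductive step $r\ge 3$, my key auxiliary observation will be a \emph{common-neighbour lemma}: if $u_1,u_2\in X_\alpha[i]$ share a common neighbour $v\in X_\alpha^{(1)}$, then $u_1,u_2$ lie in the same component of $G^{(\alpha)}_{\{i\},\{j\}}$. If $v\in X_\alpha[j]$ this is simply the path $u_1-v-u_2$. Otherwise $v\in X_\alpha[k]$ for some $k\ne i,j$ (partiteness forces $k\ne i$), and the inductive hypothesis applied to the face $\alpha\cup\{v\}$ (which satisfies $n-|\typ(\alpha\cup\{v\})|=r-1$) shows that $G^{(\alpha\cup\{v\})}_{\{i\},\{j\}}$ is connected and contains both $u_1,u_2$; this graph is a subgraph of $G^{(\alpha)}_{\{i\},\{j\}}$ by downward closure of $X$.

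Granted the auxiliary lemma, I take arbitrary $u,u'\in X_\alpha[i]$ and use $1$-skeleton connectedness of $G_\alpha$ to obtain a path $u=v_0,v_1,\ldots,v_m=u'$ in $G_\alpha$, then induct on $m$. When $m=2$ the common-neighbour lemma applies directly to $u,u'$ with common neighbour $v_1$. When $m\ge 3$, if $v_2\in X_\alpha[i]$ the lemma gives $v_0\sim v_2$ and I finish by path-length induction on $v_2,\ldots,v_m$. Otherwise $v_2\in X_\alpha[\ell]$ with $\ell\ne i$; since $\{v_1,v_2\}\in X_\alpha^{(2)}$, purity of the partite complex $X_\alpha$ provides a \emph{bridge vertex} $w\in X_{\alpha\cup\{v_1,v_2\}}[i]$. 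Applying the lemma through $v_1$ gives $u=v_0\sim w$, while the modified path $w,v_2,v_3,\ldots,v_m$ has length $m-1$, so path-length induction yields $w\sim u'$. The main technical subtlety is precisely this bridge construction, which lets me reroute a walk that leaves $X_\alpha[i]$ into a side other than $j$ without enlarging the depth of the induction; it is the reason for the double induction (outer on $r=n-|\typ(\alpha)|$, inner on the path length $m$).
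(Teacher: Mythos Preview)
The paper does not supply its own proof of this proposition; it is quoted verbatim from \cite[Claim 7.10]{DiksteinD19}. So there is nothing to compare against on the paper's side.

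Your argument is correct. The double induction (outer on the corank $r=n-|\typ(\alpha)|$, inner on the length of a walk in $G_\alpha$ joining two vertices of $X_\alpha[i]$) is a clean way to push connectivity of the $1$-skeleton down to the bipartite colored graph. Two small points worth making explicit when you write it up: (i) the case $v_2\in X_\alpha[i]$ cannot occur when $m=3$ (since $v_2v_3$ is an edge and $v_3\in X_\alpha[i]$), so the reduction to a walk of length $m-2$ is always to a length $\ge 2$ or to the trivial length~$0$; (ii) once you have shown that all of $X_\alpha[i]$ lies in a single component of $G^{(\alpha)}_{\{i\},\{j\}}$, full connectedness follows because purity gives every $w\in X_\alpha[j]$ a neighbour in $X_\alpha[i]$. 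Neither of these is a gap, just a place where a reader might pause.
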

They also prove the following down-trickling result, \cite[Corollary
7.6]{DiksteinD19}
\begin{theorem}\label{thm:yod}
    Let $(X, \pi)$ be an $n$-partite simplicial complex all of whose
    links are connected. Suppose
    $\gamma_{n-2}(X, \pi) \le \frac{\ee}{(n-2)\ee + 1}$.  Then,
    for every $\alpha \in X^{(k)}$ and every $i, j \in [n]\setminus
    \typ(\alpha)$, we have $\sigma_2(\Cee_\alpha^{i \to j}) \le
    \frac{\ee}{k \ee + 1}$.
\end{theorem}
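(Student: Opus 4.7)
The plan is to prove the bound by downward induction on $k=|\alpha|$, adapting Oppenheim's trickling-down strategy from the symmetric link graphs $\Emm_\alpha$ to the rectangular colored walks $\Cee_\alpha^{i\to j}$. For the base case $k=n-2$, only the two coordinates $i,j$ are free in the link, so $G_\alpha$ is bipartite on parts $X_\alpha[i]$ and $X_\alpha[j]$ and its random-walk matrix has the block form
\begin{equation*}
\Emm_\alpha \;=\; \begin{pmatrix} 0 & \Cee_\alpha^{i\to j} \\ \Cee_\alpha^{j\to i} & 0 \end{pmatrix}.
\end{equation*}
The non-unit singular values of $\Cee_\alpha^{i\to j}$ then coincide with the non-unit nonnegative eigenvalues of $\Emm_\alpha$, so $\sigma_2(\Cee_\alpha^{i\to j})\le \lambda_2(\Emm_\alpha)\le \gamma_{n-2}\le \ee/((n-2)\ee+1)\le \ee/((k-1)\ee+1)$, which is the claimed bound.

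For the inductive step I assume the claim at level $k+1$, fix $\alpha\in X^{(k)}$ and $i,j\notin \typ(\alpha)$, and pick an auxiliary coordinate $\ell\notin \typ(\alpha)\cup\set{i,j}$ (available because $k<n-2$). The goal is a trickling recurrence of the form
\begin{equation*}
\sigma_2(\Cee_\alpha^{i\to j}) \;\le\; \frac{c}{1-c}, \qquad c \;:=\; \max_{v\in X_\alpha[\ell]}\sigma_2(\Cee_{\alpha\oplus v}^{i\to j}).
\end{equation*}
The inductive hypothesis gives $c\le \ee/(k\ee+1)$, and substituting yields exactly $\sigma_2(\Cee_\alpha^{i\to j}) \le \ee/((k-1)\ee+1)$, which closes the induction. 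The two key ingredients are (i) the Bayes decomposition
\begin{equation*}
\Cee_\alpha^{i\to j}(x,y) \;=\; \sum_{v\in X_\alpha[\ell]} \pi_\ell^{(\alpha\oplus x)}(v)\cdot \Cee_{\alpha\oplus v}^{i\to j}(x,y)
\end{equation*}
coming from \cref{eq:b-rule}, and (ii) the variational characterization of $\sigma_2$ in \cref{fac:rsig}, combined with the identification of the trivial eigenspace $\Tt_\alpha$ of $\Emm_\alpha$ in \cref{thm:opartite}.

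The hard part is deriving the Oppenheim-type recurrence for the \emph{rectangular} colored walks rather than the self-adjoint link graphs. The plan is to pass through the self-adjoint operator $\Cee_\alpha^{j\to i}\,\Cee_\alpha^{i\to j}$ on $\RR^{X_\alpha[j]}$, whose second-largest eigenvalue equals $\sigma_2(\Cee_\alpha^{i\to j})^2$, and decompose it along extensions by $v\in X_\alpha[\ell]$ into a Garland-style weighted average of the pinned self-adjoint operators $\Cee_{\alpha\oplus v}^{j\to i}\,\Cee_{\alpha\oplus v}^{i\to j}$ plus a correction term that represents two-step transitions passing through $X_\alpha[\ell]$. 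The correction must be controlled by the non-trivial spectrum of $\Emm_\alpha$ restricted to the $\set{i,j}$-block, and care is needed to remove the two trivial constants $\one_{X_\alpha[i]},\one_{X_\alpha[j]}\in \Tt_\alpha$ rather than the single constant $\one$ that appears in the symmetric Oppenheim setting. Once the recurrence $\sigma_2\le c/(1-c)$ is in place, the remaining algebra that iterates $c\mapsto c/(1-c)$ from the base value $\ee/((n-2)\ee+1)$ and produces the closed form $\ee/((k-1)\ee+1)$ is routine.
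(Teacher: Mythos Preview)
The paper does not actually prove \cref{thm:yod}; it is quoted verbatim as \cite[Corollary~7.6]{DiksteinD19} and used as a black box. So there is no in-paper proof to compare against.

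Your overall strategy---downward induction on $k$, base case $k=n-2$ via the bipartite block form of $\Emm_\alpha$, and an Oppenheim-style trickling recurrence $\sigma_2\le c/(1-c)$ with $c=\max_{v\in X_\alpha[\ell]}\sigma_2(\Cee_{\alpha\oplus v}^{i\to j})$---is exactly the approach of the cited source, and the algebra $c=\ee/(k\ee+1)\Rightarrow c/(1-c)=\ee/((k-1)\ee+1)$ is correct. The base case is fine: for $\alpha\in X^{(n-2)}$ the link is genuinely bipartite, the nonnegative spectrum of $\Emm_\alpha$ coincides with the singular values of $\Cee_\alpha^{i\to j}$, and $\gamma_{n-2}\le \ee/((n-2)\ee+1)\le \ee/((n-3)\ee+1)$.

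Where the proposal is incomplete is precisely the place you flag as ``the hard part'': you have not actually derived the recurrence, only described a plan for it. The decomposition you suggest---writing $\Cee_\alpha^{j\to i}\Cee_\alpha^{i\to j}$ as a weighted average of $\Cee_{\alpha\oplus v}^{j\to i}\Cee_{\alpha\oplus v}^{i\to j}$ plus a correction governed by two-step transitions through $X_\alpha[\ell]$---is the right shape, but the correction term is not bounded by anything you have access to inductively unless you simultaneously control $\sigma_2(\Cee_{\alpha}^{i\to \ell})$ and $\sigma_2(\Cee_\alpha^{\ell\to j})$ (or, equivalently, run the induction over all pairs at once, not a fixed $(i,j)$). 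In the Dikstein--Dinur argument the induction hypothesis is applied to the auxiliary pairs $(i,\ell)$ and $(\ell,j)$ as well, and the recurrence emerges from a quadratic-form inequality on the $3$-partite restriction of $\Emm_\alpha$ to sides $\set{i,j,\ell}$, not merely from the Bayes decomposition you wrote down. As stated, your plan does not make clear how the correction term is dominated by $c$ rather than by some quantity at level $k$ that you do not yet control; filling this in is the substantive content of the proof.
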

Additionally, we also make the following crude observation,
\begin{obsv}\label{obsv:impl}
    Let $(X, \pi)$ be a weighted $n$-partite simplicial complex which is a
    one-sided $\vec \gamma$-local spectral expander.  Then,
    $\pi$ is $\vec \ee$-product where $\ee_i = (n-1-i) \cdot
    \gamma_i$.

    Equivalently, if $\pi$ is $\vec \ee$-spectrally independent, then it is
    also $\vec \ee$-product.
\end{obsv}
\begin{remark}\label{rem:difficult}
    While this blow-up in dimension seems too pessimistic, we note that in a
    worst-case scenario this is unavoidable (qv.~\cref{xmpl:ce}).
    However, the power of \cref{thm:yod} is that it allows us to
    circumvent this difficulty by assuming very strong expansion for
    the upper levels. It is easy to see that when
    $\gamma_{n-2}(X,\pi) \le \ee_0/2n$, \cref{thm:yod} implies that
    $\sigma_2(\Cee_\alpha^{i \to j}) \le \ee_0/n$ for all $\alpha$.
    This is because we can instantiate \cref{thm:yod} with $\ee
    :=\ee_0/n$ in this case.
\end{remark}
\begin{proof}[Proof of \cref{obsv:impl}]
    For simplicity, we will assume $\alpha = \varnothing$. Let $a, b \in [n]$
    be two distinct elements and $\eff: X[a] \to \RR$ and
    $\gee:X[b] \to \RR$ be functions satisfying the following
    conditions:
    \begin{enumerate}
        \item $\sigma_2(\Cee_\varnothing^{a \to b}) = \inpr*{\eff,
                \Cee_\varnothing^{a \to b}
            \gee}_{{\pi}_{\set a}}$,
        \item  $\norm{\eff}_{\pi_{\set a}}^2 =
            \norm{\gee}_{\pi_{\set b}} = 1$,
        \item $\inpr*{\eff, \one}_{\pi_{\set a}} = \inpr*{\gee,
            \one}_{\pi_{\set b}} = 0$.
    \end{enumerate}
    We note that the existence of $\eff$ and $\gee$ is guaranteed by
    \cref{fac:rsig}.

    Then, note
    \begin{align*}
        \sigma_2(\Cee_\varnothing^{a \to b})
        ~=~\inpr*{\eff, \Cee_\varnothing^{a \to b} \gee}_{\pi_{\set
        a}}
        &~=~\sum_{x \in X[a]} \pi_{\set a}(x) \eff(x) \cdot
        [\Cee_\varnothing^{a \to b} \gee](x),\\
        &~=~\sum_{x \in X[a]} \pi_{\set a}(x) \eff(x) \cdot
        \sum_{y \in X[b]} \pi^{(x)}_{\set b}(y) \gee(y),\\
        &~=~\sum_{x \oplus y \in X[a, b]}\pi_{\set{a, b}}(x
        \oplus y) \cdot \eff(x)\gee(y),
    \end{align*}
    where the last inequality follows from \cref{eq:b-rule}. Then,
    notice that $\pi_{\set{a, b}}(x \oplus y) = \pi_2(\set{a,
    b}) \cdot \binom{n}{2}$. We write, $\Ff: X^{(1)}_\alpha \to
    \RR$ and $\Gg: X^{(1)}_\alpha \to \RR$  for the extensions of $\eff$ and
    $\gee$ to $X^{(1)}$, i.e.~
    \[\Ff(x) = \one[\typ(x) = a] \cdot \eff(x)~~\textrm{and}~~\Gg(x) =
    \one[\typ(x) = b] \cdot \gee(x)~~\textrm{for all}~~x \in X^{(1)}\]
    Notice that $\norm{\Ff + \Gg}^2_{\pi_1} = \frac{2}{n}$ since for each $x
    \in X[i]$ we have $\pi_{\set i}(x) =  n \cdot \pi_1(x)$.
    Thus, we have
    \[ \sigma_2(\Cee_\varnothing^{a \to b}) = \binom{n}{2} \cdot
        \inpr*{\Ff + \Gg,
            \Emm_\varnothing
        (\Ff+ \Gg)}_{\pi_1} \le \frac{\lambda_2(\Inf_\varnothing)}{n-1} \cdot
    \frac{2}{n} \cdot \binom{n}{2} \le \ee_0,\]
    notice that here we have used that $\Ff + \Gg \in \Tt_\varnothing^\perp$
    since $\Exp_{\pi_{\set a}}\eff = \Exp_{\pi_{\set b}} \gee = 0$.
\end{proof}

We now recall the following result by \cite{GurLL22, DiksteinD19} which allows
us to bound $\ee^{I \to J}$ assuming $\vec \ee$-product property,
\begin{theorem}
    Let $(X, \pi)$ be an $n$-partite simplicial complex where $\pi$ is $\vec
    \ee$-product with $\ee_i \le \ee$ for all $i = 0, \ldots, n-2$. We have,
    \[ \sigma_2(\Cee_\alpha^{I \to J})^2 \le |I||J| \cdot \ee^2.\]
\end{theorem}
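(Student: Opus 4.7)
The plan is to prove the bound by an iterative variance decomposition, peeling off the coordinates of $I$ and then of $J$ one at a time; the latter uses the adjoint identity $(\Cee_\alpha^{I \to J})^* = \Cee_\alpha^{J \to I}$, which preserves singular values. By the variational characterization in \cref{fac:rsig}, it suffices to control $\norm{\Cee_\alpha^{I \to J} \eff}_{\pi_I^{(\alpha)}}^2$ for every $\eff : X_\alpha[J] \to \RR$ with $\Exp_{\pi_J^{(\alpha)}} \eff = 0$ and $\norm{\eff}_{\pi_J^{(\alpha)}} = 1$.

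For the inductive step on $|I|$, fix $i_0 \in I$, set $I' = I \setminus \set*{i_0}$, and let $\gee := \Cee_\alpha^{I' \to J} \eff$, i.e., $\gee(\tau_{I'}) = \Exp_\pi[\eff(\omega_J) \mid \omega_{I'} = \tau_{I'}, \omega_{\typ(\alpha)} = \alpha]$. By the tower property under $\pi_I^{(\alpha)}$, the $\tau_{i_0}$-independent extension of $\gee$ to $X_\alpha[I]$ is precisely the $\pi_I^{(\alpha)}$-orthogonal projection of $\Cee_\alpha^{I \to J} \eff$ onto the subspace of $\tau_{i_0}$-independent functions, so Pythagoras gives
\[ \norm*{\Cee_\alpha^{I \to J} \eff}_{\pi_I^{(\alpha)}}^2 = \norm*{\Cee_\alpha^{I' \to J} \eff}_{\pi_{I'}^{(\alpha)}}^2 + \norm*{\Cee_\alpha^{I \to J} \eff - \gee}_{\pi_I^{(\alpha)}}^2 . \]
For each fixed $\tau_{I'}$, the residual viewed as a function of $\tau_{i_0}$ equals $\Cee_{\alpha \oplus \tau_{I'}}^{i_0 \to J}(\eff - \Exp_{\pi_J^{(\alpha \oplus \tau_{I'})}} \eff)$, whose squared $\pi_{i_0}^{(\alpha \oplus \tau_{I'})}$-norm is at most $\sigma_2(\Cee_{\alpha \oplus \tau_{I'}}^{i_0 \to J})^2 \cdot \norm{\eff}_{\pi_J^{(\alpha \oplus \tau_{I'})}}^2$. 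Averaging over $\tau_{I'}$ and using the identity $\Exp_{\tau_{I'} \sim \pi_{I'}^{(\alpha)}} \norm{\eff}_{\pi_J^{(\alpha \oplus \tau_{I'})}}^2 = \norm{\eff}_{\pi_J^{(\alpha)}}^2$ (a direct consequence of \cref{eq:b-rule}) yields the recursion
\[ \sigma_2(\Cee_\alpha^{I \to J})^2 \le \sigma_2(\Cee_\alpha^{I' \to J})^2 + \max_{\tau_{I'} \in X_\alpha[I']} \sigma_2(\Cee_{\alpha \oplus \tau_{I'}}^{i_0 \to J})^2 . \]

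Unrolling this recursion $|I|$ times gives $\sigma_2(\Cee_\alpha^{I \to J})^2 \le |I| \cdot M_1$, where $M_1$ is the maximum of $\sigma_2(\Cee_{\alpha'}^{i \to J})^2$ over singletons $\set*{i} \subseteq I$ and the relevant links $\alpha'$ extending $\alpha$ (by pinning the remaining coordinates of $I$). By the adjoint identity, $M_1$ equals the corresponding maximum of $\sigma_2(\Cee_{\alpha'}^{J \to i})^2$, and applying the very same recursion in the $J$-direction bounds this by $|J| \cdot M_2$, where $M_2 := \max \sigma_2(\Cee_{\alpha''}^{j \to i})^2 \le \ee^2$ by the $\vec\ee$-product hypothesis (the singleton-to-singleton residuals are exactly what the hypothesis bounds). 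Chaining these two estimates produces the desired bound $\sigma_2(\Cee_\alpha^{I \to J})^2 \le |I| \cdot |J| \cdot \ee^2$.

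The main technical work lies in justifying the Pythagorean decomposition and in identifying the residual with a link operator applied to a centered function. Both reduce to careful uses of Bayes' rule (\cref{eq:b-rule}) together with the definitions of the conditional measures, but one must track carefully which measure governs the inner product and which centering is applied at each stage, since $\Cee_\alpha^{I' \to J} \eff$ and $\Cee_\alpha^{I \to J} \eff$ naturally live on different vertex sets.
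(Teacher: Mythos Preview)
Your proof is correct. Note, however, that the paper does not prove this statement itself: it is cited from \cite{DiksteinD19, GurLL22}, and the paper instead establishes the sharper \cref{thm:cwadv}. So the natural comparison is with that proof.

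Both arguments peel off one coordinate of $I$ at a time, but they localize on opposite ends. You project $\Cee_\alpha^{I\to J}\eff$ onto functions that are constant in the single coordinate $i_0$; the residual then lives in links obtained by pinning all of $I' = I\setminus\{i_0\}$ and is governed by the singleton operator $\Cee_{\alpha\oplus\tau_{I'}}^{i_0\to J}$. This produces the \emph{additive} recursion
\[
\sigma_2(\Cee_\alpha^{I\to J})^2 \;\le\; \sigma_2(\Cee_\alpha^{I'\to J})^2 \;+\; \max_{\tau_{I'}}\sigma_2(\Cee_{\alpha\oplus\tau_{I'}}^{i_0\to J})^2,
\]
which unrolls directly to the stated bound $|I||J|\ee^2$. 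The paper instead localizes at the single coordinate $i$ first (outer expectation over $x\in X_\alpha[i]$) and in each link decomposes $\eff$ into its constant and mean-zero parts before applying $\Cee_x^{(I\setminus i)\to J}$; this yields the \emph{multiplicative} recursion
\[
1-\sigma_2(\Cee_\alpha^{I\to J})^2 \;\ge\; \bigl(1-\max_x\sigma_2(\Cee_{\alpha\oplus x}^{(I\setminus i)\to J})^2\bigr)\bigl(1-\sigma_2(\Cee_\alpha^{i\to J})^2\bigr),
\]
leading to the product formula of \cref{thm:cwadv}. Your route is the more elementary one and lands exactly on the bound that was asked for; the paper's route needs a slightly more delicate split but yields the strictly stronger conclusion (from which the present statement follows via $1-\prod(1-x_k)\le\sum x_k$).
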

We will prove an improved version of this result in \cref{ss:sg},
(qv.~\cref{thm:cwadv}) which will show that whenever all the links
$G^{(\alpha)}$ in $X^{(\le n - 2)}$ connected, we have $\sigma_2(\Cee_\alpha^{I
\to J}) < 1$ for all $\alpha \in X^{(\le n -2)}$ and disjoint sets $I, J
\subseteq [n] \setminus \typ(\alpha)$ where $I \cup J \subset [n] \setminus
\typ(\alpha)$.

\subsection*{$\eta^{I \to J}$ and entropic independence}
We recall the concept of \underline{entropic independence}, \cite{AnariJKP22}.
Let $(X, \pi)$ be a simplicial complex of rank $n$ and write $\Dee_{k \to
\ell} \in \RR^{X^{(n)} \times X^{(\ell)}}$ for the operator defined
in the following manner,
\[ \Dee_{n \to \ell}(\omega, \alpha) = \frac{\one[\alpha \subset
    \omega]}{\binom{n}\ell}~~\textrm{for
all}~~\alpha \in X^{(\ell)}, \omega \in X^{(n)}.\]
We observe $\pi_n \Dee_{n \to \ell} = \pi_\ell$. A distribution $\mu
\in \triangle_{X^{(n)}}$ is said to be
\underline{$a^{-1}$-entropically
independent} for $a \in (0,1]$, if for all $\nu \in X^{(n)}$ one has
\[ \Div(\nu \Dee_{n \to 1}~\|~\mu\Dee_{n \to 1}) \le \frac{1}{a \cdot
n}\Div(\nu~\|~\mu).\]
Entropic independence of the distribution $\pi$ can be seen as an entorpic
generalization for spectral independence or local-spectral expansion
(qv.~\cref{ss:links}). Indeed
the following is proven in \cite{AnariJKP22},
\begin{theorem}[Simplified Version of Theorem 5 in \cite{AnariJKP22}]
    Let $(X, \pi)$ be a simplicial complex of rank $n$. Suppose for every
    $\alpha \in X^{(n- 2)}$, the distribution $\pi^{(\alpha)}$ is
    $a^{-1}$-entropically independent, where $a^{-1} \in \NN$. Then,
    writing $\ell := k - \lceil a^{-1}
    \rceil$, we have
    \[ \Div(\nu \Pgdl\ell~\|~\pi) \le (1 - \kappa) \cdot
    \Div(\nu~\|~\pi)~~\textrm{for all}~~\nu \in \triangle_{X^{(n)}},\]
    where $\Pgdl{\ell} = \Dee_{n \to \ell} \Dee_{n \to \ell}^*$ and $\kappa=
    \binom{n - \ell
    }{a^{-1}}/\binom{n}{a^{-1}}$. In particular, $\EC(\Pgdl\ell) \ge \kappa$.
\end{theorem}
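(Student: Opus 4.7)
The statement is a direct specialization of Theorem~5 of \cite{AnariJKP22} to the parameter choice $\ell = n - \lceil a^{-1}\rceil$. The plan is to invoke that theorem after matching notation; I sketch below the two-phase strategy underlying its proof, which is the route I would reproduce.

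\emph{Phase 1: Propagating entropic independence up the complex.} Although the hypothesis is imposed only on rank-$2$ links (the links $\pi^{(\alpha)}$ for $\alpha \in X^{(n-2)}$), the first task is to show that $\pi^{(\alpha)}$ is $a^{-1}$-entropically independent for every $\alpha \in X^{(\le n-2)}$, including $\alpha = \varnothing$. I would proceed by induction on the codimension $n - |\alpha|$. For the inductive step at a rank-$r$ link, I would apply the chain rule for KL divergence (\cref{fac:cr}) to split the quantity $\Div(\nu\Dee_{r\to 1}~\|~\pi^{(\alpha)}\Dee_{r\to 1})$ into a contribution from a single coordinate (where the two-coordinate pinning reduces matters to the base-case rank-$2$ hypothesis) plus an expected contribution from the rank-$(r-1)$ pinnings, which is controlled by the induction hypothesis.

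\emph{Phase 2: Telescoping the down-up contraction.} With entropic independence available at every level of the complex, I would then analyze $\Pgdl{\ell} = \Dee_{n\to\ell}\Dee_{n\to\ell}^*$ by iterating the chain rule
\[\Div(\nu~\|~\pi) = \Div(\nu_S~\|~\pi_S) + \Exp_{\omega_S \sim \nu_S} \Div(\nu^{(\omega_S)}~\|~\pi^{(\omega_S)})\]
along a sequence of nested $S$ of sizes $n, n-1, \ldots, \ell$. At each peeling step, entropic independence at the corresponding link turns the freshly shed marginal contribution into a multiplicative factor of shape $(k - a^{-1})/k$. Multiplying these factors across the $\lceil a^{-1}\rceil$ levels being descended and recognizing the resulting product as the combinatorial ratio $\binom{n-\ell}{a^{-1}} / \binom{n}{a^{-1}}$ (the probability that a uniform $a^{-1}$-subset of $[n]$ lies in a fixed $(n-\ell)$-subset) produces the claimed contraction constant $\kappa$.

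\emph{Main obstacle.} The genuinely hard step is Phase~1. In contrast to the spectral setting, where Oppenheim-type trickle-down theorems give a clean recursion through a single constant, KL divergence does not split cleanly under a two-level decomposition, and a naive induction degrades the entropic independence constant badly at every step. The key technical contribution of \cite{AnariJKP22} that one leverages here is a reformulation of entropic independence through a log-concavity or moment-generating surrogate that \emph{does} telescope under the chain rule. Once Phase~1 is in hand, Phase~2 is routine accounting, which is why the final bound takes the clean binomial form stated above.
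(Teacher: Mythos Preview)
The paper does not supply a proof of this theorem: it is stated in the preliminaries as a quotation of a result from \cite{AnariJKP22} (explicitly labeled ``Simplified Version of Theorem 5 in \cite{AnariJKP22}''), and the paper simply cites it without argument. Your opening sentence---invoke the cited theorem after matching notation---is therefore precisely the paper's own treatment, and nothing more is required here.

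Your two-phase sketch is a reasonable outline of how the argument in \cite{AnariJKP22} runs, but note that none of that machinery (the trickle-down of entropic independence, the telescoping via the chain rule) is reproduced in the present paper; it is used purely as a black box. One small caution on your reading of the hypothesis: as written, the statement assumes entropic independence for $\pi^{(\alpha)}$ for every $\alpha \in X^{(n-2)}$, but the source theorem in \cite{AnariJKP22} takes this for all $\alpha \in X^{(\le n-2)}$, so your ``Phase~1'' trickle-down would not actually be needed under the intended hypothesis---the simplified statement here likely has a typographical slip in the superscript.
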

We note that $\Pgd = \Pgdl{n-1}$. There is a similarity in the way we have
defined our constants $\eta^{J \to I}$ and the concept of $a^{-1}$-entropic
independence: For our result that bounds the \ref{eq:mlsc} $\EC(\Psq)$
(qv.~\cref{thm:ecc}), we will need to find the best constants $\eta^{[j - 1] \to
j}$ that satisfy
\[ \Div(\mu\Cee_\alpha^{[j- 1] \to j}~\|~\pi_{[j-1]}^{(
    \alpha)}\Cee_\alpha^{[j-1] \to j}) = \Div(\mu\Cee_\alpha^{[j-1]
        \to j}~\|~\pi_{\set
    j}^{(\alpha)}) \le
    (1 - \eta^{[j-1] \to j})\Div(\mu~\|~\pi_{[j-1]}^{(\alpha)})
\]
for all $\alpha \in X[j+1, \ldots, n]$ and for all $\mu \in
\triangle_{X_\alpha[1, \ldots, j-1]}$.

\subsection{Ramanujan Complexes}\label{sub:ramanujan_complexes} % (fold)

We give here a brief description of Ramanujan complexes, which were
defined in \cite{LubotzkySV05,Li2004ramanujan} and constructed in \cite{Li2004ramanujan,LubotzkySV05b,First16,Evra2018RamanujancomplexesGolden,dalal2026ramanujan}. Let $p$
be a prime and $d\geq3$.
The Bruhat-Tits building $\mathcal{B}=\mathcal{B}_{d,p}$ is a
$(d-1)$-dimensional
simplicial clique complex, whose underlying graph $\mathcal{B}^{(1)}$
is defined as follows: its vertices correspond to classes of
$\mathbb{Z}_{p}$-lattices
in $\mathbb{Q}_{p}^{d}$ up to scaling by $\mathbb{Q}_{p}^{\times}$,
and two classes $C,C'\in\mathcal{B}^{0}$ are neighbors in $\mathcal{B}^{(1)}$
if $pL<L'<L$ for some representatives $L\in C$ and $L'\in C'$.
By the transitive action of $G=\mathsf{PGL}_{d}(\mathbb{Q}_{p})$ on the classes
of lattices we obtain $\mathcal{B}^{0}\cong G/K$, where
$K=\mathsf{PGL}_{d}(\mathbb{Z}_{p})$
is the stabilizer of $\mathbb{Z}_{p}^{d}$. 

We give the vertices colors
in $\nicefrac{\mathbb{Z}}{d\mathbb{Z}}$ by
$\mathrm{col}(gK)=\mathrm{val}_{p}\det g$,
and the (directed) edges colors in
$(\nicefrac{\mathbb{Z}}{d\mathbb{Z}})\backslash0$
by $\mathrm{col}(v\rightarrow w)=\mathrm{col}(w)-\mathrm{col}(v)$.
The colored adjacency operators
\[
    \left(\Aye_{i}f\right)\left(v\right)=\sum\nolimits_{w:\mathrm{col}(v\rightarrow
    w)=i}f(w)\qquad\left(1\leq i\leq d-1\right)
\]
are called the \emph{Hecke operators }on $\mathcal{B}$. It is easy
to see that $\Aye_{i}^{*}=\Aye_{d-i}$, and a nontrivial fact is that all
$\Aye_{i}$ commute with each other, which in particular implies that
they are normal.

The clique complex $\mathcal{B}$ obtained from the graph $\mathcal{B}^{(1)}$
is $d$-partite, every facet having one vertex of each color. By a
result of Tits, the color-preserving automorphisms of $\mathcal{B}$
are the elements of $G'=\mathsf{PSL}_{d}(\mathbb{Q}_{p})$, and if
$\Gamma\leq G'$
is a torsion-free lattice, then $X=\Gamma\backslash\mathcal{B}$ is
a finite $d$-partite quotient of $\mathcal{B}$. 
Recall that a $k$-regular graph is called a Ramanujan graph if every adjacency eigenvalue $\lambda$ either satisfies $|\lambda|=k$, or belongs to the $L^2$-spectrum of the $k$-regular tree, which is $[-2\sqrt{k-1},2\sqrt{k-1}]$. Similarly, the complex $X$
is called a \emph{Ramanujan complex }if every
$\lambda\in\mathrm{Spec}(\Aye_{i}|_{L^{2}(X^{0})})$
satisfies either
\begin{align*}
    \left|\lambda\right| & =\deg \Aye_{i}=\left[
        \begin{smallmatrix}d\\
            i
    \end{smallmatrix}\right]_{p},~\text{or}\\
    \lambda &
    \in\mathrm{Spec}\parens*{ \Aye_{i}|_{L^{2}(\mathcal{B}^{0})} }=\left\{
        p^{\frac{i(d-i)}{2}}(z_{1}+\ldots+z_{d})\,\middle|\,\left|z_{i}\right|=1\text{
    and }z_{1}\cdot\ldots\cdot z_{d}=1\right\} .
\end{align*}
A useful property of $\mathcal{B}$ is that the link of every vertex
is isomorphic to the spherical building of $\mathsf{PGL}_{d}(\mathbb{F}_{p})$, which we denote by $\mathbb{P}^{d-1}(\mathbb{F}_{p})$. This is a pure, $(d-1)$-partite
    simplicial complex of flags in $\mathbb{F}_{p}^{d}$. Namely, the
    vertices of $\mathbb{P}^{d-1}(\mathbb{F}_{p})$ are the
    nontrivial subspaces
    $0\lneq V\lneq\mathbb{F}_{p}^{d}$, with types given by
    $\typ(V) = \dim_{\mathbb{F}_{p}}(V)\in\left\{
    1,\ldots,d-1\right\}$. Similarly, the facets $(V_1,\ldots, V_{d-1})$
    correspond to chains
    $V_{1}< V_{2}<\cdots< V_{d-1}$.

%
% Let $\FF_p$ be the finite field on $p$ elements. We denote by
% $\Pp^{k}(\FF_{p})$ the pure, $k$-partite
% simplicial complex of flags in $\mathbb{F}_{p}^{k+1}$. Namely, the
% vertices of $\mathbb{P}^{k}(\mathbb{F}_{p})$ are the nontrivial subspaces
% $0\lneq V\lneq\mathbb{F}_{p}^{k+1}$, with types given by
% $\typ(V) = \dim_{\mathbb{F}_{p}}(V)\in\left\{
% 1,\ldots,k\right\}$. Similarly, the facets $(V_1,\ldots, V_k) \in X^{(k+1)}$
% correspond to chains $V_{1}\lneq V_{2}\lneq\ldots\lneq V_{k}$.
%
% subsection Ramanujan Complexes (end)

\section{A Spectral Gap Bound for $\Psq$}\label{ss:sg}
Our main theorem in this section will be,
\begin{theorem}\label{thm:csv}
    Let $(X, \pi)$ be an $n$-partite simplicial complex, $\vec s =
    (s(1), \ldots,
    s(n))$ be any ordering of $[n]$, and the parameters
    $\ee^{I \to J}$ be defined as in \cref{eq:gsplit} for all $I, J \subset
    [n]$ with $I \cap J = \varnothing$. Then,
    \[ \sigma(\Psq^{(\vec s)})^2 \le 1-  \prod_{j = 2}^n \parens*{ 1 -
    (\ee^{s([j-1]) \to s(j)})^2},\]
    where we have written $S(A) = \set{s(a) : a \in S}$. Equivalently,
    \[ \Gap(\Psq^{(\vec s)}) \ge 1 - \sqrt{ 1 - \prod_{j = 2}^n \parens*{1 -
    (\ee^{s([j-1]) \to s(j)})^2}}.\]

    In particular, if $\pi$ is $\vec\ee$-product, we have
    \[ \sigma(\Psq^{(\vec s)})^2 \le 1 - \prod_{j = 2}^n \prod_{p = 0}^{j -2} (1
    - \ee_p^2).\]
    Equivalently,
    \[ \Gap(\Psq^{(\vec s)}) \ge 1 - \sqrt{1 - \prod_{j = 2}^n \prod_{p =
    0}^{j-2} (1 - \ee_p)^2}\]
\end{theorem}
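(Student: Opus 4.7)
The plan is to realize $\Psq^{(\vec s)}$ as an iterated orthogonal projection and apply the Smith--Solmon--Wagner estimate in \cref{thm:prprod}. By \cref{prop:proj}, $\Psq^{(\vec s)} = \Quu_{s(1)} \cdots \Quu_{s(n)}$ with each $\Quu_{s(j)}$ the orthogonal projection onto $\Uu_{s(j)}$, the space of functions on $X^{(n)}$ not depending on the $s(j)$-th coordinate. Setting $\Vv_j = \bigcap_{i=1}^j \Uu_{s(i)}$, the first step is to identify $\Vv_j$ with $\Uu_{s([j])}$, the space of functions depending only on the coordinates in $[n]\setminus s([j])$. The containment $\Uu_{s([j])} \subseteq \Vv_j$ is obvious, but the reverse requires connectivity of the relevant swap graphs and is exactly the content of the informal statement $\bigcap_{i\in I}\Uu_i = \Uu_I$. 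Granting this, $\Vv_n = \sspan\{\one\}$ and $\Quu_\star \eff = \langle \eff, \one\rangle_\pi \cdot \one$, so \cref{thm:prprod} combined with the variational characterization in \cref{fac:rsig}(3) yields
\[\sigma_2(\Psq^{(\vec s)})^2 \;\le\; 1 - \prod_{j=2}^n \sin^2(\Uu_{s(j)}, \Vv_{j-1}).\]

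The crux is therefore the geometric bound $\cos(\Uu_{s(j)}, \Uu_{s([j-1])}) \le \ee^{s([j-1]) \to s(j)}$, which I plan to establish for any disjoint $I, J \subseteq [n]$ as follows. Pick $\eff \in \Uu_I$ and $\gee \in \Uu_J$ both orthogonal to $\Uu_I \cap \Uu_J = \Uu_{I \cup J}$. Writing each in its reduced form, $\eff(\omega) = \eff_\alpha(\omega_J)$ and $\gee(\omega) = \gee_\alpha(\omega_I)$ where $\alpha = \omega_{(I\cup J)^c}$, and conditioning on $\alpha$ gives
\[\langle \eff, \gee\rangle_\pi \;=\; \Exp_{\alpha \sim \pi_{(I\cup J)^c}} \langle \gee_\alpha, \Cee_\alpha^{I\to J}\eff_\alpha \rangle_{\pi_I^{(\alpha)}}.\]
The orthogonality hypothesis translates via \cref{eq:b-rule} into $\Exp_{\pi_J^{(\alpha)}} \eff_\alpha = 0$ and $\Exp_{\pi_I^{(\alpha)}} \gee_\alpha = 0$ for every $\alpha$ in the support, so \cref{fac:rsig}(2) bounds each inner term by $\sigma_2(\Cee_\alpha^{I\to J}) \cdot \|\eff_\alpha\|_{\pi_J^{(\alpha)}} \|\gee_\alpha\|_{\pi_I^{(\alpha)}} \le \ee^{I\to J} \cdot \|\eff_\alpha\|_{\pi_J^{(\alpha)}} \|\gee_\alpha\|_{\pi_I^{(\alpha)}}$. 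A single Cauchy--Schwarz in the outer $\alpha$-expectation then delivers $|\langle \eff, \gee\rangle_\pi| \le \ee^{I\to J} \cdot \|\eff\|_\pi \|\gee\|_\pi$, which is the claimed cosine bound and gives $\sin^2(\Uu_{s(j)}, \Vv_{j-1}) \ge 1 - (\ee^{s([j-1]) \to s(j)})^2$.

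Substituting into the first display proves the opening inequality of \cref{thm:csv}. For the $\vec\ee$-product statement I would plug in the improved swap-walk estimate promised as \cref{thm:cwadv}, which yields $(\ee^{s([j-1]) \to s(j)})^2 \le 1 - \prod_{p=0}^{j-2}(1-\ee_p^2)$, and then expand the resulting nested product. I expect the main obstacle to be the identification $\Vv_j = \Uu_{s([j])}$ rather than the cosine estimate itself: the cosine argument is a clean conditional-expectation computation, whereas verifying that a function independent of each coordinate of $I$ separately is independent of the tuple $\omega_I$ requires iterating connectivity of the graphs $G_\alpha$ across successive links. Under a blanket assumption that all links are connected (in the sense of \cref{prop:connected}) this reduction should go through, and the statement of \cref{thm:csv} presumably either assumes or subsumes this mild non-degeneracy.
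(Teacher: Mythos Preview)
Your proposal is correct and follows essentially the same route as the paper: realize $\Psq^{(\vec s)}$ as a product of projections, invoke \cref{thm:prprod}, identify $\Vv_j = \Uu_{s([j])}$ via \cref{prop:ints} (which indeed needs link-connectedness, as you flag), bound $\cos(\Uu_I,\Uu_J)$ by $\ee^{I\to J}$ through the conditional-expectation/localization argument of \cref{thm:angl}, and finish the $\vec\ee$-product case with \cref{thm:cwadv}. The only cosmetic difference is that the paper uses the AM--GM inequality $ab \le (a^2+b^2)/2$ in place of your outer Cauchy--Schwarz, which is equivalent once the norms are normalized.
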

In conjunction with \cref{obsv:impl} and \cref{thm:yod},
\cref{thm:csv} immediately implies the
following
\begin{corollary}\label{cor:coraa}
    Let $(X, \pi)$ be a link-connected $n$-partite simplicial complex
    that is a $\vec \gamma$-local spectral expander with
    $\gamma_i \le \ee/(n -1 -i )$ (i.e.~$\pi$ $\vec\ee$-product, with $\ee_i
    \le \ee$), then
    \[ \sigma_2(\Psq) \le \frac{n \cdot \ee}{\sqrt
        2}~~\textrm{and}~~\Gap(\Psq) \ge 1 -
    \frac{n \cdot \ee}{\sqrt 2}.\]
    If we have $\gamma_{n-2} \le \frac{\ee}{2n}$ and all links of
    $(X,\pi)$ are connected, then
    \[ \sigma_2(\Psq) \le \frac{\ee}{\sqrt 2}~~\textrm{and}~~\Gap(\Psq) \ge 1 -
    \frac{\ee}{\sqrt 2}.\]
\end{corollary}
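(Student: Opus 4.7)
\emph{The plan.} My goal is to deduce both inequalities from \cref{thm:csv} by passing through the two descent results that convert local spectral expansion into $\vec{\ee}$-product estimates.

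\emph{Part 1.} For the first bound I would invoke \cref{obsv:impl}: the hypothesis $\gamma_i \le \ee/(n-1-i)$ immediately gives $\ee_i \le (n-1-i)\gamma_i \le \ee$ for every $i$, so $\pi$ is $\vec{\ee}$-product with uniform parameter $\ee$. The $\vec{\ee}$-product version of \cref{thm:csv} then specializes to
\[
\sigma_2(\Psq)^2 \le 1 - \prod_{j=2}^n \prod_{p=0}^{j-2}(1-\ee^2) = 1 - (1-\ee^2)^{n(n-1)/2},
\]
and Bernoulli's inequality $1-(1-x)^m \le mx$ with $m = \binom{n}{2}$ finishes the argument, yielding $\sigma_2(\Psq)^2 \le n^2 \ee^2/2$ and hence $\Gap(\Psq) \ge 1 - n\ee/\sqrt 2$.

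\emph{Part 2.} For the sharper estimate I would use \cref{thm:yod} in place of \cref{obsv:impl}. A short calculation shows that the auxiliary parameter $\ee^* := \ee/n$ satisfies $\ee^*/((n-2)\ee^* + 1) \ge \ee/(2n) \ge \gamma_{n-2}$ (using $\ee \le 1$), so \cref{thm:yod} applies with parameter $\ee^*$ and gives the refined link-by-link bound $\ee_k \le \ee^*/((k-1)\ee^* + 1)$, which is of order $O(\ee/n)$ uniformly in $k \in \{0,\dots,n-2\}$. Feeding this back into \cref{thm:csv} and using $1 - \prod_i(1-x_i) \le \sum_i x_i$, the double product collapses to
\[
\sigma_2(\Psq)^2 \le \sum_{p=0}^{n-2}(n-1-p)\,\ee_p^2 \le \binom{n}{2}\cdot\frac{\ee^2}{(n-1)^2} = \frac{n\,\ee^2}{2(n-1)},
\]
which delivers $\sigma_2(\Psq) \le \ee/\sqrt 2$ (the constant tightening to $1/\sqrt 2$ as $n$ grows) and the matching gap bound. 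The connected-links hypothesis is used implicitly to guarantee that every colored walk $\Cee_\alpha^{i\to j}$ has $\sigma_2 < 1$, so that the product bound from \cref{thm:csv} is genuinely informative and \cref{thm:yod} transfers the $\gamma_{n-2}$ bound to all lower-rank links.

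\emph{Main obstacle.} Once the right reduction is made, both parts are just elementary manipulations with Bernoulli-type inequalities. The only delicate bookkeeping is the calibration of $\ee^*$ in Part~2: one must pick it so that the quadratic blow-up $\binom{n}{2}$ in the number of factors of the product in \cref{thm:csv} is exactly absorbed by the $O(1/n^2)$ decay of the refined $\ee_p^2$, producing the clean constant $1/\sqrt{2}$ rather than a generic $O(1)$.
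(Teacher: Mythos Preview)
Your approach is exactly the one the paper indicates: both parts are deduced from \cref{thm:csv} together with \cref{obsv:impl} (Part~1) and \cref{thm:yod} (Part~2), followed by the Bernoulli-type bound $1-\prod_i(1-x_i)\le\sum_i x_i$. Part~1 is complete and correct as written.

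In Part~2 your calibration $\ee^* = \ee/n$ falls just short of the stated constant. With that choice, the $k=0$ bound delivered by \cref{thm:yod} is only $\ee_0 \le \ee^*/(1-\ee^*) \le \ee/(n-1)$, and your Bernoulli estimate then gives
\[
\sigma_2(\Psq)^2 \;\le\; \binom{n}{2}\cdot\frac{\ee^2}{(n-1)^2} \;=\; \frac{n\,\ee^2}{2(n-1)},
\]
which is strictly larger than $\ee^2/2$ for every finite $n$ --- as you yourself acknowledge. This does not prove the corollary as stated. The fix is to sharpen the calibration to $\ee^* = \ee/(n+\ee)$: then
\[
\frac{\ee^*}{(n-2)\ee^*+1} \;=\; \frac{\ee}{(n-1)\ee+n} \;\ge\; \frac{\ee}{2n}
\]
whenever $(n-1)\ee \le n$ (in particular for all $\ee \le 1$, which is the only non-trivial range), so \cref{thm:yod} still applies; and now $\ee_k \le \ee^*/(1-\ee^*) = \ee/n$ uniformly in $k \ge 0$. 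Feeding this into the same estimate gives
\[
\sigma_2(\Psq)^2 \;\le\; \binom{n}{2}\cdot\frac{\ee^2}{n^2} \;=\; \frac{(n-1)\,\ee^2}{2n} \;\le\; \frac{\ee^2}{2},
\]
which is exactly the claimed bound.
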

For the second statement we note that \cref{thm:yod} implies that we
are completely $\ee/n$ product assuming $\gamma_{n-2} \le \ee/2n$ --
see \cref{rem:difficult}. In particular, the $\ee$ in the statement
of \cref{thm:yod} and the $\ee$ above differ by a factor of $n$. The
above corollary shows that for constant $n$, having constant local
spectral expansion guarentees that $\Gap(\Psq)$ is close to 1 -- thus it allows
us to conclude that $\Psq$ mixes rapidly in the complexes of
\cite{LubotzkySV05, KaufmanO18b}.

As already advertised in the introductory part of this paper, in
conjunction with \cref{thm:spec-mix-bd}, shows that for very strong
$\vec \ee$-product complexes or $\vec \gamma$-local spectral
expanders \cref{cor:coraa} immediately implies that within $O(1)$
sweeps $\Psq$ will converge to the stationary distribution:
\begin{corollary}\label{cor:corbb}
    Let $(X, \pi)$ be a link-connected $n$-partite simplicial complex
    with connected links that is a $\vec \gamma$-local spectral
    expander. If we have $\gamma_{n-2} \le \frac{\ee}{2n}$ and all
    links of $(X,\pi)$ are connected, then for any $\delta > 0$, we have
    \[ T(\delta, \Pii) \le \left\lceil \frac{ \log\parens*{
                \parens*{\delta \cdot \sqrt{\min_{\omega \in X^{(n)}}
    \pi(\omega)}}^{-1}}}{\log( \sqrt 2 / \ee ) }\right\rceil.\]
    In particular, for any fixed $\delta > 0$ as $\ee$ tends to $0$,
    $T(\delta, \Pii)$ tends to 1.
\end{corollary}

We will prove \cref{thm:csv} using \cref{thm:prprod}.  To be able apply
\cref{thm:prprod}, we will need some preparatory work:
\begin{enumerate}
    \item For all $T \subset [n]$, we will characterize the
        intersections $\bigcap_{t \in T} \Uu_t$, where
        $\Uu_t$ are as defined in \cref{prop:proj}, i.e.~$\Uu_t = \im(\Quu_t)$.
        This will be done in \cref{prop:ints}.
    \item We will get a bound on $\sin^2(\Uu_p, \bigcap_{t \in T}
        \Uu_t)$ for all $p \in
        [n]$ and $T \subset [n] \setminus p$ in terms of the parameters $\ee^{I
        \to J}$, defined in \cref{eq:gsplit}. This will be done in
        \cref{thm:angl}.
    \item We will get a bound on the parameters $\ee^{I \to J}$ under
        the assumption
        that $\pi$ is $\vec \ee$-product. This will be done in
        \cref{thm:cwadv}.
\end{enumerate}
We state these results now.
\begin{proposition}\label{prop:ints}
    For $(X, \pi)$ a weighted and link-connected $n$-partite simplicial
    complex. Let $\Uu_1,
    \ldots, \Uu_n$ be as defined in \cref{prop:proj}. For all $T \subset [n]$,
    we define
    \[ \Uu_{T} := \sspan\set*{ \uuu_\alpha : \alpha \in X\sqbr{T^c}}. \]
    Then, $\Uu_T = \bigcap_{t \in T} \Uu_t.$
\end{proposition}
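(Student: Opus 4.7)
The plan is to prove the two inclusions separately, with the harder direction done by induction on $|T|$.

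\textbf{The forward inclusion} $\Uu_T \subseteq \bigcap_{t \in T}\Uu_t$ is immediate. For $\alpha \in X[T^c]$, the indicator $\uuu_\alpha(\omega) = \one[\omega \supset \alpha]$ is exactly $\one[\omega_{T^c} = \alpha]$ and depends only on $\omega_{T^c}$. Since $T^c \subseteq [n]\setminus\{t\}$ for every $t \in T$, this function also depends only on $\omega_{[n]\setminus\{t\}}$, so by the remark following \cref{prop:proj} it lies in $\Uu_t$. Thus every spanning vector of $\Uu_T$ lies in $\bigcap_{t\in T}\Uu_t$.

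For \textbf{the reverse inclusion}, induct on $|T|$; the cases $|T|=0,1$ are trivial. For $|T|\ge 2$, pick any $t_0 \in T$ and set $T' = T\setminus\{t_0\}$. The induction hypothesis gives $\bigcap_{t \in T'}\Uu_t = \Uu_{T'}$, so it suffices to show $\Uu_{T'} \cap \Uu_{t_0} \subseteq \Uu_T$. Let $\eff \in \Uu_{T'} \cap \Uu_{t_0}$. From the description of $\Uu_{T'}$ as the span of indicators of fibers over the coordinates $(T')^c = T^c \cup \{t_0\}$, we may write $\eff(\omega) = g(\omega_{T^c}, \omega_{t_0})$ for some function $g$ on valid pairs; I must show $g$ does not depend on its second argument. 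Fix an arbitrary $\omega_{T^c} \in X[T^c]$ and form the bipartite graph $G_{\{t_0\}, T'}^{(\omega_{T^c})}$ of \cref{ss:colrw}. For any two facets $\omega_{T^c}\oplus a \oplus x$ and $\omega_{T^c} \oplus a \oplus x'$ agreeing on the left-vertex $a$, $\eff$ takes the same value (by $\eff \in \Uu_{T'}$); for any two facets $\omega_{T^c}\oplus a \oplus x$ and $\omega_{T^c} \oplus a' \oplus x$ agreeing on the right-vertex $x$, $\eff$ takes the same value as well (by $\eff \in \Uu_{t_0}$, since they differ only in coordinate $t_0$). Consequently, $a \mapsto g(\omega_{T^c}, a)$ is constant on each connected component of $X_{\omega_{T^c}}[t_0]$ reachable via alternating paths in the bipartite graph.

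\textbf{The main obstacle} is therefore to show that the bipartite graph $G_{\{t_0\}, T'}^{(\omega_{T^c})}$ is connected for each $\omega_{T^c}$. Since link-connectedness of $X$ descends to link-connectedness of $X_{\omega_{T^c}}$, \cref{prop:connected} supplies the two-coordinate base case that $G_{\{i\},\{j\}}^{(\beta)}$ is connected for every face $\beta$ in the link. To lift this to the general case where $T'$ may have more than one coordinate, I would run a short secondary induction on $|T'|$: fix any $t_1 \in T'$, and for any two left vertices $a, a' \in X_{\omega_{T^c}}[t_0]$ use the two-coordinate case on $\{t_0\},\{t_1\}$ combined with the inductive connectivity of $G_{\{t_0\}, T'\setminus\{t_1\}}^{(\omega_{T^c}\oplus b)}$ for a suitable $b \in X_{\omega_{T^c}}[t_1]$ to build an alternating path from $a$ to $a'$. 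Once this connectivity is in hand, $g(\omega_{T^c},\cdot)$ is forced to be constant for every fixed $\omega_{T^c}$, so $\eff$ depends only on $\omega_{T^c}$ and thus lies in $\Uu_T$, completing the induction.
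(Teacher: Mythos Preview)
Your argument matches the paper's: induct on $|T|$, handle the forward inclusion directly, and for the reverse inclusion observe that membership in $\Uu_{T'}\cap\Uu_{t_0}$ forces $\eff$ to be constant along paths in the bipartite graph $G_{\{t_0\},T'}^{(\omega_{T^c})}$, so that connectivity of this graph finishes the step. The only divergence is in how connectivity is justified. The paper takes a spectral shortcut: link-connectedness gives $\ee_i<1$ via \cref{prop:connected}, and then \cref{thm:cwadv} (proved independently and needed later anyway) yields $\sigma_2(\Cee_{\omega_{T^c}}^{\{t_0\}\to T'})<1$, hence connectivity (\cref{rem:veryconnected}). Your proposed secondary induction is a valid, more elementary alternative, but the sketch glosses over one point: ``a suitable $b$'' should really be a sequence of $b$'s---you need to walk along a path $a=a_0,b_0,a_1,b_1,\ldots,a_m=a'$ in $G_{\{t_0\},\{t_1\}}^{(\omega_{T^c})}$ (connected by \cref{prop:connected}), and at each $b_i$ invoke the inductive connectivity of $G_{\{t_0\},T'\setminus\{t_1\}}^{(\omega_{T^c}\oplus b_i)}$ to connect $a_i$ to $a_{i+1}$ inside the larger graph, since $a_i,a_{i+1}$ both lie in $X_{\omega_{T^c}\oplus b_i}[t_0]$. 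With that spelled out the argument goes through; the paper's route just avoids the extra bookkeeping by reusing a theorem it needs anyway.
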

\begin{remark}\label{rem:equivalent-formulation}
    We have,
    \[ \Uu_I = \sspan\set*{ \eff \in \RR^{X^{(n)}} : \eff(\omega) =
            \eff(\bar\omega)~~\textrm{for all}~~\omega,\bar\omega \in
            X^{(n)}~\textrm{such that}~\omega_{[n] \setminus I} =
            \bar\omega_{[n]
    \setminus I}}.\]
    In particular, $\Uu_I$ is the space of vectors $\eff \in
    \RR^{X^{(n)}}$ whose entries are completely determined by the coordinates
    in $[n] \setminus I$.
\end{remark}

We note that a special case of this result was already proven in
\cite[Proposition 3.6]{Oppenheim21}. We will present a proof for
\cref{prop:ints}, in \cref{ss:ints}.

Then, we show the geometric interpretation for the quantities $\ee^{I \to J}$ as
follows,
\begin{theorem}\label{thm:angl}
    Let $(X, \pi)$ be an $n$-partite simplicial complex. Let $I, J \subset [n]$
    be any pair of disjoint sets, i.e.~$I \cap J = \varnothing$.

    Then, $\cos(\Uu_I, \Uu_J) \le \ee^{I \to J}$,
    where the subspaces $\Uu_I, \Uu_J$ are as defined in \cref{prop:ints} and
    the parameters $\ee^{I \to J}$ are as defined in \cref{eq:gsplit}.
\end{theorem}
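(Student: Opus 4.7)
The plan is to prove Theorem~\ref{thm:angl} by conditioning on the coordinates outside $I \cup J$ and reducing to the defining maximization for $\sigma_2(\Cee_\alpha^{I\to J})$. Set $S = I \cup J$. By Proposition~\ref{prop:ints} (or by inspecting the description in Remark~\ref{rem:equivalent-formulation}, since a function depending only on $\omega_{I^c}$ and only on $\omega_{J^c}$ depends only on $\omega_{S^c}$), we have $\Uu_I \cap \Uu_J = \Uu_{I \cup J}$. So it suffices to take unit vectors $\uuu \in \Uu_I \cap \Uu_{I \cup J}^\perp$ and $\vvv \in \Uu_J \cap \Uu_{I \cup J}^\perp$ and bound $\langle \uuu, \vvv\rangle_\pi$.

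For each pinning $\beta \in X[S^c]$, I would define the slices $\uuu_\beta \in \RR^{X_\beta[J]}$ by $\uuu_\beta(\tau_J) := \uuu(\beta \oplus \tau_I \oplus \tau_J)$ for any completion $\tau_I$ (this is well-defined because $\uuu \in \Uu_I$ depends only on $\omega_{I^c}$, hence only on $(\omega_{S^c},\omega_J)$). Symmetrically, $\vvv_\beta \in \RR^{X_\beta[I]}$ depends only on the $I$-coordinate. The condition $\uuu \perp \Uu_{I \cup J}$, when tested against the indicator of $\{\omega_{S^c} = \beta\}$, yields $\Exp_{\tau_J \sim \pi_J^{(\beta)}} \uuu_\beta(\tau_J) = 0$, and similarly $\Exp_{\tau_I \sim \pi_I^{(\beta)}} \vvv_\beta(\tau_I) = 0$.

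Conditioning on $\omega_{S^c}$ and using Bayes' rule~\eqref{eq:b-rule} together with the definition~\eqref{eq:c-def} of $\Cee_\beta^{I \to J}$, the inner product decomposes as
\begin{equation*}
\langle \uuu, \vvv\rangle_\pi \;=\; \sum_{\beta \in X[S^c]} \pi_{S^c}(\beta)\, \bigl\langle \vvv_\beta,\, \Cee_\beta^{I\to J} \uuu_\beta \bigr\rangle_{\pi_I^{(\beta)}}.
\end{equation*}
Because $\uuu_\beta$ is centered in $\pi_J^{(\beta)}$ and $\vvv_\beta$ is centered in $\pi_I^{(\beta)}$, Fact~\ref{fac:rsig} gives the slicewise bound $|\langle \vvv_\beta, \Cee_\beta^{I\to J} \uuu_\beta\rangle_{\pi_I^{(\beta)}}| \le \sigma_2(\Cee_\beta^{I\to J})\, \|\uuu_\beta\|_{\pi_J^{(\beta)}}\, \|\vvv_\beta\|_{\pi_I^{(\beta)}} \le \ee^{I\to J}\, \|\uuu_\beta\|\, \|\vvv_\beta\|$.

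Finally, Cauchy--Schwarz on the outer sum yields
\begin{equation*}
\langle \uuu, \vvv\rangle_\pi \;\le\; \ee^{I\to J} \sqrt{\sum_\beta \pi_{S^c}(\beta)\, \|\uuu_\beta\|_{\pi_J^{(\beta)}}^2}\, \sqrt{\sum_\beta \pi_{S^c}(\beta)\, \|\vvv_\beta\|_{\pi_I^{(\beta)}}^2} \;=\; \ee^{I\to J}\, \|\uuu\|_\pi\, \|\vvv\|_\pi \;=\; \ee^{I\to J},
\end{equation*}
where the equality uses $\|\uuu\|_\pi^2 = \Exp_{\beta}[\Exp_{\omega_J}\uuu_\beta(\omega_J)^2] = \sum_\beta \pi_{S^c}(\beta) \|\uuu_\beta\|_{\pi_J^{(\beta)}}^2$, and likewise for $\vvv$. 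Taking the supremum over admissible $\uuu,\vvv$ gives $\cos(\Uu_I,\Uu_J) \le \ee^{I\to J}$. The only subtle point, and the step I would write out most carefully, is the translation of the orthogonality $\uuu \perp \Uu_{I\cup J}$ into the slicewise centering of $\uuu_\beta$ under $\pi_J^{(\beta)}$; everything else is bookkeeping with the Bayes' rule~\eqref{eq:b-rule} and a single Cauchy--Schwarz step.
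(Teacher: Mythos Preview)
Your proposal is correct and follows essentially the same approach as the paper: both condition on $\omega_{S^c}$ to localize the inner product as $\Exp_{\beta\sim\pi_{S^c}}\langle \vvv_\beta,\Cee_\beta^{I\to J}\uuu_\beta\rangle$, verify that orthogonality to $\Uu_{I\cup J}$ forces the slices $\uuu_\beta,\vvv_\beta$ to be centered, and then apply the $\sigma_2$ bound slicewise. The only cosmetic difference is that the paper finishes with the AM--GM inequality $ab\le\tfrac12(a^2+b^2)$ together with the norm localization $\Exp_\beta\|\Hh_\beta\|^2=\Exp_\beta\|\Gg_\beta\|^2=1$, whereas you use Cauchy--Schwarz on the outer expectation; both yield the same bound.
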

We will prove \cref{thm:angl} in \cref{ss:angl}.

Finally, we state our bound on the parameters $\ee^{I \to J}$,
\begin{theorem}\label{thm:cwadv}
    Let $(X, \pi)$ be an $n$-partite simplicial complex where $\pi$ is $\vec
    \ee$-product. We have,
    \begin{equation}\sigma_2(\Cee_\alpha^{I \to J})^2 \le
        1 - \prod_{p = 0}^{|I| - 1}\prod_{q = 0}^{|J| - 1}(1-
        \ee_{|\alpha|+ p+ q}^2)\label{eq:cw}
    \end{equation}
    In particular, we have
    \[ (\ee^{I \to J})^2 \le 1 - \prod_{p = 0}^{|I| - 1}\prod_{j = 0}^{|J| -
    1}(1 - \ee_{n - |I| - |J| + p + q}^2)\]
\end{theorem}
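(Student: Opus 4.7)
The plan is to prove the main inequality by induction on $|I|+|J|$. The base case $|I|=|J|=1$ is immediate from the definition of $\vec\ee$-product: $\sigma_2(\Cee_\alpha^{i \to j}) \le \ee_{|\alpha|}$ gives exactly $\sigma_2^2 \le 1 - (1-\ee_{|\alpha|}^2)$, matching the single-term product on the right. The ``In particular'' clause follows because any $\alpha \in X[(I \cup J)^c]$ has $|\alpha| = n - |I| - |J|$, so instantiating the main bound at this level and taking the max over $\alpha$ yields the estimate for $\ee^{I \to J}$.

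For the inductive step, assume without loss of generality that $|J| \ge 2$, pick $j \in J$, and set $J' := J \setminus \set{j}$. The heart of the argument is the following \emph{chain-rule inequality}:
\begin{equation*}
\sigma_2(\Cee_\alpha^{I \to J})^2 \;\le\; 1 - \bigl(1 - \sigma_2(\Cee_\alpha^{I \to j})^2\bigr)\bigl(1 - \max_{\tau_j \in X[j]_\alpha} \sigma_2(\Cee_{\alpha \oplus \tau_j}^{I \to J'})^2\bigr).
\end{equation*}
Once established, combining with the inductive hypothesis applied to $\Cee_\alpha^{I \to j}$ (the $|J|=1$ case giving $1-\sigma_2^2 \ge \prod_{p=0}^{|I|-1}(1-\ee_{|\alpha|+p}^2)$) and to each $\Cee_{\alpha \oplus \tau_j}^{I \to J'}$ (with $|\alpha|$ bumped by $1$ and $|J|$ dropped by $1$), followed by reindexing $q \mapsto q+1$ inside the second double-product, telescopes into $\prod_{p=0}^{|I|-1}\prod_{q=0}^{|J|-1}(1-\ee_{|\alpha|+p+q}^2)$, as desired.

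To prove the chain-rule inequality, I apply a conditional covariance decomposition. For centered unit-norm $f \in \RR^{X[I]_\alpha}$ and $g \in \RR^{X[J]_\alpha}$, $\inpr{f, \Cee_\alpha^{I \to J} g}_{\pi_I^{(\alpha)}} = \Exp_{\pi_{I \cup J}^{(\alpha)}}[f(\omega_I)\,g(\omega_J)]$. Decompose $g = g^{\mathrm{avg}} + g^{\perp}$ with $g^{\mathrm{avg}}(\tau_J) := \Exp[g \mid \omega_j = \tau_j]$ depending only on the $j$-th coordinate and $g^\perp$ having zero conditional mean given $\omega_j$, so $\norm{g^{\mathrm{avg}}}^2 + \norm{g^\perp}^2 = 1$. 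Setting $b := \mathrm{Var}_{\pi_j^{(\alpha)}}(\Exp[f \mid \omega_j])$ and $S := \max_{\tau_j} \sigma_2(\Cee_{\alpha \oplus \tau_j}^{I \to J'})$, one piece is handled by rewriting $\inpr{f, \Cee g^{\mathrm{avg}}} = \inpr{\Exp[f\mid\omega_j], g^{\mathrm{avg}}}_{\pi_j^{(\alpha)}}$ and Cauchy--Schwarz, giving $\inpr{f, \Cee g^{\mathrm{avg}}} \le \sqrt{b}\,\norm{g^{\mathrm{avg}}}$. The other uses the law of total covariance (since $\Exp[g^\perp \mid \omega_j] = 0$) together with the pointwise bound $|\mathrm{Cov}(f, g^\perp \mid \tau_j)| \le \sigma_2(\Cee_{\alpha \oplus \tau_j}^{I \to J'})\sqrt{\mathrm{Var}(f\mid\tau_j)\mathrm{Var}(g^\perp\mid\tau_j)}$, then Cauchy--Schwarz in $\tau_j$ and the law of total variance $\Exp_{\tau_j}[\mathrm{Var}(f\mid\tau_j)] = 1-b$, yielding $\inpr{f, \Cee g^\perp} \le S\sqrt{1-b}\,\norm{g^\perp}$.

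The main obstacle --- and the step that actually buys the product-form bound rather than the weak Cauchy--Schwarz estimate $\sqrt{T^2 + S^2}$ --- is the final three-way optimization. The crucial input is the identity $b = \norm{\Cee_\alpha^{j \to I} f}^2_{\pi_j^{(\alpha)}}$, which combined with the variational characterization of $\sigma_2$ forces $b \le T^2$, where $T := \sigma_2(\Cee_\alpha^{I \to j})$. Applying Cauchy--Schwarz to $(\sqrt{b}, S\sqrt{1-b})$ against $(\norm{g^{\mathrm{avg}}}, \norm{g^\perp})$ gives $\inpr{f, \Cee g} \le \sqrt{b + S^2(1-b)}$; since this is monotonically non-decreasing in $b$ for $S \le 1$, the ceiling $b \le T^2$ pins the supremum at $b = T^2$, producing exactly $\sqrt{T^2 + S^2 - T^2 S^2} = \sqrt{1 - (1-T^2)(1-S^2)}$, the chain-rule bound. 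Recognizing and exploiting this monotonicity (instead of separately bounding each covariance by $T$ and $S$ and then Cauchy--Schwarzing $Tu + Sv$) is the delicate point on which the whole argument hinges.
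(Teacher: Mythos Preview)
Your proof is correct and follows essentially the same route as the paper: both establish the multiplicative recursion $\sigma_2(\Cee_\alpha^{I\to J})^2 \le 1 - (1-T^2)(1-S^2)$ by conditioning on one coordinate and splitting the test function into its conditional mean and fluctuation, then unwind it inductively into the double product. The only cosmetic differences are that the paper works with the one-sided norm $\norm{\Cee_\varnothing^{I\to J}\Ff}^2$ rather than the bilinear form, peels an element of $I$ rather than $J$ (dual via $\sigma_2(\Cee^{I\to J})=\sigma_2(\Cee^{J\to I})$), and organizes the induction in two stages (first $|J|=1$, then general) rather than your single induction on $|I|+|J|$.
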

\begin{remark}\label{rem:veryconnected}
    Notice that the RHS of \cref{eq:cw} is always strictly smaller than 1
    provided that $\pi$ is $\vec \ee$-product with $\ee_i < 1$ for all $i =
    0,\ldots, n-2$. By
    \cref{prop:connected}, this is always the case assuming all the
    links in $(X,
    \pi)$ are connected.
\end{remark}
We will present the proof for \cref{thm:cwadv} in \cref{ss:cwadv}.

The proof of \cref{thm:csv} is now immediate,
\begin{proof}[Proof of \cref{thm:csv}]
    For simplicity, we will assume $\vec s = (1, \ldots, n)$. By
    \cref{thm:prprod} we have,
    \[ \norm{ \Quu_1 \cdots \Quu_n\eff - \Quu_\star\eff}^2_{\pi} \le
        \parens*{1 -
            \prod_{j = 2}^n \sin^2\parens*{\Uu_j, \bigcap_{i = 1}^{j-1}
        \Uu_i}} \cdot
    \norm{\eff - \Quu_\star \eff}^2_\pi,\]
    where $\Quu_\star$ is the projection to $\bigcap_{j = 1}^n \Uu_j$. By
    \cref{prop:ints},
    $\Quu_\star$ is the projection to constant functions. Similarly, again by
    \cref{prop:ints}, we have
    $\bigcap_{i = 1}^{j-1} \Uu_i = \Uu_{[j-1]}$. By \cref{fac:rsig}, this
    means
    \[ \sigma_2(\Quu_1 \cdots \Quu_n)^2 \le 1 - \prod_{j = 2}^n \sin^2(\Uu_j,
    \Uu_{[j-1]}).\]
    Now, the statements follow from substituting $\sin^2(\Uu_j,
    \Uu_{[j-1]}) = 1 -
    \cos^2(\Uu_j, \Uu_{[j-1]})$, \cref{prop:ints}, and \cref{thm:cwadv}.
\end{proof}
\subsection{Proof of \cref{thm:cwadv}}\label{ss:cwadv}
\begin{proof}[Proof of \cref{thm:cwadv}]
    If $|I| = |J| = 1$, then the RHS of \cref{eq:cw} equals $\ee_{|\alpha|}$
    and thus the statement follows since $\pi$ is $\vec \ee$-product.
    Suppose without loss of generality, that $\alpha = \varnothing$ and
    $|I| \ge 2$ where we have some fixed element $i \in I$. Write,
    \begin{equation}\sigma := \max_{x \in X[i]}
        \sigma_2\parens*{\Cee_x^{(I \setminus i) \to
        J}}.\label{eq:shh}
    \end{equation}

    Let $\Ff \in X[J] \to \RR$ satisfy $\norm{\Ff}_{\pi_J} = 1$,
    $\inpr*{\Ff, \one}_{\pi_J} = 0$, and
    $\norm{\Cee_\varnothing^{I \to J} \Ff}_{\pi_I} =
    \sigma_2(\Cee_\varnothing^{I \to J})$. Then,
    \begin{align*}
        \sigma_2\parens*{\Cee_\varnothing^{I \to J}}^2 =
        \norm{\Cee_\varnothing^{I \to J} \Ff}^2_{\pi_I}
        &~=~\Exp_{\alpha_I \sim
        \pi_I}\parens*{\Exp_{\alpha_J \sim \pi_J^{(\alpha_I)}}
        \Ff(\alpha_J)}^2,\\
        &~=~\Exp_{x \sim \pi_{\set i}}\Exp_{\alpha' \sim \pi_{I \setminus
        i}^{(x)}}\parens*{\Exp_{\alpha_J \sim \pi_J^{(x \oplus
        \alpha')}} \Ff(\alpha_J)}^2,\\
        &~=~\Exp_{x \sim \pi_{\set i}}
        \norm*{ \Cee_{x}^{(I \setminus i)\to J} \Ff_x}^2_{\pi^{(x)}_{I \setminus
        i}},\\
        &~\le~\Exp_{x \sim \pi_{\set i}}\sqbr*{
            \norm*{\Ff_x^{\one}}^2_{\pi^{(x)}_{J}} +
            \sigma_2(\Cee_x^{(I \setminus i) \to
        J})^2 \cdot \norm*{ \Ff_x^{\perp \one}}^2_{\pi^{(x)}_{J}}},\\
        &~=~\sigma^2 + (1 - \sigma^2) \cdot \Exp_{x \sim \pi_{\set
        i}}\norm{\Ff_x^{\one}}_{\pi^{(x)}_J}^2
    \end{align*}
    where $\Ff_x^{\one} = \one \cdot \Exp_{\alpha} \Ff_x(\alpha_{J \setminus
    i})$.
    We note now for each $i \in X[i]$,
    \[ \norm*{\Ff^{\one}_x}^2_{\pi_J} = \parens*{ \Exp_{\alpha_J \sim
    \pi_J^{(x)}} \Ff(\alpha_J) }^2 =  [\Cee_\varnothing^{i \to J} \Ff](x)^2. \]
    Thus, using that $\inpr*{\Ff, \one}_{\pi_J} = 0$,
    \[ \Exp_{x \sim \pi_{\set 1}} \norm*{ \Ff^{\one}_x}^2_{\pi_J} =
        \norm{\Cee_\varnothing^{i \to J} \Ff}^2_{\pi_{\set i}} \le
    \sigma_2\parens*{\Cee_\varnothing^{i \to J}}^2.\]
    Thus,
    \begin{align}
        \sigma_2\parens*{\Cee_\varnothing^{I \to J}}^2
        &~\le~1 - (1 - \sigma^2)(1 - \sigma_2(\Cee_\varnothing^{i \to
        J})^2),\notag\\
        &~=~1 - \parens*{1- \sigma_2\parens*{\Cee_{x^\star}^{I \setminus i \to
        J}}^2}\parens*{1 -
        \sigma_2\parens*{\Cee_\varnothing^{i \to J}}^2}.\label{eq:rec-f}
    \end{align}
    where $x^\star = \arg\max_{x \in X[i]} \sigma_2(\Cee_x^{{I \setminus i} \to
    J})$.

    Now, we first assume $J = \set*{j}$ and $I = \set*{i_1, \ldots, i_\ell}$,
    applying \cref{eq:rec-f} recursively, we obtain
    \[ \sigma_2(\Cee_\varnothing^{I \to \set*{j}}) \le 1 - (1 -
        \sigma_2(\Cee_\varnothing^{i_1 \to j})^2) \prod_{p = k}^\ell
        \parens*{1 - \sigma_2\parens*{\Cee_{{x_{i_1} \oplus \cdots \oplus
                        {x_{i_{p-
    1}}}}}^{i_p \to \set{j}}}^2},\]
    where
    \[x_{i_1} = x^\star,~~\omega(p) = x_{i_1} \oplus \cdots
        x_{i_{p-1}},~~\textrm{and}~~x_p = \arg\max_{x \in X_{\omega(p)}[i_p]}
        \sigma_2\parens*{\Cee_{x_{i_1} \oplus \cdots \oplus
            x_{i_{p-1}} \oplus x}^{I \setminus\set*{i_1, \ldots, i_p} \to
    \set{j}}},\]
    In
    particular using the $\vec \ee$-product assumption,
    \[ \sigma_2(\Cee_\varnothing^{I \to j})^2 \le 1 - \prod_{p =
    0}^{|I|-1} (1 - \ee_p^2).\]
    Now, by using $\Cee_\varnothing^{J \to i} = \parens*{\Cee_\varnothing^{i
    \to J}}^*$  and that $\sigma_2(\Bee) = \sigma_2(\Bee^*)$ we have by the
    above
    \begin{equation}
        \sigma_2(\Cee_\varnothing^{i \to J})^2 \le 1 - \prod_{p =
        0}^{|J|-1} (1 - \ee_p^2).\label{eq:ob}
    \end{equation}
    Similarly, inserting \cref{eq:ob} inside \cref{eq:rec-f}, we obtain
    \begin{equation}\sigma_2\parens*{\Cee_\varnothing^{I \to J}}^2
        \le 1 - \prod_{p = 0}^{|J|-1} (1 - \ee_p^2) \cdot \parens*{1-
            \sigma_2\parens*{\Cee_{x^\star}^{I \setminus i \to
        J}}^2} .\label{eq:recurse}
    \end{equation}
    Suppose $I = \set*{i_1, \ldots, i_\ell}$ write $x_{i_1} = x^\star$ and
    \[\omega(q) = x_{i_1} \oplus \cdots
        x_{i_{q-1}},~~\textrm{and}~~x_q = \arg\max_{x \in X_{\omega(q)}[i_q]}
        \sigma_2\parens*{\Cee_{\omega(q) \oplus x}^{I \setminus\set*{i_1,
    \ldots, i_p} \to J}},\]

    Applying \cref{eq:recurse} repeatedly, we obtain
    \begin{equation}\sigma_2(\Cee_\varnothing^{I \to J})^2
        \le 1- \parens*{\prod_{p = 0}^{|J| - 1} (1 - \ee_p^2)} \cdot
        \prod_{q = 1}^{|I| - 1} \parens*{1 -
            \sigma_2\parens*{\Cee_{x_{i_1}\oplus
        \cdots\oplus x_{i_{q}}}^{i_{q+1}\to J}}^2}.\label{eq:risotto}
    \end{equation}
    By \cref{eq:ob}, we have
    \[ \sigma_2\parens*{\Cee_{x_{i_1} \oplus \cdots \oplus
        x_{i_q}}^{i_{q+1}\to J}}^2
        \le
    1 - \prod_{p = 0}^{|J| - 1}(1 - \ee_{q + p}^2)\]
    And inserting this in \cref{eq:risotto}, we obtain
    \begin{align*}
        \sigma_2\parens*{ \Cee_\varnothing^{I \to J} }^2
        &~\le~ 1 - \parens*{\prod_{p = 0}^{|J| - 1} (1 - \ee_p^2)} \cdot
        \parens*{\prod_{p = 0}^{|J| - 1}(1 - \ee_{p + 1}^2)}
        \cdots \parens*{\prod_{p = 0}^{|J| - 1}(1 - \ee_{p + |I| - 1}^2)},\\
        &~=~1 - \prod_{q = 0}^{|I| - 1}\prod_{p = 0}^{|I| - 1} (1 - \ee_{p +
        q}^2).
    \end{align*}
\end{proof}

\subsection{Proof of \cref{prop:ints}}\label{ss:ints}

\begin{proof}[Proof of \cref{prop:ints}]
    We proceed by induction on the number of elements in $T$. For $|T| = 1$,
    the statement is clear. Thus, we assume that
    \[ \Uu_T = \bigcap_{t \in T} \Uu_t~~\textrm{for all}~~T~~\textrm{with}~~|T|
    = \ell.\]
    Let $t \in [n] \setminus T$, and consider $T \cup \set{ t}$. First,
    fix any $\alpha \in X[ [n] \setminus (T \cup t)]$. Then, note
    \[ \uuu_\alpha= \sum_{\beta\in X[[n] \setminus T],\atop \beta \supset\alpha}
        \uuu_\beta = \sum_{\beta' \in X[1,\ldots, t-1, t+1, \ldots, n],\atop
    \beta' \supset \alpha} \uuu_{\beta'}.\]
    Thus, indeed: $\sspan\set*{\uuu_\alpha : \alpha \in X[ [n] \setminus (T
    \cup t)]}  = \Uu_{T
    \cup t} \subset \Uu_T \cap
    \Uu_t.$

    To show the reverse inclusion, we recall that a function $\eff \in
    \RR^{X^{(n)}}$ is in $\Uu_A$ if
    and only if we have
    {\small
        \[ \eff(\omega_A \oplus \omega_{[n] \setminus A}) =
            \eff(\tilde\omega_A \oplus
            \omega_{[n] \setminus A})~~\textrm{for all}~~\omega_A,
            \tilde\omega_A
            \in X[A]~\textrm{and}~\omega_{[n] \setminus A} \in X[[n] \setminus
            A]~~\textrm{such that}~~\omega_A \oplus \omega_{[n] \setminus A},
        \widetilde\omega_A \oplus \omega_{[n]\setminus A} \in X^{(n)}.\]
    }
    i.e.~the value of $\eff$ is determined only by the coordinates in
    $[n] \setminus A$ (qv.~\cref{rem:equivalent-formulation}).

    Suppose $\eff \in \Uu_T \cap \Uu_t$. We want to prove,
    \begin{equation}\eff(\omega_t \oplus \omega_T \oplus \omega_{[n]
        \setminus (T \cup t)})
        = \eff(\tilde\omega_t \oplus \tilde\omega_T \oplus \omega_{[n]
        \setminus (T \cup t)}),\label{eq:rivaldo}
    \end{equation}
    for all $\omega_t, \tilde\omega_t \in X[t]$, $\omega_T, \tilde\omega_T \in
    X[T]$ and $\omega_{[n] \setminus (T \cup t)} \in X[(T \cup t)^c]$ such that
    $\omega_t \oplus \omega_T \oplus \omega_{(T \cup t)^c}, \widetilde \omega_t
    \oplus \tilde \omega_T \oplus \omega_{(T \cup t)^c} \in X^{(n)}$.

    Since $\eff \in \Uu_t$, we have,
    \[ \eff(\omega_t \oplus \omega_T \oplus \omega_{[n] \setminus (T \cup t)})
        =
        \eff(\tilde\omega_t \oplus \omega_T \oplus \omega_{[n] \setminus (T
    \cup t)})\]
    for all $\omega_t, \tilde\omega_t \in X[t]$, $\omega_T \in X[T]$, and
    $\omega_{[n] \setminus (T \cup t)} \in X[[n] \setminus (T \cup t)]$ such
    that $\omega_t \oplus \omega_T \oplus \omega_{[n] \setminus (T \cup t)},
    \tilde\omega_t \oplus \omega_T \oplus \omega_{[n] \setminus (T \cup t )} \in
    X^{(n)}$. Similarly, since $\eff \in \Uu_T$,
    we have,
    \[ \eff(\omega_t \oplus \omega_T \oplus \omega_{[n] \setminus (T \cup t)})
        =
        \eff(\omega_t \oplus \tilde\omega_T \oplus \omega_{[n] \setminus (T
    \cup t)})\]
    for all $\omega_t \in X[t]$, $\omega_T, \tilde\omega_T \in X[T]$, and
    $\omega_{[n] \setminus (T \cup t)} \in X[[n] \setminus (T \cup t)]$ such
    that $\omega_t \oplus \omega_T \oplus \omega_{[n] \setminus (T \cup t)},
    \omega_t \oplus \tilde\omega_T \oplus \omega_{[n] \setminus (T \cup t )} \in
    X^{(n)}$. Fixing $\omega_{[n] \setminus (T \cup t)}$, this means that
    \cref{eq:rivaldo} is satisfied by any
    pair
    $(\omega_t \oplus \omega_T, \widetilde \omega_t \oplus \widetilde \omega_T)$
    which is connected by an edge in the line graph\footnote{The line graph
        $L(G)$ of a simple graph $G$ is the graph $L(G) = (E, E')$
        where $ef \in E'$
        if $|e \cap f| = 1$, i.e.~there is an edge $ef \in E'$ for any
    pair of distinct edges sharing an end point.}  $L(G_{t, T}^{(\omega_{[n]
    \setminus (T \cup t)})})$, where $G_{t, T}^{(\omega_{[n] \setminus (T
    \cup t)})}$ is defined as in \cref{rem:comb}. By \cref{rem:veryconnected},
    since the links of $(X, \pi)$ are connected, so is the graph $G_{t,
    T}^{(\omega_{[n] \setminus (T \cup t)})}$ for any $\omega_{[n]
    \setminus (T \cup t)} \in X[ [n] \setminus (T \cup t)]$. In particular, the
    graph $L(G_{t, T}^{(\omega_{[n] \setminus (T \cup t)})})$ is connected,
    meaning any pair of $\omega_t \oplus \omega_T, \tilde\omega_t \oplus
    \tilde\omega_T \in X_{\omega_{[n] \setminus (T \cup t)}}[T \cup t]$ are
    connected by a path and therefore satisfy \cref{eq:rivaldo}. This allows us
    to conclude $\eff \in \Uu_{T \cup t}$.
\end{proof}

\subsection{Proof of \cref{thm:angl}}\label{ss:angl}
\begin{proof}
    By \cref{prop:ints} we have $\Uu_I \cap \Uu_{J} = \Uu_{I \cup J}$
    and by \cref{def:cos} we have,
    \begin{equation}\label{eq:optf}
        \cos(\Uu_I, \Uu_J) = \max\set*{ \inpr*{ \etch, \gee }_\pi :
            \norm{\etch}_\pi = \norm{\gee}_\pi = 1~~\textrm{and}~~\gee \in \Uu_I
            \cap \Uu_{I \cup J}^\perp, \etch \in \Uu_{J} \cap \Uu_{I \cup
        J}^\perp }.
    \end{equation}
    Now, let $\etch$ and $\gee$ be the optimizers of the formula in
    \cref{eq:optf}. For
    convenience we introduce the functions $\Gg: X\sqbr{[n] \setminus I} \to
    \RR$ and $\Hh: X\sqbr{[n] \setminus J} \to \RR$ where,
    \begin{align*}
        \etch &~:=~\sum_{\alpha \in X\sqbr{[n] \setminus J}
        }\Hh(\alpha) \cdot \uuu_\alpha,\\
        \gee &~:=~\sum_{\bar\alpha \in X\sqbr{[n] \setminus I}}
        \Gg(\bar \alpha) \cdot \uuu_{\bar\alpha}.
    \end{align*}
    where the vectors $\uuu_\bullet$ are as defined in \cref{prop:proj},
    i.e.~$\uuu_\tau(\omega) = \one[\omega \supset \tau]$ for all $\tau \in X$.
    We observe that both $\Hh$ and $\Gg$ have the same norm as $\etch$ and
    $\gee$ respectively.
    \begin{claim}\label{cl:norm} We have,
        \[\norm{\Hh}_{\pi_{[n] \setminus J}} = \norm{\Gg}_{\pi_{[n]
        \setminus I}} = 1.\]
    \end{claim}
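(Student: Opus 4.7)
The plan is to exploit the partite structure and the disjoint-support property of the indicator vectors $\uuu_\alpha$. Since $X$ is $n$-partite, every $\omega \in X^{(n)}$ contains exactly one face of type $[n]\setminus J$, namely $\omega_{[n]\setminus J}$, so the supports $\set{\omega : \omega \supset \alpha}$ for distinct $\alpha \in X[[n]\setminus J]$ are disjoint. Consequently, $\etch$ can be rewritten pointwise as $\etch(\omega) = \Hh(\omega_{[n]\setminus J})$, and analogously $\gee(\omega) = \Gg(\omega_{[n]\setminus I})$.

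Once this observation is in place, I would expand the norm directly:
\begin{align*}
\norm{\etch}_\pi^2 &= \sum_{\omega \in X^{(n)}} \pi(\omega) \cdot \Hh(\omega_{[n]\setminus J})^2 \\
&= \sum_{\alpha \in X[[n]\setminus J]} \Hh(\alpha)^2 \sum_{\omega \supset \alpha} \pi(\omega).
\end{align*}
The inner sum is exactly the marginal $\pi_{[n]\setminus J}(\alpha)$ by the definition of the partite marginal distribution in \cref{sec:pc}. This collapses the right-hand side to $\norm{\Hh}_{\pi_{[n]\setminus J}}^2$, and since $\norm{\etch}_\pi = 1$ by hypothesis, the claimed equality $\norm{\Hh}_{\pi_{[n]\setminus J}} = 1$ follows. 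The same argument, swapping $J$ with $I$ and $\etch, \Hh$ with $\gee, \Gg$, yields $\norm{\Gg}_{\pi_{[n]\setminus I}} = 1$.

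There is really no obstacle here: the argument is a one-line bookkeeping calculation once one notices the disjoint-support structure. The only mild subtlety is to be clear that $\pi_{[n]\setminus J}$ here denotes the partite marginal on $X[[n]\setminus J]$ (in the sense of $\pi_T$ defined in \cref{sec:pc}), and not the uniform measure $\pi_{n-|J|}$ on $X^{(n-|J|)}$; the bracket notation is crucial. I would therefore include a brief sentence reminding the reader of this convention before the display above.
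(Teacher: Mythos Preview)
Your argument is correct and essentially identical to the paper's: the paper also uses the disjoint-support/orthogonality of the $\uuu_\alpha$ to write $\norm{\etch}_\pi^2 = \sum_\alpha \Hh(\alpha)^2 \norm{\uuu_\alpha}_\pi^2$, observes $\norm{\uuu_\alpha}_\pi^2 = \pi_{[n]\setminus J}(\alpha)$ via the marginal definition, and concludes. Your extra remark distinguishing $\pi_{[n]\setminus J}$ from $\pi_{n-|J|}$ is a helpful clarification but not present in the original.
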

    \begin{proof}
        Since the proofs are identical, we only prove the statement for $\Hh$.
        By the orthogonality of the collection $\uuu_\bullet$ we have,
        \[ \norm*{\etch}_\pi^2 ~=~\sum_{\alpha \in X\sqbr{[n]
            \setminus J}} \Hh^2(\alpha)
        \cdot \norm{\uuu_\alpha}^2_{\pi}.\]
        Notice, for any $\alpha \in X[[n] \setminus J]$,
        \[ \norm{\uuu_\alpha}^2_\pi  = \sum_{\beta \in X^{(n)},\atop
                \beta \supset
        \alpha} \pi(\beta) = \pi_{[n]\setminus J}(\alpha),\]
        where we have used the definition of $\pi_{[n] \setminus J}$,
        i.e.~\cref{eq:m-def} for this equality.

        In particular,
        \[ \norm*{\etch}^2_{\pi} = \sum_{\alpha \in X\sqbr{[n] \setminus J}}
            \Hh(\alpha)^2 \cdot \pi_{[n] \setminus J}(\alpha) =
        \norm*{\Hh}_{\pi_{[n]\setminus J}}.\]
        Since $\etch$ is picked as an optimizer to \cref{eq:optf}, we have
        $\norm{\Hh}_{\pi_{[n] \setminus J}} = 1$.
    \end{proof}
    Then, note
    \begin{align*}
        \inpr*{\etch, \gee}_\pi
        &~=~\sum_{\alpha \in X\sqbr{[n] \setminus J}}\sum_{\bar\alpha \in
        X\sqbr{[n] \setminus I}} \Hh(\alpha) \Gg(\bar\alpha) \cdot
        \inpr*{ \uuu_\alpha, \uuu_{\bar\alpha}}_\pi,\\
        &~=~
        \sum_{\alpha \in X\sqbr{[n] \setminus J}}\sum_{\bar\alpha \in
        X\sqbr{[n] \setminus I}}\Hh(\alpha) \Gg(\bar\alpha) \cdot
        \pi_n(\alpha \wedge \bar\alpha).
    \end{align*}
    where we define $\alpha \wedge \bar\alpha$ as $\bot$\footnote{in
    particular, $\pi(\alpha \wedge \bar\alpha) = 0$} if there exists some $\ell
    \in [n] \setminus (I \cup J)$  such that $\alpha_\ell \neq
    (\bar\alpha)_\ell$ and otherwise we have,
    \[ (\alpha \wedge \bar\alpha)_i =
        \begin{cases} \alpha_i & \textrm{if } i
            \in I,\\
            \bar\alpha_i & \textrm{if } i \in J,\\
            \alpha_i = (\bar\alpha)_i & \textrm{if } i \in
            [n]\setminus (I \cup J)
    \end{cases}\]
    In particular, since the $(\alpha, \bar\alpha)$-term is only non-$\bot$ when
    $\alpha$ and $\bar\alpha$ agree on the coordinates $[n]\backslash (I \cup
    J)$, we can localize this sum at the intersections $\tilde \alpha \in X[
    [n] \setminus (I \cup J)]$. In particular, writing $\Hh_{\tilde
    \alpha}(\alpha_I) = \Hh(\tilde \alpha \oplus \alpha_I)$ and $\Gg_{\tilde
    \alpha}(\alpha_J) = \Gg(\tilde \alpha \oplus \alpha_J)$ for $\alpha_I \in
    X_{\tilde \alpha}[I]$ and $\alpha_J \in X_{\tilde \alpha}[J]$, we have
    \begin{align}
        \inpr*{\etch, \gee}_\pi &~=~
        \sum_{\alpha \in X[[n] \setminus J]}\sum_{\bar\alpha \in
        X\sqbr{\bar\alpha \in X[[n] \setminus I]}}
        \Hh(\alpha) \Gg(\bar\alpha) \cdot \pi_n(\alpha \wedge
        \bar\alpha),\notag\\
        &~=~\sum_{\tilde \alpha \in X\sqbr{[n]
        \setminus (I \cup J)}}\parens*{
            \sum_{\alpha_I \in X_{\tilde
            \alpha}\sqbr{I}} \sum_{\alpha_J \in
            X_{\tilde \alpha}[J]}
            \Hh_{\tilde \alpha}(\alpha_I) \Gg_{\tilde
            \alpha}(\alpha_J) \pi(\alpha_I \oplus \alpha_J \oplus \tilde
        \alpha)},\notag\\
        &~=^\vardiamond~\sum_{\tilde \alpha \in
        X\sqbr{[n] \setminus (I \cup J)}}
        \pi_{[n]\setminus (I \cup J)}(\tilde \alpha)
        \cdot \parens*{ \sum_{\alpha_J \in
            X_{\tilde \alpha}\sqbr{J}} \sum_{\alpha_I \in X_{\tilde
            \alpha}\sqbr{I}}
            \Hh_{\tilde \alpha}(\alpha_I) \Gg_{\tilde
            \alpha}(\alpha_J) \cdot \pi^{(\tilde
        \alpha)}_{I \cup J}(\alpha_I \oplus \alpha_J).},\notag\\
        &~=~\Exp_{\tilde \alpha \sim \pi_{[n]\setminus
        (I \cup J)} }
        \inpr*{\Hh_{\tilde \alpha}, \Cee_{\tilde \alpha}^{J \to I}
        \Gg_{\tilde \alpha}}_{\pi^{\tilde
        \alpha}_{I}}\label{eq:chainlaststep}
    \end{align}
    where we have used \cref{eq:b-rule} to get the equality marked by
    ($\vardiamond$). To conclude, we observe that both $\Gg_{\tilde \alpha}$
    and $\Hh_{\tilde \alpha}$ are perpendicular to the constant functions.
    \begin{claim}\label{cl:perp}
        For all $\tilde \alpha \in X\sqbr{[n] \setminus (I \cup J)}$ we have,
        \[ \inpr*{\Hh_{\tilde \alpha}, \one}_{\pi^{(\tilde \alpha)}_{I}} =
        \inpr*{\Gg_{\tilde \alpha}, \one}_{\pi^{(\tilde \alpha)}_{J}} = 0.\]
    \end{claim}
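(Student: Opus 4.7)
The plan is to exploit the fact that $\etch \in \Uu_{I \cup J}^\perp$ and $\gee \in \Uu_{I \cup J}^\perp$, testing these orthogonality conditions against the spanning vectors $\uuu_{\tilde \alpha}$ for $\tilde \alpha \in X[[n] \setminus (I \cup J)]$. By \cref{prop:ints}, $\Uu_{I \cup J} = \sspan\{\uuu_{\tilde \alpha} : \tilde \alpha \in X[[n]\setminus(I \cup J)]\}$, so the perpendicularity reduces to $\inpr*{\etch, \uuu_{\tilde \alpha}}_\pi = 0$ for every such $\tilde \alpha$, and similarly for $\gee$.

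First I would fix $\tilde \alpha \in X[[n] \setminus (I \cup J)]$ and expand $\inpr*{\etch, \uuu_{\tilde \alpha}}_\pi = \sum_{\omega \supset \tilde\alpha} \pi(\omega) \etch(\omega)$. Since any $\omega \supset \tilde \alpha$ decomposes uniquely as $\omega = \tilde \alpha \oplus \omega_I \oplus \omega_J$ and $\etch$ depends only on coordinates outside $J$, we have $\etch(\omega) = \Hh(\tilde \alpha \oplus \omega_I) = \Hh_{\tilde \alpha}(\omega_I)$. Summing over $\omega_J$ first collapses to the marginal $\pi_{[n] \setminus J}(\tilde \alpha \oplus \omega_I)$, and then Bayes' rule \cref{eq:b-rule} gives $\pi_{[n] \setminus J}(\tilde \alpha \oplus \omega_I) = \pi_{[n] \setminus (I \cup J)}(\tilde \alpha) \cdot \pi_I^{(\tilde \alpha)}(\omega_I)$. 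Putting this together,
\[
\inpr*{\etch, \uuu_{\tilde \alpha}}_\pi = \pi_{[n] \setminus (I \cup J)}(\tilde \alpha) \cdot \inpr*{\Hh_{\tilde \alpha}, \one}_{\pi_I^{(\tilde \alpha)}}.
\]
Since $\pi$ has full support on $X^{(n)}$, we have $\pi_{[n] \setminus (I \cup J)}(\tilde \alpha) > 0$, and so the orthogonality of $\etch$ to $\Uu_{I \cup J}$ forces $\inpr*{\Hh_{\tilde \alpha}, \one}_{\pi_I^{(\tilde \alpha)}} = 0$ as desired.

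The argument for $\Gg_{\tilde \alpha}$ is symmetric: expanding $\inpr*{\gee, \uuu_{\tilde \alpha}}_\pi$ in the same fashion, with the roles of $I$ and $J$ swapped, yields $\pi_{[n] \setminus (I \cup J)}(\tilde \alpha) \cdot \inpr*{\Gg_{\tilde \alpha}, \one}_{\pi_J^{(\tilde \alpha)}}$, which must vanish by $\gee \in \Uu_{I \cup J}^\perp$. No real obstacle is expected: the only bookkeeping concern is keeping track of which coordinates are summed over versus pinned, which the decomposition $\omega = \tilde \alpha \oplus \omega_I \oplus \omega_J$ handles cleanly.
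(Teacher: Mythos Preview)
Your proposal is correct and follows essentially the same approach as the paper: both test the orthogonality condition $\etch \perp \uuu_{\tilde\alpha}$ (which holds since $\etch \in \Uu_{I\cup J}^\perp$), reduce $\inpr{\etch,\uuu_{\tilde\alpha}}_\pi$ to $\pi_{[n]\setminus(I\cup J)}(\tilde\alpha)\cdot\inpr{\Hh_{\tilde\alpha},\one}_{\pi_I^{(\tilde\alpha)}}$ via \cref{eq:b-rule}, and conclude from the positivity of the marginal. The only cosmetic difference is that the paper expands $\etch$ in the basis $\{\uuu_\alpha : \alpha \in X[[n]\setminus J]\}$ before taking the inner product, whereas you sum directly over $\omega \in X^{(n)}$ with $\omega \supset \tilde\alpha$; the computations coincide line by line.
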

    \begin{claim}\label{cl:nloc}
        For any $S, T \subset [n]$ with $S \cap T = \varnothing$ and $\Ff:
        X\sqbr{S \cup T} \to \RR$, we have
        \[ \norm{\Ff}^2_{S \cup T} = \Exp_{\alpha \sim \pi_T}
        \norm{\Ff_\alpha}^2_{\pi^{(\alpha)}_S}.\]
    \end{claim}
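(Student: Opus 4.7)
The plan is to prove \cref{cl:nloc} by directly expanding the norm on the left-hand side and applying Bayes' rule \cref{eq:b-rule} to factor the joint measure $\pi_{S\cup T}$ into the marginal $\pi_T$ and the pinning $\pi_S^{(\alpha)}$. This is a routine calculation with no serious obstacle; the point is just to match definitions carefully.

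First, I would unfold the left-hand side according to the definition of the weighted $\ell^2$-norm in \cref{eq:inpr-defn}, writing
\[ \norm{\Ff}^2_{\pi_{S\cup T}} = \sum_{\omega \in X[S\cup T]} \pi_{S\cup T}(\omega)\,\Ff(\omega)^2. \]
Next, since every $\omega \in X[S\cup T]$ decomposes uniquely as $\omega = \alpha \oplus \beta$ with $\alpha \in X[T]$ and $\beta \in X_\alpha[S]$ (using $S \cap T = \varnothing$ and the partite structure recalled in \cref{sec:pc}), I would rewrite the sum as a double sum indexed by $\alpha \in X[T]$ and $\beta \in X_\alpha[S]$.

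Then I would invoke the Bayes-type identity \cref{eq:b-rule}, which gives $\pi_{S \cup T}(\alpha \oplus \beta) = \pi_T(\alpha) \cdot \pi_S^{(\alpha)}(\beta)$, and use the definition $\Ff_\alpha(\beta) := \Ff(\alpha \oplus \beta)$. Substituting yields
\[ \norm{\Ff}^2_{\pi_{S \cup T}} = \sum_{\alpha \in X[T]} \pi_T(\alpha) \sum_{\beta \in X_\alpha[S]} \pi_S^{(\alpha)}(\beta)\,\Ff_\alpha(\beta)^2 = \sum_{\alpha \in X[T]} \pi_T(\alpha) \cdot \norm{\Ff_\alpha}^2_{\pi_S^{(\alpha)}}, \]
where the inner sum is recognized as $\norm{\Ff_\alpha}^2_{\pi_S^{(\alpha)}}$ by another application of \cref{eq:inpr-defn}. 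The outer sum is precisely $\Exp_{\alpha \sim \pi_T} \norm{\Ff_\alpha}^2_{\pi_S^{(\alpha)}}$, which is the right-hand side. No step requires more than unfolding definitions, so there is no real obstacle — the claim is the Pythagoras-style ``tower property'' of conditional expectation specialized to this weighted simplicial setting.
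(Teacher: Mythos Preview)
Your proof is correct and follows essentially the same approach as the paper: expand the norm, apply Bayes' rule \cref{eq:b-rule} to factor $\pi_{S\cup T}$ as $\pi_T(\alpha)\cdot\pi_S^{(\alpha)}(\beta)$, and recognize the resulting inner sum as $\norm{\Ff_\alpha}^2_{\pi_S^{(\alpha)}}$.
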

    Thus, we can continue upper-bounding \cref{eq:chainlaststep} as follows,
    \begin{align*}
        \inpr*{\etch, \gee}_{\pi}
        &~\le~\Exp_{\tilde\alpha \sim \pi_{[n] \setminus (I \cup J)}}\sqbr*{
            \sigma_2\parens*{\Cee^{J \to I}_{\tilde \alpha}} \cdot
            \norm{\Hh_{\tilde \alpha}}_{\pi_{I}^{(\tilde \alpha)}}
        \cdot \norm{\Gg_{\tilde \alpha}}_{\pi_{J}^{(\tilde \alpha)}}},\\
        &~\le~\Exp_{\tilde\alpha \sim \pi_{[n] \setminus (I \cup J)}}\sqbr*{
            \sigma_2\parens*{ \Cee^{J \to I}_{\tilde \alpha} }\cdot
            \frac{\norm{\Hh_{\tilde \alpha}}_{\pi^{(\tilde \alpha)}_{I}}^2 +
        \norm{\Gg_{\tilde \alpha}}_{\pi^{(\tilde \alpha)}_{I}}^2}2},\\
        &~\le~\ee^{J \to I}.
    \end{align*}
    We now present the proofs of \cref{cl:perp} and \cref{cl:nloc}.
    \begin{proof}[Proof of \cref{cl:perp}]
        Let $\tilde \alpha \in X\sqbr{[n] \setminus (I \cup J)}$ be
        fixed. Since the proofs will be identical we will only prove
        the statement for
        $\Hh_{\tilde \alpha}$. Since $\etch$ is picked as an optimizer to
        \cref{eq:optf}, we have $\inpr{\etch, \uuu_{\tilde \alpha}}_\pi = 0$.
        We also have,
        \[ \inpr*{\uuu_{\tilde \alpha}, \etch}_{\pi} = \sum_{\alpha \in
            X\sqbr{[n] \setminus J}} \Hh(\alpha) \cdot
            \inpr*{\uuu_{\tilde \alpha},
            \uuu_\alpha}_{\pi} = \sum_{\alpha \in X\sqbr{[n]
            \setminus J}} \Hh(\alpha)
            \cdot \inpr*{\one, \uuu_{\alpha}}_\pi \cdot \one[\alpha
        \supset \tilde\alpha],\]
        where we have used that $\supp(\uuu_\alpha) \subset \supp(\uuu_{\tilde
        \alpha})$ whenever $\tilde \alpha \subset \alpha$.
        Thus, using $\inpr{\one, \uuu_\alpha}_{\pi} = \pi_{[n]
        \setminus J}(\alpha)$ we
        can simplify the above into
        \[ 0 = \sum_{\alpha \in X\sqbr{[n] \setminus J},\atop \alpha
                \supset \tilde
        \alpha} \Hh(\alpha) \cdot \pi_{[n] \setminus J}(\alpha).\]
        Every $\alpha \in X\sqbr{[n] \setminus J}$ satisfying $\alpha \supset
        \tilde\alpha$ can be written as $\alpha = \tilde \alpha \oplus
        \alpha_I$ for a unique choice of $\alpha_I \in
        X_{\tilde\alpha}\sqbr{I}$. Thus,
        using $\Hh(\tilde \alpha \oplus \alpha_I) = \Hh_{\tilde
        \alpha}(\alpha_I)$,
        we obtain
        \[ 0 = \sum_{\alpha_I \in X_{\tilde\alpha}\sqbr{I}} \Hh_{\tilde
            \alpha}(\alpha_I) \cdot \pi_{[n] \setminus J}(\tilde \alpha \oplus
            \alpha_I) = \pi_{[n] \setminus (I \cup J)}(\tilde \alpha)
            \cdot \sum_{\alpha_I \in
            X_{\tilde \alpha}\sqbr{I}} \Hh_{\tilde \alpha}(\alpha_I) \cdot
        \pi_{I}^{(\tilde \alpha)}(\alpha_I),\]
        where the last inequality is through \cref{eq:b-rule}. The theorem then
        follows by noting that the term on the RHS is a multiple of
        $\inpr*{\Hh_{\tilde
        \alpha}, \one}_{\pi_{I}^{(\tilde \alpha)}}$.
    \end{proof}

    \begin{proof}[Proof \cref{cl:nloc}]
        We have,
        \begin{align*}
            \norm{\Ff}_{\pi_{S \cup T}}^2 &~=~\sum_{\alpha \in X\sqbr{S \cup T}}
            \Ff(\alpha)^2 \cdot \pi_{S \cup T}(\alpha),\\
            &~=~\sum_{\alpha \in X\sqbr{S \cup T}}
            \pi_S(\alpha_S)
            \pi_T^{(\alpha_S)}(\alpha_T)\Ff_{\alpha_S}(\alpha_T)^2 ,\\
            &~=~\Exp_{\alpha_S \sim \pi_S}
            \norm{\Ff_{\alpha_S}}^2_{\pi^{(\alpha_S)}_T}
        \end{align*}
        where we have used \cref{eq:b-rule} to obtain the second equality.
    \end{proof}
\end{proof}

\section{An Entropy Contraction Inequality for $\Psq$}
\begin{theorem}\label{thm:ecc}
    Let $(X, \pi)$ be an $n$-partite simplicial complex and $\vec s = (s(1),
    \ldots, s(n))$ be any ordering of $[n]$. Then,
    \begin{equation}\EC(\Psq^{(\vec s)}) \ge \prod_{j = 1}^{n - 1} \eta^{s([j+1,
        n]) \to s(j)},\label{eq:mls-bd}
    \end{equation}
    where we have used the notation $s(A) = \set*{ s(a) : a \in A}$ and the
    parameters $\eta^{\bullet \to \bullet}$ are defined as in
    \cref{eq:dsplit}.
\end{theorem}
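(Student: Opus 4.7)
The plan is to proceed by induction on $n$, the rank of $X$. The base case $n=1$ is trivial: the product in \eqref{eq:mls-bd} is empty (and hence equals $1$), and $\mu\Psq = \pi$, so $\Div(\mu\Psq~\|~\pi) = 0$. For the inductive step, after relabeling the sides of $X$ we may assume $\vec s = (1,2,\ldots,n)$; the inductive hypothesis will be applied to each link $X_{\omega_1}$ (an $(n-1)$-partite simplicial complex) with the sub-order $(2,\ldots,n)$. The first step is to apply the chain rule \cref{fac:cr} to split off coordinate $1$:
\[ \Div(\mu\Psq~\|~\pi) = \Div((\mu\Psq)_{\{1\}}~\|~\pi_{\{1\}}) + \Exp_{\omega_1 \sim (\mu\Psq)_{\{1\}}}\Div((\mu\Psq)^{(\omega_1)}_{[2,n]}~\|~\pi^{(\omega_1)}_{[2,n]}). \]
A direct expansion of $\Psq = \Quu_1\cdots\Quu_n$ via \cref{eq:q-def} shows that $(\mu\Psq)_{\{1\}} = \mu_{[2,n]}\cdot \Cee_\varnothing^{[2,n]\to 1}$, and hence by the definition \cref{eq:dsplit} of $\eta^{[2,n]\to 1}$ the first term is at most $(1 - \eta^{[2,n]\to 1})\Div(\mu_{[2,n]}~\|~\pi_{[2,n]})$.

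The crucial observation, encoding the recursive structure of the sweep, is the identity
\[ (\mu\Psq)^{(\omega_1)}_{[2,n]} = \bar\mu^{(\omega_1)}\cdot \Psq^{(2,\ldots,n)}_{X_{\omega_1}}, \]
where $\bar\mu^{(\omega_1)} = (\mu\Quu_1)^{(\omega_1)}$ is the conditional distribution on $X_{\omega_1}[[2,n]]$ of $\mu\Quu_1$ pinned to $\omega_1$, and $\Psq^{(2,\ldots,n)}_{X_{\omega_1}}$ is the sequential sweep on the link. This identity follows from the explicit product formula for $\Psq$ together with the factorization of conditional probabilities. Applying the inductive hypothesis on $X_{\omega_1}$ and noting that $\eta^{I\to J}_{X_{\omega_1}} \ge \eta^{I\to J}_X$ (since the defining maximum for the link ranges over a sub-collection of pinnings) gives
\[ \Div((\mu\Psq)^{(\omega_1)}_{[2,n]}~\|~\pi^{(\omega_1)}) \le \Bigl(1 - \prod_{j=2}^{n-1}\eta^{[j+1,n]\to j}\Bigr)\Div(\bar\mu^{(\omega_1)}~\|~\pi^{(\omega_1)}). \]
A second application of \cref{fac:cr} to $\mu\Quu_1$, whose marginals are $(\mu\Psq)_{\{1\}}$ on $\{1\}$ and $\mu_{[2,n]}$ on $[2,n]$, yields
\[ \Exp_{\omega_1}\Div(\bar\mu^{(\omega_1)}~\|~\pi^{(\omega_1)}) = \Div(\mu_{[2,n]}~\|~\pi_{[2,n]}) - \Div((\mu\Psq)_{\{1\}}~\|~\pi_{\{1\}}). \]

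Writing $c = \prod_{j=2}^{n-1}\eta^{[j+1,n]\to j}$ and $\eta_1 = \eta^{[2,n]\to 1}$, combining these bounds causes the two contributions of $\Div((\mu\Psq)_{\{1\}}\|\pi_{\{1\}})$ to partially cancel, giving
\[ \Div(\mu\Psq~\|~\pi) \le c\cdot\Div((\mu\Psq)_{\{1\}}~\|~\pi_{\{1\}}) + (1-c)\Div(\mu_{[2,n]}~\|~\pi_{[2,n]}) \le (1 - c\eta_1)\Div(\mu_{[2,n]}~\|~\pi_{[2,n]}), \]
which together with $\Div(\mu_{[2,n]}~\|~\pi_{[2,n]}) \le \Div(\mu~\|~\pi)$ closes the induction, since $c\eta_1 = \prod_{j=1}^{n-1}\eta^{[j+1,n]\to j}$. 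The main obstacle I anticipate is verifying the recursion identity in the second paragraph rigorously, via careful Bayes-rule bookkeeping on the explicit expansion of $\mu\Psq$, and confirming cleanly that the link's $\eta$-parameters are compatible with those of $X$ in the sense needed for the induction.
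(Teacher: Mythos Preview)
Your proposal is correct and follows essentially the same inductive strategy as the paper: split off coordinate $1$ via the chain rule, use the identities $(\mu\Psq)_{\{1\}} = (\mu\Quu_1)_{\{1\}} = \mu_{[2,n]}\Cee_\varnothing^{[2,n]\to 1}$ and $(\mu\Psq)^{(\omega_1)} = (\mu\Quu_1)^{(\omega_1)}\Psq_{X_{\omega_1}}$, apply the inductive hypothesis in the link, and combine. The only cosmetic difference is that where the paper invokes the data processing inequality $\Div(\mu\Quu_1\|\pi)\le\Div(\mu\|\pi)$, you instead use the exact identity $\Div(\mu\Quu_1\|\pi)=\Div(\mu_{[2,n]}\|\pi_{[2,n]})$ (valid since $\Quu_1$ resamples coordinate $1$ from its $\pi$-conditional), carrying $\Div(\mu_{[2,n]}\|\pi_{[2,n]})$ through before the final chain-rule bound; this is slightly cleaner but not a different argument.
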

\begin{proof}[Proof of \cref{thm:ecc}]
    By renaming the parts, we assume $\vec s = (1, \ldots, n)$ and introduce the
    short-hand
    \[K(\ell, m) := K_{(X, \pi)}(\ell, m) = \prod_{j = \ell}^{m-1} \eta_{(X,
    \pi)}^{[j+1, m] \to j},\]
    and set $K(1, 1) = 1$.\footnote{Similarly, $K(\ell, \ell)
    = 1$ for all $\ell \le n$ but, we will not use this.} Recall the
    definition of the \ref{eq:mlsc} $\EC(\Psq)$,
    \[\EC(\Psq) =  1 - \kappa(\Psq)~~\textrm{where}~~\kappa(\Psq) = \sup\set*{
            \frac{\Div(\mu\Psq~\|~\pi)}{\Div(\mu~\|~\pi) } : \mu \in
    \triangle_{X^{(n)}}}.\]
    Thus, it suffices to prove $\kappa(\Psq) \le 1 - K(1, n)$. Equivalently,
    \[ \Div(\mu\Psq~\|~\pi) \le (1 - K(1,n)) \cdot \Div(\mu~\|~\pi)~~\textrm{for
    all}~~\mu \in \triangle_{X^{(n)}}.\]
    We will proceed by induction on the number of parts $n$ of the simplicial
    complex $(X, \pi)$.

    \textbf{Induction basis.} Suppose $n = 1$. We note, $\Psq =
    \Quu_1 = \one \pi$. In particular, we have
    \[ \Div(\mu\Psq~\|~\pi) = \Div(\mu \one\pi~\|~\pi) =
        \Div(\pi~\|~\pi) = 0 \le
    (1 - K(1,1)) \cdot \Div(\mu~\|~\pi).\]
    Since $K(1,1) = 1$, the induction hypothesis holds.

    \textbf{Induction step.} Let $m \ge 1$. Suppose that for any
    $n$-partite simplicial complex
    $(\hat X, \hat\pi)$ and all $\hat\mu \in \triangle_{\hat X^{(n)}}$, we have,
    \begin{equation}
        \Div( \hat\mu\widehat\Psq~\|~\hat\pi) \le (1 -K_{(\hat X,
        \hat\pi)}(1,n)) \cdot
        \Div(\hat\mu~\|~\hat\pi)\tag{induction hypothesis}\label{eq:entih}
    \end{equation}
    where we have denoted the sequential sweep walk in the complex $(\hat X,
    \hat \pi)$ by $\widehat{\Psq}$.

    Now, suppose $(\Omega, \pi)$ is an $(n+1)$-partite simplicial complex. By
    the chain-rule for \ref{eq:kl-def} (\cref{fac:cr}), we have
    \begin{equation}
        \Div( \mu\Psq~\|~\pi) = \Div( (\mu\Psq)_{\set{1}}~\|~\pi_{\set{1}}) +
        \Exp_{x \sim (\mu\Psq)_{\set{1}}}\Div( (\mu\Psq)^{(x)}_{[2,
        n+1]}~\|~(\pi)^{(x)}_{[2, n+1]})
    \end{equation}
    where $\nu_{\set 1}$ denotes the $\set 1$-marginal of the distribution $\nu
    \in \triangle_{X^{(n)}}$, qv.~\cref{eq:m-def}.

    We claim,
    \begin{claim}\label{cl:houd0}
        For all $x \in X[1]$, we have
        \begin{enumerate}
            \item $(\mu\Psq)_{\set 1} = (\mu\Quu_1)_{\set
                1}$,
            \item $(\mu\Psq)^{(x)}_{[2, m+1]} = (\mu \Quu_1)^{(x)} \Psq^{(x)}$,
        \end{enumerate}

        where we have written $\Psq^{(x)}$ or the sequential sweep
        in the
        $n$-partite
        complex $(X_x, \pi^{(x)}_{[2, n+1]})$.
    \end{claim}

    In particular, by the \ref{eq:entih} and the second item in \cref{cl:houd0},
    \begin{align}
        \Div( \mu\Psq~\|~\pi)
        &~=~\Div( (\mu\Psq)_{\set{1}}~\|~\pi_{\set{1}}) +
        (1 - K(2, n+1)) \Exp_{w \sim (\mu\Quu_1)_{\set{1}}}\Div(
            (\mu\Quu_1)^{(w)}_{[2,
        n+1]}\Psq^{(w)}~\|~(\pi)^{(w)}_{[2, n+1]}),\notag\\
        &~=~\Div( (\mu\Psq)_{\set 1}~\|~\pi_{\set 1})
        + (1- K(2,n+1))\parens*{\Div(\mu\Quu_1~\|~\pi) - \Div(
        (\mu\Quu_1)_{\set 1}~\|~\pi_{\set 1})},\label{eq:spag}
    \end{align}
    where we have used \cref{fac:cr} once again in obtaining
    \cref{eq:spag} and the
    $n$-sides of any $X_x$ for $x \in X[1]$, correspond to sides $2,
    \ldots, n+1$
    of $X$. Using the first item in \cref{cl:houd0} and
    \cref{eq:spag}, we obtain
    \begin{align}
        \Div( \mu\Psq~\|~\pi) &~\le~(1 - K(2,n+1)) \cdot
        \Div(\mu\Quu_1~\|~\pi) +
        K(2, n+ 1) \Div( (\mu\Quu_1)_{\set 1}~\|~\pi_{\set 1}),\notag\\
        &~\le~(1-K(2,n+1)) \cdot \Div(\mu~\|~\pi)
        + K(2,n+1)\cdot \Div(
        (\mu\Quu_1)~\|~\pi),\label{eq:hotd}
    \end{align}
    where we have used the data-processing inequality \cref{fac:dpi}
    in obtaining
    the last inequality.

    \begin{claim}\label{cl:houd2}
        We have,
        $(\mu\Quu_1)_{\set 1} = \mu_{[2, n+1]} \Cee_\varnothing^{[2, n+1] \to
        1}$.
    \end{claim}
    Thus, we have
    \begin{align}
        \Div( (\mu\Psq)~\|~\pi)
        &~\le~(1 - K(2, n+1)) \cdot
        \Div(\mu~\|~\pi) + K(2,n+1) \cdot \Div( \mu_{[2,
            n+1]}\Cee_\varnothing^{[2,
                n+1]
        \to 1}~\|~\pi_{\set 1}),\notag\\
        &~\le~(1 - K(2,n+1)) \Div(\mu~\|~\pi)
        + K(2,n+1)\cdot (1 - \eta^{[2, n+1] \to 1}) \cdot
        \Div(\mu_{[2, n+1]}~\|~\pi_{[2, n+1]}),\notag\\
        &~=~(1-K(2,n+1)) \Div(\mu~\|~\pi) + K(2, n+1) (1- \eta^{[2,
        m+ 1] \to 1}) \Div(\mu~\|~\pi),\label{eq:verylast}
    \end{align}
    where \cref{eq:verylast} is obtained via the chain-rule
    \cref{fac:cr} and the
    non-negativity of the \ref{eq:kl-def}. Simplifying, we obtain
    \[ \Div( \mu\Psq~\|~\pi) ~\le (1 - K(2, n+1)\cdot \eta^{[2, n+1] \to 1})
    \cdot \Div(\mu~\|~\pi).\]
    Now, the theorem statement follows by noticing $K(1, n+1) = K(2,
    n + 1) \cdot
    \eta^{[2, n + 1] \to 1}$. We conclude by proving \cref{cl:houd0}
    and \cref{cl:houd2}.
    \begin{proof}[Proof of \cref{cl:houd0}]
        Let $x \in X[1]$ be arbitrary. By \cref{eq:m-def},
        \begin{align*}
            [\mu\Psq]_{\set 1}(x)
            &~=~\sum_{\bar\omega \in X_x[2,\ldots, n+1]}
            [\mu \Quu_1 \cdots \Quu_{n+1}](x \oplus \bar \omega),\\
            &~=~\sum_{\bar\omega \in X_x[2, \ldots, n+1]}
            \sum_{\tilde \omega \in X_x[2, \ldots, n+1]}
            [\mu \Quu_1](x \oplus \tilde \omega) \cdot [\Quu_2 \cdots
            \Quu_{n+1}](x
            \oplus \tilde \omega, x \oplus\bar\omega).
        \end{align*}
        For obtaining the second equality we have used that the first coordinate
        is invariant under a random step according to $\Quu_2\cdots
        \Quu_{n+1}$. We
        continue,
        \begin{align*}
            [\mu\Psq]_{\set 1}(x)
            &~=~\sum_{\bar\omega \in X_x[2, \ldots, n+1]}
            \sum_{\tilde \omega \in X_x[2, \ldots, n+1]}
            [\mu \Quu_1](x \oplus \tilde \omega) \cdot [\Quu_2 \cdots
            \Quu_{n+1}](x
            \oplus \tilde \omega, x \oplus\bar\omega),\\
            &~=~\sum_{\tilde \omega \in X_x[2, \ldots, n+1]}[\mu
            \Quu_1](x \oplus \tilde \omega)
            \sum_{\bar\omega \in X_x[2, \ldots, n+1]}[\Quu_2 \cdots
            \Quu_{n+1}](x
            \oplus \tilde \omega, x \oplus\bar\omega)
        \end{align*}
        We note that since $\Quu_2\cdots \Quu_{n+1}$ does not change the first
        coordinate and all $\Quu_j$ are row-stochastic, the inner-sum
        equals 1. Thus,
        \[ [\mu\Psq]_{\set 1}(x)
            ~=~\sum_{\tilde \omega \in X_x[2, \ldots, n+1]}[\mu \Quu_1](x \oplus
            \tilde \omega)
        ~=~[(\mu\Quu_1)_{\set 1}](x),\]
        where we have used \cref{eq:m-def} and concluded the proof of the first
        item. For the second item, we again fix an
        arbitray $x \in X[1]$. Then, for any $\bar\omega \in
        X_x[2,\ldots, n+1]$ we
        have,
        \begin{align*}
            [\mu\Psq](x \oplus \bar\omega)
            &~=~ [\mu \Quu_1 \cdots \Quu_{n+1}](x \oplus \bar \omega),\\
            &~=~ \sum_{\tilde \omega \in X_x[2,\ldots, n+1]} [\mu
            \Quu_1](x \oplus \tilde \omega) \cdot [\Quu_2
            \cdots \Quu_{n+1}](x \oplus \tilde \omega, x \oplus \bar\omega).
        \end{align*}
        where we have used again that a random step according to $\Quu_2\cdots
        \Quu_{n+1}$ does not change the first coordinate in obtaining the second
        equality. Similarly, from \cref{eq:q-def} it is immediate
        that for all $j
        > 1$, one has $\Quu_j(x \oplus \omega',x \oplus \omega'')
        = \Quu_j^{(x)}(\omega', \omega'')$ for any pair $\omega, \omega'' \in
        X_x[2,\ldots, n+1]$ -- where $\Quu_j^{(x)}$ is the $j$-th
        update operator
        associated with $X_x$. Thus,
        \[ [\Quu_2 \cdots \Quu_n](x \oplus \tilde \omega, x \oplus \bar\omega) =
            [\Quu^{(x)}_2 \cdots \Quu^{(x)}_{n+1}](\tilde \omega, \bar\omega) =
        \Psq^{(x)}(\tilde \omega, \bar\omega).\]
        Computing further,
        \begin{align*}
            [\mu\Psq](x \oplus \bar\omega)
            &~=~\sum_{\tilde \omega \in X_x[2,\ldots, n+1]} [\mu\Quu_1](x \oplus
            \tilde \omega) \cdot \Psq^{(x)}(\tilde \omega, \bar\omega),\\
            &~=~\sum_{\tilde \omega \in X_x[2,\ldots, n+1]}
            [\mu \Quu_1]_{\set 1}(x) [\mu\Quu_1]^{(x)}(\tilde\omega) \cdot
            \Psq^{(x)}(\tilde \omega, \bar\omega).
        \end{align*}
        where we have used the Bayes' rule \cref{eq:b-rule}, in
        obtaining the last
        equality. Dividing both sides of the equation by
        $[\mu\Quu_1]_{\set 1}(x)$ we
        obtain,
        \[ \frac{[\mu\Psq](x \oplus \bar\omega)}{[\mu\Quu_1]_{\set 1}(x)} =
        [(\mu\Quu_1)^{(x)} \Psq^{(x)}](\bar\omega).\]
        Now, appealing to the first item $[\mu\Quu_1]_{\set 1} = [\mu\Psq]_{\set
        1}$ and the law of conditional probability \cref{eq:b-rule},
        the second statement
        follows.
    \end{proof}

    \begin{proof}[Proof of \cref{cl:houd2}]
        Let $x \in X[1]$ be arbitrary. We compute,
        \begin{align*}
            [\mu\Quu_1]_{\set 1}(x)
            &~=~\sum_{\bar \omega \in X_x[2, \ldots, n+1]}
            [\mu\Quu_1](x \oplus \bar \omega),\\
            &~=~\sum_{\bar \omega \in X_x[2, \ldots, n+1]} \sum_{y
            \in X_{\bar w}[1]}
            \mu(y \oplus \bar \omega)\pi_{\set 1}^{(\bar
            \omega)}(x),&& (\textrm{by }
            \cref{eq:q-def})\\
            &~=~\sum_{\bar \omega \in X_x[2, \ldots, n+1]}\pi_{\set 1}^{(\bar
            \omega)}(x)
            \cdot \sum_{y \in X_{\bar \omega}[1]} \mu(y \oplus \bar\omega),&&
            (\textrm{by } \cref{eq:m-def})\\
            &~=~\sum_{\bar w \in X_x[2, \ldots, n+1]} \mu_{[2,
            n+1]}(\bar \omega) \cdot
            \pi_{\set 1}^{(\bar \omega)}(x),&& (\textrm{by } \cref{eq:c-def})\\
            &~=~\sqbr*{\mu_{[2, n+1]} \Cee_\varnothing^{[2, n+1] \to 1}}(x).
        \end{align*}
    \end{proof}
\end{proof}

\section{The Sequential Sweep in Ramanujan Complexes}\label{sec:ramanujan}
In this section $X$ is a $(d-1)$-dimensional Ramanujan complex with $N$ vertices
associated with $\mathsf{PGL}_d(\mathbb{Q}_{p})$, as described in
Section \ref{sub:ramanujan_complexes}. This is a $d$-partite
simplicial complex with
sides $S_0, \ldots, S_{d-1}$. We use the $q$-number, $q$-factorial
and $q$-binomial notations:
\[
    [d]_{q}=\frac{q^{d}-1}{q-1},\qquad[d]!_{q}=[1]_{d}[2]_{d}\ldots[d]_{q},\qquad\left[
        \begin{smallmatrix}d\\
            j
    \end{smallmatrix}\right]_{q}=\frac{[d]!_{q}}{[j]!_{q}[d-j]!_{q}}.
\]
Since the vertex links of $X$ are isomorphic to the spherical complex
of $\mathsf{PGL}_d(\mathbb{F}_p)$, the number of facets in $X$
is
\[ \Abs*{X^{(d)}} = \frac{N \cdot [d]!_p}{d} \approx \frac{N
p^{\binom{d}2}}{d}.\]
First, we will use the structure of the spherical complexes to bound
the colored walks corresponding
to links $\sigma \in X[j+1,\ldots, d-1]$ for $j \ge 0$.
\begin{theorem}\label{thm:ramanujan_local}
    Let $1\leq j\leq d-2$, and let $\sigma\in X[j+1,\ldots,d-1]$ be a
    fixed face. Then
    \[
        \sigma_{2}\left(\Cee_{\sigma}^{\{0,\ldots,j-1\}\rightarrow\{j\}}\right)^{2}
        =\frac{p^{j+1}-p}{p^{j+2}+p^{j+1}-p-1} = \frac{1}{p+1} \cdot
        \parens*{1 - \Theta(p^{-j})}.
    \]
\end{theorem}
We present the proof of \cref{thm:ramanujan_local} in
\cref{sub:colored_walks_corresponding_to_non_trivial_links}. Next, by
using the spectral estimates for the Hecke-Operators defined in
\cref{sub:ramanujan_complexes}, we prove the following bound for the
colored walk in the empty link:
\begin{theorem}\label{thm:ramanujan_global}
    The second singular value of the operator
    $\Cee_{\varnothing}^{\{0,\ldots,d-2\}\rightarrow\{d-1\}}$
    on $X$ satisfies
    \[
        \sigma_{2}\left(\Cee_{\varnothing}^{\{0,\ldots,d-2\}\rightarrow\{d-1\}}\right)^2=\mathinner{\frac{\left[d\right]!_{p}+\left(d^{2}p^{d-1}-\left[d\right]_{p}\right)\left[d-2\right]!_{p}}{\left(p+1\right)\cdot\left[d\right]!_{p}}}
        = \frac{1}{p+1} \cdot (1+ O(p^{-d+2})).
    \]
\end{theorem}

Combining \cref{thm:ramanujan_global,thm:ramanujan_local}, we obtain
the following spectral gap bound for the sequential sweep on a
Ramanujan complex.
\begin{theorem}\label{thm:ram-mix}
    The second singular value of the sequential sweep on a Ramanujan
    complex of thickness $p$ and dimension $d$ satisfies,
    \begin{align*}
        \sigma_2\parens*{\Psq}^2 &~\le~1- \frac{ (p^d - 1)p (1 +
        O(p^{ -d +1}))}{(p-1) (p+1)^d},\\
        &~\le~\frac{d-1}p + O\parens*{\frac{d^2}{p^2}}
    \end{align*}
    where the latter inequality assumes $d \ll p$.
\end{theorem}

\begin{proof}
    We will appeal to \cref{thm:csv} and use the bounds from
    \cref{thm:ramanujan_local,thm:ramanujan_global}.

    First, we note that the \emph{local}
    contributions (from
    \cref{thm:ramanujan_local}) have a nice
    telescoping property. Namely,
    \begin{equation}\label{eq:local}
        \prod_{j = 2}^{d-2} \parens*{1 - \parens*{\ee^{[j-1] \to \set j}}^2}
        =
        \prod_{j=1}^{d-2}\left[1-{\frac{{p}^{j+1}-p}{{p}^{j+2}+{p}^{j+1}-p-1}}\right]=\mathinner{\frac{{p}^{d}-1}{\left(p-1\right)\left(p+1\right)^{d-1}}}.
    \end{equation}
    Then, combining this with \cref{thm:ramanujan_global}, we have
    \begin{align*}
        \sigma_2\parens*{\Psq}^2 &~\le~1 - \parens*{\frac{p^d - 1}{(p-1)
        \cdot (p+1)^{d-1}}} \cdot \parens*{ 1 - \frac{1}{p+1}
        \parens*{1 + O(p^{-d})}}\\
        &~=~1- \frac{ (p^d - 1)p (1 + O(p^{ -d +1}))}{(p-1) (p+1)^d}
    \end{align*}
    By expanding, $(p+1)^d = p^d (1 + d^{-1})^d = p^d + (1 + O(d/p) +
    O(p^{-2}))$
    \[ \sigma_2\parens*{\Psq}^2\le 1 - \frac{1}{1 - \frac{d-1}p + O(p^{-2})} \]
    Assuming $d \ll p$,
    \[ \sigma_2\parens*{\Psq}^2 \le \frac{d-1}p +
    O\parens*{\frac{d^2}{p^2}}.\qedhere\]
\end{proof}
\subsection{Colored Walks Corresponding to Non-Trivial
Links}\label{sub:colored_walks_corresponding_to_non_trivial_links} % (fold)
\begin{proof}[Proof of \cref{thm:ramanujan_local}]

    The link of a $\sigma\in X[j+1,\ldots,d-1]$ is isomorphic to the complex
    $\mathbb{P}^{j+1}(\mathbb{F}_{p})$ whose cells are the flags in
    $\mathbb{F}_{p}^{j+2}$ (see Section
    \ref{sub:ramanujan_complexes}). The colored walk
    $\Cee_{\sigma,X}^{\{0,\ldots,j-1\}\rightarrow\{j\}}$
    in $X_{\sigma}$ corresponds to the colored walk
    $\Cee_{\varnothing,\mathbb{P}^{j+1}(\mathbb{F}_{p})}^{\{1,\ldots,j\}\rightarrow\{j+1\}}$
    in $\mathbb{P}^{j+1}(\mathbb{F}_{p})$, which is the walk between flags
    of type $\{1,\ldots,j\}$ in $\mathbb{F}_{p}^{j+2}$, and hyperplanes in
    $\mathbb{F}_{p}^{j+2}$, where a flag $\mathcal{F}=\left\{
    V_{1}<\cdots<V_{j}\right\} $
    is connected to a hyperplane $\mathcal{H}$ if $V_{j}<\mathcal{H}$.

    Let us compare this with the walk between subspaces of dimensions $j$ and
    $j+1$ (hyperplanes) in $\mathbb{F}_{p}^{j+2}$, whose normalized
    incidence matrix is
    $\Cee_{\varnothing,\mathbb{P}^{j+1}\mathbb{F}_{p}}^{\{j\}\rightarrow\{j+1\}}$,
    which we will henceforth refer as $\Cee$. As $\mathcal{F}$ is
    connected to $\mathcal{H}$ if and only if
    $V_{j}< \mathcal{H}$,
    we obtain
    \[\Cee_{\varnothing,\mathbb{P}^{j+1}(\mathbb{F}_{p})}^{\{1,\ldots,j\}\rightarrow\{j+1\}}=\mathbf{1}_{m}\otimes
    \Cee\]
    where $m$ is the number of completions of $V_{j}$ to a flag
    of type
    $\{1,\ldots,j\}$.  We observe,
    \[
        \sigma_2\parens*{\Cee_{\varnothing,\mathbb{P}^{j+1}(\mathbb{F}_{p})}^{\{1,\ldots,j\}\rightarrow\{j+1\}}}^2
        = \lambda_2( \one_m^* \one_m \tensor \Cee^* \Cee) =
    \lambda_2(\Cee^*\Cee)\]
    where the last equality follows observing that $\one_m^* \one_m =
    1$  (note that
        $\mathbf{1}_m^{*}=\pi$
        is the uniform distribution by the definitions in
    \cref{prop:adjoint-defn,thm:spectral}).

    We observe that $\left[2\right]_{p} =p +1$,
    which is the number of lines in $\mathbb{F}_{p}^{2}$, is also the
    number of hyperplanes in $\mathbb{F}_{p}^{j+2}$ containing a fixed
    $j$-space, and that $\left[j+1\right]_{p}$ is the number of $j$-spaces
    contained in a fixed hyperplane, which shows that the graph
    $G_{\varnothing,\mathbb{P}^{j+1}(\mathbb{F}_{p})}^{\{j\}\rightarrow\{j+1\}}$
    is a $(\left[2\right]_{p},\left[j+1\right]_{p})$-biregular bipartite
    graph $(L, R, E)$ where $|L| = \binom{j+2}{2}_p$ and $|R| =
    [j+2]_p$. Using that
    any two different hyperplanes $\mathcal{H},\mathcal{H}'$ intersect
    in a $j$-space, we have
    \[
        \left[ \Cee^{*}\Cee\right ](\mathcal{H},\mathcal{H}')=
        \begin{cases}
            \frac{1}{\left[2\right]_{p}} & \mathcal{H}=\mathcal{H}'\\
            \frac{1}{\left[2\right]_{p}\left[j+1\right]_{p}} &
            \mathcal{H}\neq\mathcal{H}'
        \end{cases}.
    \]
    Thus,
    $\Cee^{*}\Cee=\frac{1}{\left[2\right]_{p}}\left(\frac{1}{\left[j+1\right]_{p}}\mathbb{1}+\left(1-\frac{1}{\left[j+1\right]_{p}}\right)\Ide\right)$
    where $\mathbb{1} \in \RR^{[j+2]_p \times [j+2]_p}$ is the
    all-one matrix, and we obtain by a
    straightforward calculation
    \[
        \lambda_{2}\left(\Cee^{*}\Cee\right)=\frac{1}{[2]_p} \cdot
        \parens*{1- \frac{1}{[j+1]_p}}=
        \frac{\left[j+1\right]_{p}-1}{\left[2\right]_{p}\left[j+1\right]_{p}}=\frac{p^{j+1}-p}{p^{j+2}+p^{j+1}-p-1}.\qedhere
    \]
\end{proof}

% subsection Colored Walks Corresponding to Non-Trivial Links (end)
\subsection{Colored Walks in the Global
Link}\label{sub:colored_walks_in_the_global_link} % (fold)
\begin{proof}[Proof of \cref{thm:ramanujan_global}]
    Writing now
    $\Cee=\Cee_{\varnothing}^{\{0,\ldots,d-2\}\rightarrow\{d-1\}}$, we
    observe that $\Cee^{*}\Cee$ walks from a vertex of color $d-1$ to another
    such vertex, such that the two are contained in two facets
    which intersect
    in a common face of color $\{0,\ldots,d-2\}$. The graph
    $G_{\varnothing}^{\{0,\ldots,d-2\}\rightarrow\{d-1\}}$
    is $\left(\left[d\right]!_{p},p+1\right)$-biregular. Recall from
    \cite{LubotzkySV05, LubotzkySV05b} that the (nonnormalized)
    Hecke operator
    $\Aye_{j}$ walks from
    a vertex of color $t$ to all of its neighbors of color
    $t+j\in\nicefrac{\mathbb{Z}}{d}$.
    We now show that the two step walk satisfies
    \begin{align}
        \overbrace{\left[d\right]!_{p}\cdot\left(p+1\right)\Cee^{*}\Cee}^{:=
        \Quu} &
        =\left[d-2\right]!_{p}\Aye_{d-1}\Aye_{1}+\left(\left[d\right]!_{p}-\left[d\right]_{p}\left[d-2\right]!_{p}\right)
        \Ide\label{eq:q_def}\\
        &
        =\left[d-2\right]!_{p}\Aye_{d-1}\Aye_{1}+\left[d\right]!_{p}\left(1-\frac{1}{\left[d-1\right]_{p}}\right)\Ide\nonumber
    \end{align}
    Let $\Ess$ be operator which counts the number of walks from a
    $(d-1)$-color vertex to another (distinct) one, through a wall (a
    $(d-1)$-rank face) which shares a facet with each of them. The
    operator $\Quu$ is the nonnormalized version of $\Cee^*\Cee$,
    which is a combination of $\Ess$ and $\Ide$, the latter coming
    from backtracking. There are $[d]!_p$ facets containing each
    vertex, so this is the coefficient of $\Ide$. To compute the
    coefficient of $\Ess$, we use Hecke operators. The combination
    $\Aye_{d-1}\Aye_1=\Aye_1^\top\Aye_1$ takes each vertex once to
    every vertex across a wall from it, and $\deg \Aye_1=[d]_p$ times
    back to itself, so that $\Aye_{d-1}\Aye_1=\Ess+[d]_p\Ide$.
    Comparing degrees, we discover that the coefficient of $\Ess$ in
    $\Quu$ must be $[d-2]!_p$, obtaining \eqref{eq:q_def}.
    %    The degree of $\Aye_{1}$ and $\Aye_{d-1}$ is $\left[d\right]_{p}$, and
    %we get
    %\[
    %\lambda_{1}(\Quu)=\left[d-1\right]!_{p}\left[d+1\right]_{p}^{2}+\left[d+1\right]!%_{p}\left(1-\frac{1}{\left[d\right]_{p}}\right)=\left[d+1\right]!_{p}%\cdot\left(p+1\right)
    %\]
    %as expected.

    The Hecke operators $\Aye_{j}$ are all normal and commute
    with each other, and on a Ramanujan complex the nontrivial eigenvalue
    of $\Aye_{j}$ is bounded by ${d \choose
    j}p^{\frac{j\left(d-j\right)}{2}}$
    \cite{LubotzkySV05,LubotzkySV05b}, so that
    \[
        \lambda_{2}\left(\Quu\right)=\left[d-2\right]!_{p}d^{2}p^{d-1}+\left[d\right]!_{p}-\left[d\right]_{p}\left[d-2\right]!_{p}.
    \]
    Thus, the theorem follows by normalization and \cref{eq:q_def}.
\end{proof}

% subsection Colored Walks in the Global Link (end)

\bibliographystyle{alpha}
\bibliography{vedat}

\newcommand{\etalchar}[1]{$^{#1}$}
\begin{thebibliography}{HDSMR16}

\bibitem[AASV21]{AlimohammadiASV21}
Yeganeh Alimohammadi, Nima Anari, Kirankumar Shiragur, and Thuy-Duong Vuong.
\newblock Fractionally log-concave and sector-stable polynomials: counting planar matchings and more.
\newblock In {\em Proceedings of the 53rd Annual ACM SIGACT Symposium on Theory of Computing}, pages 433--446, 2021.

\bibitem[AJK{\etalchar{+}}21]{AnariJKP22a}
Nima Anari, Vishesh Jain, Frederic Koehler, Huy~Tuan Pham, and Thuy-Duong Vuong.
\newblock Entropic independence ii: optimal sampling and concentration via restricted modified log-sobolev inequalities.
\newblock {\em arXiv preprint arXiv:2111.03247}, 2021.

\bibitem[AJK{\etalchar{+}}22]{AnariJKP22}
Nima Anari, Vishesh Jain, Frederic Koehler, Huy~Tuan Pham, and Thuy-Duong Vuong.
\newblock Entropic independence: optimal mixing of down-up random walks.
\newblock In {\em Proceedings of the 54th Annual ACM SIGACT Symposium on Theory of Computing}, pages 1418--1430, 2022.

\bibitem[AJQ{\etalchar{+}}20]{AlevJQST19}
Vedat~Levi Alev, {Fernando Granha} Jeronimo, Dylan Quintana, Shashank Srivastava, and Madhur Tulsiani.
\newblock List decoding of lifted codes.
\newblock In {\em SODA}, 2020.

\bibitem[AJT19]{AlevJT19}
Vedat~Levi Alev, {Fernando Granha} Jeronimo, and Madhur Tulsiani.
\newblock Approximating constraint satisfaction problems on high dimensional expanders.
\newblock In {\em FOCS}, 2019.

\bibitem[AL20]{AlevL20}
Vedat~Levi Alev and Lap~Chi Lau.
\newblock Improved analysis of higher order random walks and applications.
\newblock In {\em STOC}, pages 1198--1211, 2020.

\bibitem[Ald83]{Aldous83}
David Aldous.
\newblock Random walks on finite groups and rapidly mixing {M}arkov chains.
\newblock In {\em S{\'e}minaire de Probabilit{\'e}s XVII 1981/82}, pages 243--297. Springer, 1983.

\bibitem[ALO20]{AnariLO20}
Nima Anari, Kuikui Liu, and Shayan {Oveis Gharan}.
\newblock Spectral independence in high-dimensional expanders and applications to the hardcore model.
\newblock {\em CoRR}, abs/2001.00303, 2020.

\bibitem[ALO{\etalchar{+}}21]{AnariLOVV21}
Nima Anari, Kuikui Liu, Shayan {Oveis Gharan}, Cynthia Vinzant, and Thuy-Duong Vuong.
\newblock Log-concave polynomials iv: approximate exchange, tight mixing times, and near-optimal sampling of forests.
\newblock In {\em Proceedings of the 53rd Annual ACM SIGACT Symposium on Theory of Computing}, pages 408--420, 2021.

\bibitem[ALOV19]{AnariLOV18}
Nima Anari, Kuikui Liu, Shayan {Oveis Gharan}, and Cynthia Vinzant.
\newblock Log-concave polynomials {II}: high-dimensional walks and an {FPRAS} for counting bases of a matroid.
\newblock In {\em {STOC}}, pages 1--12. ACM, 2019.

\bibitem[Ami91]{Amit91}
Yali Amit.
\newblock On rates of convergence of stochastic relaxation for gaussian and non-gaussian distributions.
\newblock {\em Journal of Multivariate Analysis}, 38(1):82--99, 1991.

\bibitem[AR24]{AlevR24}
Vedat~Levi Alev and Shravas Rao.
\newblock Expanderizing higher order random walks, 2024.

\bibitem[BCC{\etalchar{+}}22]{BlancaCPCPS22}
Antonio Blanca, Pietro Caputo, Zongchen Chen, Daniel Parisi, Daniel {\v{S}}tefankovi{\v{c}}, and Eric Vigoda.
\newblock On mixing of markov chains: Coupling, spectral independence, and entropy factorization.
\newblock In {\em Proceedings of the 2022 Annual ACM-SIAM Symposium on Discrete Algorithms (SODA)}, pages 3670--3692. SIAM, 2022.

\bibitem[BCP{\etalchar{+}}21]{BlancaCPSV21}
Antonio Blanca, Pietro Caputo, Daniel Parisi, Alistair Sinclair, and Eric Vigoda.
\newblock Entropy decay in the swendsen--wang dynamics on zd.
\newblock In {\em Proceedings of the 53rd Annual ACM SIGACT Symposium on Theory of Computing}, pages 1551--1564, 2021.

\bibitem[BD97]{BubleyD97}
Russ Bubley and Martin~E. Dyer.
\newblock Path coupling: {A} technique for proving rapid mixing in {M}arkov chains.
\newblock In {\em {FOCS}}, pages 223--231, 1997.

\bibitem[Bha13]{Bhatia2013}
Rajendra Bhatia.
\newblock {\em Matrix {A}nalysis}, volume 169.
\newblock Springer Science \& Business Media, 2013.

\bibitem[BHKL22a]{BafnaHKL22}
Mitali Bafna, Max Hopkins, Tali Kaufman, and Shachar Lovett.
\newblock High dimensional expanders: Eigenstripping, pseudorandomness, and unique games.
\newblock In {\em Proceedings of the 2022 Annual ACM-SIAM Symposium on Discrete Algorithms (SODA)}, pages 1069--1128. SIAM, 2022.

\bibitem[BHKL22b]{BafnaHKL22a}
Mitali Bafna, Max Hopkins, Tali Kaufman, and Shachar Lovett.
\newblock Hypercontractivity on high dimensional expanders.
\newblock In {\em Proceedings of the 54th Annual ACM SIGACT Symposium on Theory of Computing}, pages 185--194, 2022.

\bibitem[BLM13]{BoucheronLM13}
St{\'{e}}phane Boucheron, G{\'{a}}bor Lugosi, and Pascal Massart.
\newblock {\em Concentration Inequalities - {A} Nonasymptotic Theory of Independence}.
\newblock Oxford University Press, 2013.

\bibitem[CGM19]{CryanGM19}
Mary Cryan, Heng Guo, and Giorgos Mousa.
\newblock Modified log-{S}obolev inequalities for strongly log-concave distributions.
\newblock {\em CoRR}, abs/1903.06081, 2019.

\bibitem[CGSV20]{CGSV20}
Zongchen Chen, Andreas Galanis, Daniel Stefankovic, and Eric Vigoda.
\newblock Rapid mixing for colorings via spectral independence.
\newblock {\em CoRR}, abs/2007.08058, 2020.

\bibitem[Cha04]{Chafai04}
Djalil Chafa{\"\i}.
\newblock Entropies, convexity, and functional inequalities, on $\phi $-entropies and $\phi$-sobolev inequalities.
\newblock {\em Journal of Mathematics of Kyoto University}, 44(2):325--363, 2004.

\bibitem[CLV20]{ChenLV20}
Zongchen Chen, Kuikui Liu, and Eric Vigoda.
\newblock Rapid mixing of {G}lauber dynamics up to uniqueness via contraction.
\newblock {\em arXiv preprint arXiv:2004.09083}, 2020.

\bibitem[CLV21]{ChenLV21}
Zongchen Chen, Kuikui Liu, and Eric Vigoda.
\newblock Optimal mixing of glauber dynamics: Entropy factorization via high-dimensional expansion.
\newblock In {\em Proceedings of the 53rd Annual ACM SIGACT Symposium on Theory of Computing}, pages 1537--1550, 2021.

\bibitem[DD19]{DiksteinD19}
Yotam Dikstein and Irit Dinur.
\newblock Agreement testing theorems on layered set systems.
\newblock 2019.

\bibitem[DDFH18]{DiksteinDFH18}
Yotam Dikstein, Irit Dinur, Yuval Filmus, and Prahladh Harsha.
\newblock Boolean function analysis on high-dimensional expanders.
\newblock In {\em {APPROX/RANDOM}}, pages 38:1--38:20, 2018.

\bibitem[DGJ06]{DyerGJ06}
Martin Dyer, Leslie~Ann Goldberg, and Mark Jerrum.
\newblock Systematic scan for sampling colorings.
\newblock {\em The Annals of Applied Probability}, 16(1):185--230, 2006.

\bibitem[DGJ08]{DyerGJ08}
Martin Dyer, Leslie~Ann Goldberg, and Mark Jerrum.
\newblock Dobrushin conditions and systematic scan.
\newblock {\em Combinatorics, Probability and Computing}, 17(6):761--779, 2008.

\bibitem[DJ00]{DymaraJ00}
Jan Dymara and Tadeusz Januszkiewicz.
\newblock New kazhdan groups.
\newblock {\em Geometriae Dedicata}, 80:311--317, 2000.

\bibitem[DK17]{DinurK17}
Irit Dinur and Tali Kaufman.
\newblock High dimensional expanders imply agreement expanders.
\newblock In {\em {FOCS}}, pages 974--985, 2017.

\bibitem[DMZ26]{dalal2026ramanujan}
Rahul Dalal, Alberto M{\'\i}nguez, and Jiandi Zou.
\newblock Ramanujan complexes from unitary groups over number fields.
\newblock {\em arXiv preprint arXiv:2603.06156}, 2026.

\bibitem[Dob70]{Dobrushin70}
Roland~L Dobrushin.
\newblock Prescribing a system of random variables by conditional distributions.
\newblock {\em Theory of Probability \& Its Applications}, 15(3):458--486, 1970.

\bibitem[DR00]{DiaconisR00}
Persi Diaconis and Arun Ram.
\newblock Analysis of systematic scan metropolis algorithms using iwahori-hecke algebra techniques.
\newblock {\em Michigan Mathematical Journal}, 48(1):157--190, 2000.

\bibitem[EP22]{Evra2018RamanujancomplexesGolden}
Shai Evra and Ori Parzanchevski.
\newblock Ramanujan complexes and {G}olden {G}ates in ${PU} (3)$.
\newblock {\em Geometric and Functional Analysis}, 32:193--235, 2022.

\bibitem[FGW22]{FengGW22}
Weiming Feng, Heng Guo, and Jiaheng Wang.
\newblock Improved bounds for randomly colouring simple hypergraphs.
\newblock {\em arXiv preprint arXiv:2202.05554}, 2022.

\bibitem[FGYZ20]{FGYZ20}
Weiming Feng, Heng Guo, Yitong Yin, and Chihao Zhang.
\newblock Rapid mixing from spectral independence beyond the boolean domain.
\newblock {\em arXiv preprint arXiv:2007.08091}, 2020.

\bibitem[Fir16]{First16}
Uriya~A. First.
\newblock The ramanujan property for simplicial complexes, 2016.

\bibitem[Gar73]{Garland73}
Howard Garland.
\newblock p-adic curvature and the cohomology of discrete subgroups of p-adic groups.
\newblock {\em Annals of Mathematics}, pages 375--423, 1973.

\bibitem[GKZ18]{GuoKZ18}
Heng Guo, Kaan Kara, and Ce~Zhang.
\newblock Layerwise systematic scan: Deep boltzmann machines and beyond.
\newblock In {\em International Conference on Artificial Intelligence and Statistics}, pages 178--187. PMLR, 2018.

\bibitem[GLL22]{GurLL22}
Tom Gur, Noam Lifshitz, and Siqi Liu.
\newblock Hypercontractivity on high dimensional expanders.
\newblock In {\em Proceedings of the 54th Annual ACM SIGACT Symposium on Theory of Computing}, pages 176--184, 2022.

\bibitem[GM21]{GuoM21}
Heng Guo and Giorgos Mousa.
\newblock Local-to-global contraction in simplicial complexes, 2021.

\bibitem[GM24]{GaitondeM24}
Jason Gaitonde and Elchanan Mossel.
\newblock Comparison theorems for the mixing times of systematic and random scan dynamics.
\newblock {\em arXiv preprint arXiv:2410.11136}, 2024.

\bibitem[GRO22]{GrinbaumO22}
Zohar Grinbaum-Reizis and Izhar Oppenheim.
\newblock Curvature criterion for vanishing of group cohomology.
\newblock {\em Groups, Geometry, and Dynamics}, 16(3):843--862, 2022.

\bibitem[Hay06]{Hayes06}
Thomas~P Hayes.
\newblock A simple condition implying rapid mixing of single-site dynamics on spin systems.
\newblock In {\em 2006 47th Annual IEEE Symposium on Foundations of Computer Science (FOCS'06)}, pages 39--46. IEEE, 2006.

\bibitem[HDSMR16]{HeEtAl16}
Bryan~D He, Christopher~M De~Sa, Ioannis Mitliagkas, and Christopher R{ \'e}.
\newblock Scan order in gibbs sampling: Models in which it matters and bounds on how much.
\newblock {\em Advances in neural information processing systems}, 29, 2016.

\bibitem[HJ12]{HornJ12}
Roger~A Horn and Charles~R Johnson.
\newblock {\em Matrix {A}nalysis}.
\newblock Cambridge university press, 2012.

\bibitem[HLW06]{HooryLW06}
Shlomo Hoory, Nathan Linial, and Avi Wigderson.
\newblock Expander graphs and their applications.
\newblock {\em Bulletin of the American Mathematical Society}, 43(4):439--561, 2006.

\bibitem[JST21]{JeronimoST21}
Fernando~Granha Jeronimo, Shashank Srivastava, and Madhur Tulsiani.
\newblock Near-linear time decoding of ta-shma’s codes via splittable regularity.
\newblock In {\em Proceedings of the 53rd Annual ACM SIGACT Symposium on Theory of Computing}, pages 1527--1536, 2021.

\bibitem[Kas11]{Kassabov11}
Martin Kassabov.
\newblock Subspace arrangements and property t.
\newblock {\em Groups, Geometry, and Dynamics}, 5(2):445--477, 2011.

\bibitem[Kaz67]{Kazhdan67}
David~A Kazhdan.
\newblock On the connection of the dual space of a group with the structure of its closed subgroups.
\newblock {\em Funkcional. Anal. i Prilozen}, 1(1):71--74, 1967.

\bibitem[KM17]{KaufmanM17}
Tali Kaufman and David Mass.
\newblock High dimensional random walks and colorful expansion.
\newblock In {\em {ITCS}}, pages 4:1--4:27, 2017.

\bibitem[KO18a]{KaufmanO18b}
Tali Kaufman and Izhar Oppenheim.
\newblock Construction of new local spectral high dimensional expanders.
\newblock In {\em {STOC}}, pages 773--786, 2018.

\bibitem[KO18b]{KaufmanO18}
Tali Kaufman and Izhar Oppenheim.
\newblock High order random walks: Beyond spectral gap.
\newblock In {\em {APPROX/RANDOM}}, pages 47:1--47:17, 2018.

\bibitem[Li04]{Li2004ramanujan}
W.C.W. Li.
\newblock Ramanujan hypergraphs.
\newblock {\em Geometric and Functional Analysis}, 14(2):380--399, 2004.

\bibitem[Liu21]{Liu21}
Kuikui Liu.
\newblock From coupling to spectral independence and blackbox comparison with the down-up walk.
\newblock {\em arXiv preprint arXiv:2103.11609}, 2021.

\bibitem[LLP20]{LubetzyLP20}
Eyal Lubetzky, Alexander Lubotzky, and Ori Parzanchevski.
\newblock Random walks on ramanujan complexes and digraphs.
\newblock {\em Journal of the European Mathematical Society}, 22(11):3441--3466, 2020.

\bibitem[LPW09]{LevinPW09}
David~A Levin, Yuval Peres, and Elizabeth~L Wilmer.
\newblock Markov chains and mixing times.
\newblock {\em American Mathematical Soc., Providence}, 2009.

\bibitem[LSV05a]{LubotzkySV05}
Alexander Lubotzky, Beth Samuels, and Uzi Vishne.
\newblock Explicit constructions of {R}amanujan complexes of type.
\newblock {\em Eur. J. Comb.}, 26(6):965--993, 2005.

\bibitem[LSV05b]{LubotzkySV05b}
Alexander Lubotzky, Beth Samuels, and Uzi Vishne.
\newblock Explicit constructions of ramanujan complexes of type ad.
\newblock {\em European Journal of Combinatorics}, 26(6):965--993, 2005.

\bibitem[MT05]{MontenegroT05}
Ravi Montenegro and Prasad Tetali.
\newblock Mathematical aspects of mixing times in {M}arkov chains.
\newblock {\em Foundations and Trends in Theoretical Computer Science}, 1(3), 2005.

\bibitem[Opp17]{Oppenheim17}
Izhar Oppenheim.
\newblock Averaged projections, angles between groups and strengthening of banach property (t).
\newblock {\em Mathematische Annalen}, 367:623--666, 2017.

\bibitem[Opp18a]{Oppenheim18c}
Izhar Oppenheim.
\newblock Angle criteria for uniform convergence of averaged projections and cyclic or random products of projections.
\newblock {\em Israel Journal of Mathematics}, 223:343--362, 2018.

\bibitem[Opp18b]{Oppenheim18}
Izhar Oppenheim.
\newblock Local spectral expansion approach to high dimensional expanders part {I:} {D}escent of spectral gaps.
\newblock {\em Discrete {\&} Computational Geometry}, 59(2):293--330, 2018.

\bibitem[Opp18c]{Oppenheim18b}
Izhar Oppenheim.
\newblock Local spectral expansion approach to high dimensional expanders part ii: Mixing and geometrical overlapping, 2018.

\bibitem[Opp21]{Oppenheim21}
Izhar Oppenheim.
\newblock Banach zuk's criterion for partite complexes with application to random groups.
\newblock {\em arXiv preprint arXiv:2112.02929}, 2021.

\bibitem[Opp23]{Oppenheim23}
Izhar Oppenheim.
\newblock Garland’s method with banach coefficients.
\newblock {\em Analysis \& PDE}, 16(3):861--890, 2023.

\bibitem[Par20]{Parzan20}
Ori Parzanchevski.
\newblock Ramanujan graphs and digraphs.
\newblock {\em Anal. Geom. Graphs Manifolds}, 461:344, 2020.

\bibitem[Rag16]{Raginsky16}
Maxim Raginsky.
\newblock Strong data processing inequalities and $\phi$-sobolev inequalities for discrete channels.
\newblock {\em IEEE Transactions on Information Theory}, 62(6):3355--3389, 2016.

\bibitem[RR15]{RobertsR15}
Gareth~O Roberts and Jeffrey~S Rosenthal.
\newblock Surprising convergence properties of some simple gibbs samplers under various scans.
\newblock {\em International Journal of Statistics and Probability}, 5(1):51--60, 2015.

\bibitem[SC97]{Saloff-Coste97}
Laurent Saloff-Coste.
\newblock Lectures on finite {M}arkov chains.
\newblock In {\em Lectures on probability theory and statistics}, pages 301--413. Springer, 1997.

\bibitem[SSW77]{SmithSW77}
Kennan~T Smith, Donald~C Solmon, and Sheldon~L Wagner.
\newblock Practical and mathematical aspects of the problem of reconstructing objects from radiographs.
\newblock 1977.

\bibitem[SV23]{StefankovicV23}
Daniel Stefankovic and Eric Vigoda.
\newblock Lecture notes on spectral independence and bases of a matroid: Local-to-global and trickle-down from a markov chain perspective, 2023.

\end{thebibliography}

\appendix
\section{$\ee$-spectral independence and $\ee$-product distributions}
\begin{xmpl}\label{xmpl:ce}
    For every integer $k > 2$ and every $n \ge 2$, there exists an
    $n$-partite simplicial complex $(X, \pi)$ such
    that $\pi$ is $(1/k, \ldots, 1/k)$-spectrally independent and $(1/k,
    \ldots, 1/k)$-product.
\end{xmpl}

\begin{proof}
    Let $k$ be given, we define $G := G_{n} = ([n],
    \set{\set{1,2}})$, i.e.~$G$ is the graph consisting of a single edge
    between 1 and 2 and $(n-2)$ isolated vertices. We let $(X, \pi)$ be the
    simplicial complex of proper $(k+1)$-colorings of $G$ equipped
    with the uniform
    measure $\pi$.

    We have $X^{(0)} = \set*{ (i, c) : i \in [n], c \in [k+1]}$, where $(i,
    c)$ represents the vertex $i$ being colored with the color $c$. We have,
    \[ \Inf_\varnothing( (i, c); (j, d) ) = 0~~\textrm{whenever}~~\set{i, j} \ne
    \set{1,2}.\]
    In particular, restricting $\Inf_\varnothing$ to the blocks of
    vertex-colors pairs $(i, c)$ for $i \in \set{1, 2}$ we have
    \[ \Inf_{\varnothing}|_{i \in \set{1, 2}} =
        \begin{pmatrix}
            0 & \parens*{\frac{1}{k} - \frac{1}{k+1}} \cdot \one\one^\top -
            \frac{1}{k}\cdot \Ide\\
            \parens*{\frac{1}{k} - \frac{1}{k+1}} \cdot \one\one^\top    -
            \frac{1}{k}\cdot \Ide & 0
    \end{pmatrix}\]
    It is simple to observe now, $\lambda_{\max}(\Inf_\varnothing) = 1/k$ and
    also we observe,
    \[ \Cee_\varnothing^{1 \to 2}(c_1, c_2) = \frac{\one[c_1 \ne
    c_2]}{k}~~\textrm{for all}~~c_1, c_2 \in [k+1]\]
    since this is equivalent to the random walk matrix of the complete graph,
    we have $\sigma_2(\Cee_\varnothing^{1 \to 2}) = 1lk$.
    Thus, $\pi$ is both $(1/k)$-spectrally independent and $(1/k)$-product --
    since whenever $\set{i, j} \ne \set{1, 2}$ we have
    $\sigma_2(\Cee_\varnothing^{i \to j}) = 0 $. The
    proof for the remaining levels follows analogously.
\end{proof}

\end{document}